\title{Quantifying Opacity\thanks{Part of this work has been published in the proceedings of \textsc{Qest}'10~\cite{berard10}.}}
\titlerunning{Quantifying Opacity}
\author{B. B\'erard\inst{1} \and J. Mullins\inst{2} and M. Sassolas\inst{3}}
\authorrunning{B. B\'erard, J. Mullins and M. Sassolas}
\institute{ Universit\'e Pierre \& Marie Curie, LIP6/MoVe, CNRS UMR 7606, Paris, France 
\and  \'Ecole Polytechnique de Montr\'eal, Dept. of Comp. \& Soft. Eng., Montreal (Quebec), Canada 
\and D\'epartement d'informatique, Universit\'e Libre de Bruxelles, Bruxelles, Belgique
}
\date{December 21, 2010; Revised July 3, 2012}
\colorlet{auxdraw}{blue!25!black!70}
\colorlet{auxfill}{auxdraw!30}
\colorlet{auxtext}{black}
\tikzstyle{aux}=[draw=auxdraw,fill=auxfill,text=auxtext]
\tikzstyle{phee}=[draw=auxdraw,fill=auxfill]
\tikzstyle{every picture}+=[initial text=,>=stealth']
\tikzstyle{every state}=[aux]
\begin{document}

\maketitle

\begin{abstract}
  Opacity is a general language-theoretic framework in which several
  security properties of a system can be expressed. Its parameters are
  a predicate, given as a subset of runs of the system, and an
  observation function, from the set of runs into a set of
  observables. The predicate describes secret information in the
  system and, in the possibilistic setting, it is opaque if its
  membership cannot be inferred from observation.

  In this paper, we propose several notions of quantitative opacity
  for probabilistic systems, where the predicate and the observation
  function are seen as random variables. Our aim is to measure (i) the
  probability of opacity leakage relative to these random variables
  and (ii) the level of uncertainty about membership of the predicate
  inferred from observation. We show how these measures extend
  possibilistic opacity, we give algorithms to compute them for
  regular secrets and observations, and we apply these computations on
  several classical examples. We finally partially investigate the
  non-deterministic setting.
\end{abstract}

\section{Introduction}

\subsubsection*{Motivations.}
Opacity~\cite{mazare05} is a very general framework where a wide range
of security properties can be specified, for a system interacting with
a passive attacker. This includes for instance anonymity or
non-interference~\cite{goguen82}, the basic version of which states
that high level actions cannot be detected by low level
observations. Non-interference alone cannot capture every type of
information flow properties. Indeed, it expresses the complete absence
of information flow yet many information flow properties, like
anonymity, permits some information flow while peculiar piece of
information is required to be kept secret.  The notion of opacity was
introduced with the aim to provide a uniform description for security
properties e.g.  non-interference, noninference, various notions of
anonymity, key compromise and refresh, downgrading,
etc. \cite{bryans08}. Ensuring opacity by control was further studied
in~\cite{dubreil10}.

The general idea behind opacity is that a passive attacker should not
have worthwhile information, even though it can observe the system
from the outside. The approach, as many existing information
flow-theoretic approaches, is possibilistic.  We mean by this that non
determinism is used as a feature to model the random mechanism
generation for all possible system behaviors.  As such, opacity is not
accurate enough to take into account two orthogonal aspects of
security properties both regarding evaluation of the information
gained by a passive attacker.

The first aspect concerns the quantification of security properties.
If executions leaking information are negligible with respect to the
rest of executions, the overall security might not be compromised.
For example if an error may leak information, but appears only in
$1\%$ of cases, the program could still be considered safe. The
definitions of opacity~\cite{alur06,bryans08} capture the existence of at least one perfect leak, but do not
grasp such a measure.

The other aspect regards the category of security properties a system
has to assume when interacting with an attacker able to make
inferences from experiments on the
base of statistical analysis.  For example, if every time the system
goes \emph{bip}, there is $99\%$ chances that action $a$ has been
carried out by the server, then every \emph{bip} can be guessed to
have resulted from an $a$.  Since more and more security protocols
make use of randomization to reach some security
objectives~\cite{chaum88,reiter98}, it becomes important to extend
specification frameworks in order to cope with it.

\subsubsection*{Contributions.}
In this paper we investigate several ways of extending opacity to a
purely probabilistic framework.  Opacity can be defined either as the
capacity for an external observer to deduce that a predicate was true
(asymmetrical opacity) or whether a predicate is true or false
(symmetrical opacity).  Both notions can model relevant security
properties, hence deserve to be extended.  On the other hand, two
directions can be taken towards the quantification of opacity.  The
first one, which we call \emph{liberal}, evaluates the degree of
non-opacity of a system: how big is the security hole?  It aims at
assessing the probability for the system to yield \emph{perfect}
information.  The second direction, which is called
\emph{restrictive}, evaluates how opaque the system is: how robust is
the security?  The goal here is to measure how reliable is the
information gained through observation.  This yields up to four
notions of quantitative opacity, displayed in
\tablename~\ref{tab:allopacities}, which are formally defined in this
paper.  The choice made when defining these measures was that a value
$0$ should be meaningful for opacity in the possibilistic sense.  As a
result, liberal measures are $0$ when the system is opaque and
restrictive ones are $0$ when the system is not.
\begin{table}[b]
\centering
\begin{tabular}{c|c|c}
& Asymmetric & Symmetric \\\hline
Liberal {\scriptsize (Security hole)} & \lponame (\lpo) & \lpsoname (\lpso)\\\hline
Restrictive {\scriptsize (Robustness)} & \hponame (\hpo) & \vponame (\vpo)
\end{tabular}
\caption{The four probabilistic opacity measures.}
\label{tab:allopacities}
\end{table}

Moreover, like opacity itself, all these measures can be instantiated
into several probabilistic security properties such as probabilistic
non-interference and anonymity.  We also show how to compute these
values in some regular cases and apply the method to the dining
cryptographers problem and the crowd protocols, re-confirming in
passing the correctness result of Reiter and Rubin~\cite{reiter98}.

Although the measures are defined in systems without nondeterminism,
they can be extended to the case of systems scheduled by an adversary.
We show that non-memoryless schedulers are requested in order to reach
optimum opacity measures.

\subsubsection*{Related Work.}
Quantitative measures for security properties were first advocated in
\cite{millen87} and~\cite{wittbold90}. In~\cite{millen87}, Millen makes
an important step by relating the non-interference property with the
notion of mutual information from information theory in the context of
a system modeled by a deterministic state machine.  He proves that the
system satisfies the non-interference property if and only if the
mutual information between the high-level input random variable and
the output random variable is zero. He also proposes mutual
information as a measure for information flow by showing how
information flow can be seen as a noisy probabilistic channel, but he
does not show how to compute this measure. In~\cite{wittbold90}
Wittbold and Johnson introduce {\em nondeducibility on strategies} in
the context of a non-deterministic state machine. A system satisfies nondeducibility on strategies
if the observer cannot deduce information from the observation by any
collusion with a secret user and using any adaptive strategies. 
They
observe that if such a system is run multiple times with feedback
between runs, information can be leaked by coding schemes across
multiple runs. In this case, they show that a discrete memoryless
channel can be built by associating a distribution with the noise
process. 
From then on, numerous studies were
devoted to the computation of (covert) channel capacity in various
cases (see e.g.~\cite{mantel2009}) or more generally information leakage.

In~\cite{smith09}, several measures of information leakage extending
these seminal works for deterministic or probabilistic programs with
probabilistic input are discussed. These measures quantify the
information concerning the input gained by a passive attacker
observing the output. Exhibiting programs for which the value of
entropy is not meaningful, Smith proposes to consider instead the
notions of vulnerability and min-entropy to take in account the fact
that some execution could leak a sufficiently large amount of
information to allow the environment to guess the remaining secret. As
discussed in Section~\ref{sec:comparison}, probabilistic opacity takes
this in account.

In~\cite{chatzikokolakis08}, in order to quantify anonymity, the
authors propose to model the system (then called {\em Information
  Hiding System}) as a noisy channel in the sense of Information
Theory: The secret information is modeled by the inputs, the
observable information is modeled by the outputs and the two set are
related by a conditional probability matrix.  In this context,
probabilistic information leakage is very naturally specified in terms
of mutual information and capacity. A whole hierarchy of probabilistic notions of
anonymity have been defined.
The approach  was completed
in~\cite{andres10} where anonymity is computed using regular
expressions.  More recently, in~\cite{alvim10}, the authors consider
Interactive Information Hiding Systems that can be viewed as channels
with memory and feedback.

In~\cite{boreale11a}, the authors analyze the asymptotic behaviour of attacker's error probability and information leakage in  Information Hiding Systems in the context of an attacker having the capabilities to make exactly one guess after observing $n$ independent  executions of the system while the secret information remains invariant.  Two cases are studied: the case in which each execution gives rise to a single observation and the case in which each state of an execution  gives rise to an observation in
the context of \emph{Hidden Markov Models}.  The relation of these sophisticated models of attacker with our attacker model is still to clarify.
 Similar models were also
studied in~\cite{mciver10}, where the authors define an ordering w.r.t.
probabilistic non-interference.

For systems modeled by process algebras, pioneering work was presented
in~\cite{lowe04,aldini04}, with channel capacity defined by counting
behaviors in discrete time (non-probabilistic) CSP~\cite{lowe04}, or
various probabilistic extensions of noninterference~\cite{aldini04} in
a generative-reactive process algebra.  Subsequent studies in this
area by~\cite{aldini09,boreale09,boreale10} also provide quantitative
measures of information leak, relating these measures with
noninterference and secrecy.  In~\cite{aldini09}, the authors
introduce various notions of noninterference in a Markovian process
calculus extended with prioritized/probabilistic zero duration actions
and untimed actions. In~\cite{boreale09} the author introduces two  notions  of information leakage in the (non-probabilistic)
$\pi$-calculus differing essentially in the assumptions made on the power of the attacker. The first one, called {\em absolute leakage}, corresponds to the average amount of information that was leaked to the attacker by the program in the context of an attacker with unlimited computational resources is defined in terms of conditional mutual information and follows the earlier results of Millen~\cite{millen87}. The second notion,  called {\em leakage rate}, corresponds to  the maximal number of bits of information that could be obtained per experiment  in the context in which the attacker can only perform a fixed number of tries, each yielding a binary outcome representing success or failure. Boreale also studies the relation between both notions of leakage and proves that they are consistent.
The author  also investigates compositionality
of leakage.  Boreale et al.~\cite{boreale10} propose a very general framework for reasoning about information leakage in  a sequential process calculus over a semiring   with some appealing applications to information leakage analysis when instantiating  and interpreting the semiring. It appears to be a promising scheme for specifying and analysing regular quantitative information flow like we do in Section~\ref{sec:computing}.

\medskip
Although the literature on quantifying information leakage or
channel capacity is dense, few works actually tried to extend general
opacity to a probabilistic setting. A notion of probabilistic opacity
is defined in~\cite{lakhnech05}, but restricted to properties whose
satisfaction depends only on the initial state of the run.  The
opacity there corresponds to the probability for an observer to guess
from the observation whether the predicate holds for the run.  In that
sense our restrictive opacity (Section~\ref{sec:restrictive}) is close
to that notion.  However, the definition of~\cite{lakhnech05} lacks
clear ties with the possibilistic notion of opacity. Probabilistic
opacity is somewhat related to the notion of {\em view} presented
in~\cite{boreale11b} as authors include, like we do, a predicate to
their probabilistic model and observation function but probabilistic
opacity can hardly be compared with view. Indeed, on one hand,
although their setting is different (they work on Information Hiding
Systems extended with a view), our predicates over runs could be
viewed as a generalization of the predicates over a finite set of
states (properties). On the other hand, a view in their setting is an
arbitrary partition of the state space, whereas we partition the runs
into only two equivalence classes (corresponding to \emph{true} and
\emph{false}).

\subsubsection*{Organization of the paper.}
In Section~\ref{sec:background}, we recall the definitions of opacity
and the probabilistic framework used throughout the paper.
Section~\ref{sec:liberal} and~\ref{sec:restrictive} present
respectively the liberal and the restrictive version of probabilistic
opacity, both for the asymmetrical and symmetrical case.  We present
in Section~\ref{sec:computing} how to compute these measures
automatically if the predicate and observations are regular.
Section~\ref{sec:comparison} compares the different measures and what
they allow to detect about the security of the system, through
abstract examples and a case study of the Crowds protocol.  In
Section~\ref{sec:schedulers}, we present the framework of
probabilistic systems dealing with nondeterminism, and open problems
that arise in this setting.

\section{Preliminaries}\label{sec:background}

In this section, we recall the notions of opacity, entropy,
and probabilistic automata.

\subsection{Possibilistic opacity}\label{subsec:bg-opacity}

The original definition of opacity was given in~\cite{bryans08} for
transition systems.

Recall that a transition system is a tuple $\A=\langle \Sigma, Q,
\Delta, I\rangle$ where $\Sigma$ is a set of actions, $Q$ is a set of
states, $\Delta \subseteq Q \times \Sigma \times Q$ is a set of
transitions and $I \subseteq Q$ is a subset of initial states.  A
\emph{run} in $\A$ is a finite sequence of transitions written as:
\(\rho = q_0 \xrightarrow{a_1} q_1 \xrightarrow{a_2} q_2 \, \cdots \,
\xrightarrow{a_n} q_n\). For such a run, $\fst(\rho)$
(resp. $\lst(\rho)$) denotes $q_0$ (resp. $q_n$). We will also write
$\rho \cdot \rho'$ for the run obtained by concatenating runs $\rho$
and $\rho'$ whenever $\lst(\rho) = \fst(\rho')$.  The set of runs
starting in state $q$ is denoted by $Run_q(\A)$ and $Run(\A)$ denotes
the set of runs starting from some initial state: $Run(\A) =
\bigcup_{q \in I}Run_q(\A)$.

Opacity qualifies a predicate $\varphi$, given as a subset of
$Run(\A)$ (or equivalently as its characteristic function ${\bf
  1}_\varphi$), with respect to an \emph{observation function} ${\cal
  O}$ from $Run(\A)$ onto a (possibly infinite) set $Obs$ of
\emph{observables}.  Two runs $\rho$ and $\rho'$ are equivalent
w.r.t. ${\cal O}$ if they produce the same observable: ${\cal O}(\rho)
= {\cal O}(\rho')$.  The set ${\cal O }^{-1} (o)$ is called an
\emph{observation class}.  We sometimes write $[\rho]_{\cal O}$ for
${\cal O }^{-1} (\mathcal{O}(\rho))$.

A predicate $\varphi$ is opaque on $\A$ for ${\cal O}$ if for
every run $\rho$ satisfying $\varphi$, there is a run $\rho'$ not
satisfying $\varphi$ equivalent to $\rho$.

\begin{definition}[Opacity]\label{def:opacity}
  Let $\A$ be a transition system and ${\cal O}: Run(\A) \rightarrow
  Obs$ a surjective function called observation. A predicate $\varphi
  \subseteq Run(\A)$ is \emph{opaque} on $\A$ for ${\cal O}$ if, for
  any $o \in Obs $, the following holds:
\[{\cal O}^{-1}(o) \not \subseteq \varphi.\]
\end{definition}

However, detecting whether an event \emph{did not} occur can give as
much information as the detection that the same event \emph{did}
occur.  In addition, as argued in~\cite{alur06}, the asymmetry of this
definition makes it impossible
to use with refinement: opacity would not be ensured in a system
derived from a secure one in a refinement-driven engineering process.
More precisely, if $\A'$ refines $\A$ and a property $\varphi$ is
opaque on $\A$ (w.r.t $\mathcal{O}$), $\varphi$ is not guaranteed to
be opaque on $\A'$ (w.r.t $\mathcal{O}$).

Hence we use the symmetric notion of opacity,
where a predicate is symmetrically opaque if it is opaque as well as
its negation.  More precisely:

\begin{definition}[Symmetrical opacity]\label{def:symopacity}
A predicate $\varphi \subseteq Run(\A)$ is \emph{symmetrically opaque} on system $\A$ for observation function ${\cal O}$ if, for any $o \in Obs $, the following holds:
\[{\cal O}^{-1}(o) \not \subseteq \varphi \mbox{ and } {\cal O}^{-1}(o) \not \subseteq \overline\varphi.\]
\end{definition}

The symmetrical opacity is a stronger security requirement.  Security
goals can be expressed as either symmetrical or asymmetrical opacity,
depending on the property at stake.

For example non-interference and anonymity can be expressed by opacity
properties.  Non-interference states that an observer cannot know
whether an action $h$ of high-level accreditation occurred only by
looking at the actions with low-level of accreditation in the set $L$.
So non-interference is equivalent to the opacity of predicate
$\varphi_{NI}$, which is true when $h$ occurred in the run, with
respect to the observation function $\mathcal{O}_L$ that projects the
trace of a run onto the letters of $L$; see
Section~\ref{ex:niwithopacity} for a full example.  We refer
to~\cite{bryans08} and~\cite{lin11} for other examples of properties
using opacity.

When the predicate breaks the symmetry of a model, the asymmetric
definition is usually more suited.  Symmetrical opacity is however
used when knowing $\varphi$ or $\overline\varphi$ is equivalent from a
security point of view. For
example, a noisy channel with binary input can be seen as a system
$\A$ on which the input is the truth value of $\varphi$ and the output
is the observation $o \in Obs$.  If $\varphi$ is symmetrically opaque
on $\A$ with respect to $\mathcal{O}$, then this channel is not
perfect: there would always be a possibility of erroneous
transmission.  The ties between channels and probabilistic transition
systems are studied in~\cite{andres10} (see discussion in
Section~\ref{subsec:rpso}).

\subsection{Probabilities and information theory}\label{subsec:informationbackground}

Recall that, for a countable set $\Omega$, a \emph{discrete distribution} (or
  \emph{distribution} for short) is a mapping $\mu :
  \Omega \rightarrow [0, 1]$ such that $\sum_{\omega \in \Omega}
  \mu(\omega) = 1$. For any subset $E$ of $\Omega$,
  $\mu(E)=\sum_{\omega \in E} \mu(\omega)$. The set of all discrete
  distributions on $\Omega$ is denoted by $\mathcal{D}(\Omega)$.
  A \emph{discrete random variable} with values in a set $\Gamma$ is a mapping $Z: \Omega \rightarrow \Gamma$ where $[Z=z]$ denotes
the event $\{\omega \in \Omega \mid Z(\omega) = z\}$.
  
The \emph{entropy} of
$Z$ is a measure of the uncertainty or dually, information about $Z$,
defined by the expected value of $\log (\mu(Z))$:
\[H(Z) = -\sum_{z} {\mu(Z=z) \cdot \log(\mu(Z=z))}\]  where $\log$ is the
base 2 logarithm.

For two random variables $Z$ and $Z'$ on $\Omega$, the
\emph{conditional entropy} of $Z $ \emph{given the event} $[Z'=z']$
such that $\mu(Z'=z') \not=0$ is defined by:
\[H(Z|Z'=z') = -\sum_{z}  \left(\mu(Z=z|Z'=z')\cdot  \log(\mu(Z=z|Z'=z'))\right)\]
where $\mu(Z=z|Z'=z') =  \frac{\mu(Z=z, Z'=z')}{\mu(Z'=z')}$.
\medskip

The \emph{conditional entropy} of $Z$ \emph{given the random variable}
$Z'$  can be interpreted as the average entropy of $Z$ that remains
after the observation of $Z'$. It is defined by:
\[ H(Z|Z') = \sum_{z'} \mu(Z'=z') \cdot H(Z|Z'=z') \]

The \emph{vulnerability} of a random variable $Z$, defined by $V(Z) =
\max_{z} \mu(Z=z)$ gives the probability of the likeliest event of a
random variable.  Vulnerability evaluates the probability of a correct
guess in one attempt and can also be used as a measure of information
by defining \emph{min-entropy} and \emph{conditional min-entropy} (see
discussions in~\cite{smith09,andres10}).

\subsection{Probabilistic models}
\label{subsec:bg-probautomata}

In this work, systems are modeled using probabilistic automata
behaving as finite automata where non-deterministic choices for the
next action and state or termination are randomized: this is why they
are called ``\emph{fully} probabilistic''. We follow the model
definition of~\cite{segala95}, which advocates for the use of this
special termination action (denoted here by $\surd$ instead of
$\delta$ there). However, the difference lies in the model semantics:
we consider only finite runs, which involves a modified definition for
the (discrete) probability on the set of runs. Extensions to the
non-deterministic setting are discussed in
Section~\ref{sec:schedulers}.

Recall that a finite automaton (FA) is a tuple $\A = \langle \Sigma,
Q, \Delta, I, F\rangle$ where $\langle \Sigma, Q, \Delta, I \rangle$
is a finite transition system and $F \subseteq Q$ is a subset of final
states.  The automaton is deterministic if $I$ is a singleton and for
all $q \in Q$ and $a \in \Sigma$, the set $\{q' \mid (q,a,q') \in
\Delta\}$ is a singleton. Runs in $\A$, $Run_q(\A)$ and $Run(\A)$
are defined like in a transition system. A run of an FA is
\emph{accepting} if it ends in a state of $F$. The \emph{trace} of a
run \(\rho = q_0 \xrightarrow{a_1} q_1 \cdots \xrightarrow{a_n} q_n\)
is the word $\trace(\rho) = a_1 \cdots a_n \in \Sigma^*$.  The
\emph{language} of $\A$, written $\lang(\A)$, is the set of traces
of accepting runs starting in some initial state.

Replacing in a FA non-deterministic choices by choices based on a
discrete distribution and considering only finite runs result in a \emph{fully probabilistic
  finite automaton} (FPFA). Consistently with the standard notion of
substochastic matrices, we also consider a more general class of
automata, \emph{substochastic automata} (SA), which allow us
to describe subsets of behaviors
from FPFAs, see \figurename~\ref{fig:exfpaandsa} for examples. In both
models, no non-determinism remains, thus the system is to be
considered as autonomous: its behaviors do not depend on an exterior
probabilistic agent acting as a scheduler for non-deterministic
choices.

\begin{definition}[Substochastic automaton]
  Let $\surd$ be a new symbol representing a termination action. A
  \emph{substochastic automaton} (SA) is a tuple $\langle \Sigma, Q,
  \Delta, q_0 \rangle$ where $\Sigma$ is a finite set of actions, $Q$
  is a finite set of states, with $q_0 \in Q$ the initial state and
  $\Delta: Q \rightarrow ((\Sigma \times Q) \uplus \{\surd\}
  \rightarrow [0,1])$ is a mapping such that for any $q \in
  Q$, \[\sum_{x \in (\Sigma \times Q) \uplus \{\surd\}} \Delta(q)(x)
  \leq 1\] $\Delta$ defines substochastically the action and successor
  from the current state, or the termination action $\surd$.
\end{definition}
In SA, we write $q \rightarrow \mu$ for $ \Delta(q) = \mu$
and $q \xrightarrow{a} r$ whenever $q \rightarrow \mu$ and $\mu(a, r)
>0$. We also write $q \cdot \surd$ whenever $q \rightarrow \mu$ and
$\mu(\surd) >0$. In the latter case, $q$ is said to be a \emph{final
  state}.

\begin{definition}[Fully probabilistic finite automaton]
A \emph{fully probabilistic automaton} (FPFA) is a particular case of 
SA where for all $q \in Q$, $\Delta(q)= \mu$ is a distribution in
$\probset((\Sigma \times Q) \uplus \{\surd\})$ \textit{i.e.}
\[\sum_{x \in (\Sigma \times Q) \uplus \{\surd\}} \Delta(q)(x) =
1\] and for any state $q \in Q$ there exists a path (with non-zero
probability) from $q$ to some final state.
\end{definition}
Note that we only target finite runs and therefore we consider a
restricted case, where any infinite path has probability $0$.

Since FPFA is a subclass of SA, we overload
the metavariable $\A$ for both SA and FPFA.
The notation above allows to define a run for an SA like in a
transition system as a finite sequence of transitions written \(\rho =
q_0 \xrightarrow{a_1} q_1 \xrightarrow{a_2} q_2 \, \cdots
\xrightarrow{a_n} q_n\). The sets $Run_q(\A)$ and $Run(\A)$ are
defined like in a transition system.  A \emph{complete run} is a
(finite) sequence denoted by $\rho\cdot\surd$ where $\rho$ is a run
and $\Delta(\lst(\rho))(\surd) > 0$. The set $CRun(\A)$ denotes the
set of complete runs starting from the initial state. In this work, we
consider only such complete runs.

The \emph{trace} of a run for an SA $\A$ is defined like in finite
automata.  The \emph{language} of a substochastic automaton $\A$,
written $\lang(\A)$, is the set of traces of complete runs starting
in the initial state.

For an SA $\A$, a mapping $\prob_\A$ into $[0,1]$ can be defined
inductively on the set of complete runs by:
\[
\prob_\A(q \surd) = \mu(\surd) \qquad \mbox{ and } \qquad 
\prob_\A(q\xrightarrow{a} \rho)  =  \mu(a, r) \cdot \prob_\A(\rho)
\]
where $q \rightarrow \mu$ and $\fst(\rho) = r$.

The mapping $\prob_\A$ is then a discrete distribution on
$CRun(\A)$. Indeed, the $\surd$ action can be
seen as a transition label towards a new sink state $q_{\surd}$. Then,
abstracting from the labels yields a finite Markov chain, where
$q_{\surd}$ is the only absorbing state and the coefficients of the
transition matrix are $M_{q,q'}=\sum_{a \in
  \Sigma}\Delta(q)(a,q')$. The probability for a complete run to have
length $n$ is then the probability $p_n$ to reach $q_{\surd}$ in
exactly $n$ steps. Therefore, the probability of all finite complete
runs is $\prob(CRun(\A))=\sum_{n} p_n$ and a classical
result~\cite{KS76} on absorbing chains ensures that this probability
is equal to $1$.

\medskip
Since the probability space is not generated by
(prefix-closed) cones, this definition does not yield the same
probability measure as the one from~\cite{segala95}. Since opacity
properties are not necessarily prefix-closed, this definition is
consistent with our approach.

When $\A$ is clear from the context, $\prob_\A$ will simply be
written $\prob$.

Since $\prob_\A$ is a (sub-)probability on $CRun(\A)$, for any
predicate $\varphi \subseteq CRun(\A)$, we have $\prob(\varphi) =
\sum_{\rho \in \varphi} \prob(\rho)$.  The measure is extended to
languages $K \subseteq \lang(\A)$ by \(\prob(K) =
\prob\left(\trace^{-1}(K)\right) = \sum_{\trace(\rho) \in K}
\prob(\rho)\).

In the examples of \figurename~\ref{fig:exfpaandsa}, restricting the complete
runs of $\mathcal{A}_1$ to those satisfying $\varphi = \{\rho \mid
\trace(\rho) \in a^*\}$ yields the SA $\mathcal{A}_2$, and
$\prob_{\mathcal{A}_1}(\varphi) =
\prob_{\mathcal{A}_2}(CRun(\mathcal{A}_2)) = \frac12$.

\medskip

A non probabilistic version of any SA is obtained by forgetting any
information about probabilities.
\begin{definition}
  Let $\A = \langle \Sigma, Q, \Delta, q_0\rangle$ be an SA.  The
  (non-deterministic) finite automaton $unProb(\A) = \langle \Sigma,
  Q, \Delta', q_0, F\rangle$ is defined by:
\begin{itemize}
\item $\Delta' = \{(q,a,r) \in Q \times \Sigma \times Q \mid q
  \rightarrow \mu, \ \mu(a,r)>0\}$,
\item $F = \{q \in Q \mid q \rightarrow \mu, \mu(\surd)>0\}$ is the
  set of final states.
\end{itemize}
\end{definition}
It is easily seen that ${\cal L}(unProb(\A)) = {\cal L}(\A)$.

An observation function ${\cal O}: CRun(\A) \rightarrow Obs$ can also
be easily translated from the probabilistic to the non probabilistic
setting. For $\A'= unProb(\A)$, we define $unProb({\cal O})$ on
$Run(\A')$ by $unProb({\cal O}) (q_0 \xrightarrow{a_1} q_1 \cdots
q_n) = {\cal O}(q_0 \xrightarrow{a_1} q_1 \cdots q_n\surd)$.

\begin{figure}
\centering
\hfill{~}
\subfigure[FPFA $\mathcal{A}_1$]{\label{fig:exfpa}
\begin{tikzpicture}[node distance=4cm,thick,initial text=,auto]
\tikzstyle{state}+=[minimum size=7pt]
\node[state,initial] (q0) at (0,0) {};
\node[state] (q1) [right of=q0] {};
\node (f0) [node distance=0.75cm,below of=q0] {};
\node (f1) [node distance=0.75cm,below of=q1] {};

\path[->] (q0) edge [out=135,in=45,min distance=0.75cm] node {$a,\frac12$} (q0);
\path[->] (q0) edge node {$b,\frac14$} (q1);
\path[->] (q0) edge node {$\surd,\frac14$} (f0);
\path[->] (q1) edge node {$\surd,1$} (f1);
\end{tikzpicture}
}
\hfill{~}\hfill{~}
\subfigure[SA $\mathcal{A}_2$]{\label{fig:exsa}
\begin{tikzpicture}[node distance=6cm,thick,initial text=,auto]
\tikzstyle{state}+=[minimum size=7pt]
\node[state,initial] (q0) at (0,0) {};
\node (f0) [node distance=0.75cm,below of=q0] {};

\path[->] (q0) edge [out=135,in=45,min distance=0.75cm] node {$a,\frac12$} (q0);
\path[->] (q0) edge node {$\surd,\frac14$} (f0);
\end{tikzpicture}
}
\hfill{~}
\caption{$\mathcal{A}_2$ is the restriction of $\mathcal{A}_1$ to $a^*$.}
\label{fig:exfpaandsa}
\end{figure}
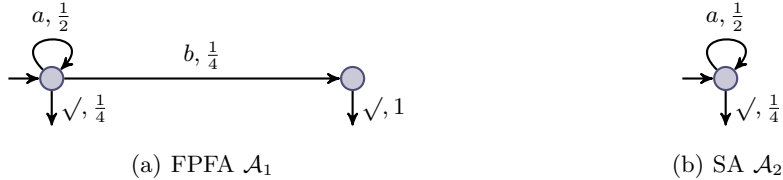

\section{Measuring non-opacity}\label{sec:liberal}

\subsection{Definition and properties}\label{subsec:liberal-definition}

One of the aspects in which the definition of opacity could be
extended to probabilistic automata is by relaxing the universal
quantifiers of Definitions~\ref{def:opacity} and~\ref{def:symopacity}.
Instead of wanting that \emph{every} observation class should not be
included in $\varphi$ (resp. $\varphi$ or $\overline\varphi$ for the
symmetrical case), we can just require that \emph{almost all} of them
do.  To obtain this, we give a measure for the set of runs leaking
information. To express properties of probabilistic opacity in an FPA
$\A$, the observation function $\cal O$ is considered as a random
variable, as well as the characteristic function ${\bf 1}_{\varphi}$
of $\varphi$.  Both the asymmetrical and the symmetrical notions of
opacity can be generalized in this manner.

\begin{definition}[Liberal probabilistic opacity]
  The \emph{liberal probabilistic opacity} or \lponame of predicate
  $\varphi$ on FPA $\A$, with respect to (surjective) observation
  function ${\cal O}: CRun \rightarrow Obs$ is defined by:
\[\lpo(\A,\varphi,{\cal O}) = \sum_{\substack{o \in Obs \\ 
{\cal O}^{-1}(o) \subseteq \varphi}} \prob(\mathcal{O} = o).\]
The \emph{liberal probabilistic symmetrical opacity} or \lpsoname is defined by:
\begin{eqnarray*}
\lpso(\A,\varphi,{\cal O}) &=& \lpo(\A,\varphi,{\cal O}) + \lpo(\A,\overline\varphi,{\cal O})\\
&=& \sum_{\substack{o \in Obs \\ 
{\cal O}^{-1}(o) \subseteq \varphi}} \prob(\mathcal{O} = o) + \sum_{\substack{o \in Obs \\ 
{\cal O}^{-1}(o) \subseteq \overline\varphi}} \prob(\mathcal{O} = o).
\end{eqnarray*}
\end{definition}

This definition provides a measure of how insecure the system is. 
The following propositions shows that a null value for these measures
coincides with (symmetrical) opacity for the system, which is then
secure.

For \lponame, it corresponds to classes either overlapping both
$\varphi$ and $\overline{\varphi}$ or included in $\overline\varphi$
as in \figurename~\ref{fig:lpo0}.  \lponame measures only the classes
that leak their inclusion in $\varphi$.  So classes included in
$\overline\varphi$ are not taken into account.  On the other extremal
point, $\lpo(\A,\varphi,\mathcal{O}) = 1$ when $\varphi$ is always
true.

When \lpsoname is null, it means that each equivalence class ${\cal
  O}^{-1}(o)$ overlaps both $\varphi$ and $\overline{\varphi}$ as in
\figurename~\ref{fig:lpso0}.  On the other hand, the system is totally
insecure when, observing through ${\cal O}$, we have all information
about $\varphi$.  In that case, the predicate $\varphi$ is a union of
equivalence classes ${\cal O}^{-1}(o)$ as in
\figurename~\ref{fig:lpso1} and this can be interpreted in terms of
conditional entropy relatively to ${\cal O}$.  The intermediate case
occurs when some, but not all, observation classes contain only runs
satisfying $\varphi$ or only runs not satisfying $\varphi$, as in
\figurename~\ref{fig:lpsomid}.

\begin{figure}
\centering
\hfill{~}
\subfigure[$\lpo(\A,\varphi,{\cal O}) = 0$]{\label{fig:lpo0}
\begin{tikzpicture}[scale=3/5]
\useasboundingbox (-0.75,0) rectangle (4.75,5);
\path[phee, rounded corners=9pt]  (2.25,1.5) -- (2.5,2.5) -- (3.9,2.9) -- (2.5,3.4) -- (3.75,4.5) -- (2.5,4.5) -- (1.9,3) -- (1.5,4.5) -- (0.25,4.5) -- (0.5,2.5) -- (1.5,2.75) [rounded corners=6pt]-- (1.5,1.5) -- cycle;
\draw[very thin] (0,0) grid (4,5);
\draw[thick] (0,0) -- (0,5) -- (4,5) -- (4,0) -- cycle;
\end{tikzpicture}
}
\hfill{~}\hfill{~}
\subfigure[$0 < \lpo(\A,\varphi,{\cal O}) < 1$]{\label{fig:lpomid}
\begin{tikzpicture}[scale=3/5]
\useasboundingbox (-0.75,0) rectangle (4.75,5);
\path[phee, rounded corners=8.5pt] (0.25,0.25) -- (2,0.1) -- (3.5,0.75) -- (3.75,1.75) -- (2.25,1.5) -- (2.5,2.5) -- (2.3,3.4) -- (3.75,4.95) -- (1.5,4.5) -- (0.25,4.5) -- (0.5,2.5) -- (1.5,2.75) -- (0.5,1.5) -- cycle;
\draw[very thin] (0,0) grid (4,5);
\draw[thick] (0,0) -- (0,5) -- (4,5) -- (4,0) -- cycle;
\path[pattern=north east lines] (1,1) rectangle (2,2);
\path[pattern=north east lines] (1,3) rectangle (2,4);
\end{tikzpicture}
}
\hfill{~}\hfill{~}
\subfigure{\label{fig:lpolegende}
\begin{tikzpicture}[scale=3/5]
\useasboundingbox (-0.25,1) rectangle (4.75,-3.875); % Centrage vertical de la l\'egende % [draw=red]
\node (pheestart) at (0,0) {};
\path[phee] (pheestart) [rounded corners=3pt]-- +(-0.3,0.5) [rounded corners=4pt]-- +(0,1) [rounded corners=7pt]-- +(1,1) [rounded corners=4pt]-- +(1.4,0.3) [rounded corners=3pt]-- +(1,0) [rounded corners=5pt]-- +(0.6,0.5) [rounded corners=2pt]-- cycle;
\node[anchor=west] at ($(pheestart) + (1.5,0.5)$) {\Large$\varphi$};

\begin{scope}[yshift=-1.375cm]
\draw[very thin] (-0.1,-0.1) grid (1.1,1.1);
\node[anchor=west] at (1.5,0.5) {\Large${\cal O}^{-1}(o)$};
\end{scope}

\begin{scope}[yshift=-2.875cm]
\path[fill=auxfill] (0,0) rectangle (1,1);
\draw[very thin] (-0.1,-0.1) grid (1.1,1.1);
\path[pattern=north east lines] (0,0) rectangle (1,1);
\node[anchor=west,text width=2.25cm] at (1.5,0.5) {\parbox{2.2cm}{\scriptsize Classes leaking their inclusion into $\varphi$}};
\end{scope}

\begin{scope}[yshift=-4.375cm]
\draw[very thin] (-0.1,-0.1) grid (1.1,1.1);
\fill[pattern=north west lines] (0,0) rectangle (1,1);
\node[anchor=west,text width=2.25cm] at (1.5,0.5) {\parbox{2.2cm}{\scriptsize Classes leaking their inclusion into $\overline\varphi$}};
\end{scope}
\end{tikzpicture}
}
\hfill{~}
\medskip
\addtocounter{subfigure}{-1}

\hfill{~}
\subfigure[$\lpso(\A,\varphi,{\cal O}) = 0$]{\label{fig:lpso0}
\begin{tikzpicture}[scale=3/5]
\useasboundingbox (-0.75,0) rectangle (4.75,5);
\path[phee, rounded corners=9pt] (0.25,0.25) -- (2,0.1) -- (3.5,0.75) -- (3.75,1.75) -- (2.25,1.5) -- (2.5,2.5) -- (3.9,2.9) -- (2.5,3.4) -- (3.75,4.5) -- (2.5,4.5) -- (1.9,3) -- (1.5,4.5) -- (0.25,4.5) -- (0.5,2.5) -- (1.5,2.75) -- (1.5,1.5) -- (0.5,1.5) -- cycle;
\draw[very thin] (0,0) grid (4,5);
\draw[thick] (0,0) -- (0,5) -- (4,5) -- (4,0) -- cycle;
\end{tikzpicture}
}
\hfill{~}\hfill{~}
\subfigure[$0 < \lpso(\A,\varphi,{\cal O}) < 1$]{\label{fig:lpsomid}
\begin{tikzpicture}[scale=3/5]
\useasboundingbox (-0.75,0) rectangle (4.75,5);
\path[phee, rounded corners=8.5pt] (0.25,0.25) -- (2,0.1) -- (3.5,0.75) -- (3.75,1.75) -- (2.25,1.5) -- (2.5,2.5) -- (2.3,3.4) -- (3.75,4.95) -- (1.5,4.5) -- (0.25,4.5) -- (0.5,2.5) -- (1.5,2.75) -- (0.5,1.5) -- cycle;
\draw[very thin] (0,0) grid (4,5);
\draw[thick] (0,0) -- (0,5) -- (4,5) -- (4,0) -- cycle;
\path[pattern=north east lines] (1,1) rectangle (2,2);
\path[pattern=north east lines] (1,3) rectangle (2,4);
\path[pattern=north west lines] (3,2) rectangle (4,4);
\end{tikzpicture}
}
\hfill{~}\hfill{~}
\subfigure[$\lpso(\A,\varphi,{\cal O}) = 1$]{\label{fig:lpso1}
\begin{tikzpicture}[scale=3/5]
\useasboundingbox (-0.75,0) rectangle (4.75,5);
\path[phee] (0,5) -- (1,5) -- (1,4) -- (3,4) -- (3,3) -- (1,3) -- (1,1) -- (2,1) -- (2,2) -- (3,2) -- (3,0) -- (0,0) -- cycle;
\draw[very thin] (0,0) grid (4,5);
\draw[pattern=north east lines] (0,5) -- (1,5) -- (1,4) -- (3,4) -- (3,3) -- (1,3) -- (1,1) -- (2,1) -- (2,2) -- (3,2) -- (3,0) -- (0,0) -- cycle;
\draw[pattern=north west lines] (1,5) -- (4,5) -- (4,0) -- (3,0) -- (3,2) -- (2,2) -- (2,1) -- (1,1) -- (1,3) -- (3,3) -- (3,4) -- (1,4) -- cycle;
\draw[thick] (0,0) -- (0,5) -- (4,5) -- (4,0) -- cycle;
\end{tikzpicture}
}
\hfill{~}
\caption{Liberal probabilistic asymmetrical and symmetrical opacity.}
\label{fig:lpopicture}
\label{fig:lpsopicture}
\end{figure}
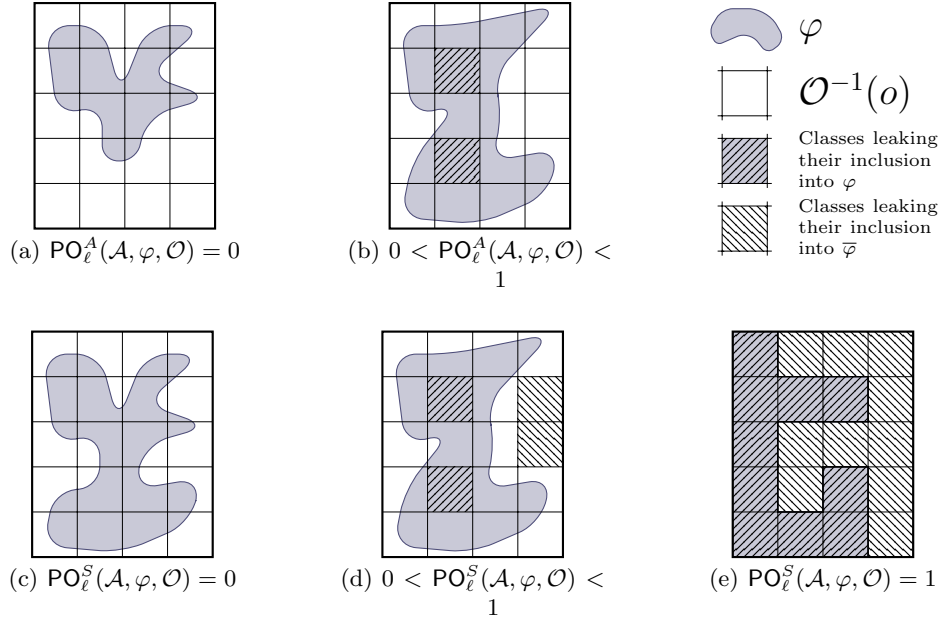

\begin{proposition}\label{prop:lpoprop}~
\begin{enumerate}[label=(\arabic*),topsep=-\baselineskip]
\item $0 \leq \lpo(\A,\varphi,{\cal O}) \leq 1$ and $0 \leq
  \lpso(\A,\varphi,{\cal O}) \leq 1$
\item $\lpo(\A,\varphi,{\cal O}) = 0$ if and only if $\varphi$ is
  opaque on $unProb(\A)$ with respect to $unProb({\cal O})$.\\
  $\lpso(\A,\varphi,{\cal O}) = 0$ if and only if $\varphi$ is
  symmetrically opaque on $unProb(\A)$ with respect to $unProb({\cal
    O})$.
 \item \label{it:lpsoprop:lpo1} $\lpo(\A,\varphi,{\cal O}) = 1$ if
   and only if $\varphi = CRun(\A)$.\\
   $\lpso(\A,\varphi,{\cal O}) = 1$ if and only if $H({\bf
     1}_{\varphi} | {\cal O}) = 0$.
\end{enumerate}
\end{proposition}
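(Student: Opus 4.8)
The plan is to treat the three items in turn, relying throughout on one pivotal observation: in an FPFA every complete run $\rho$ has $\prob(\rho) > 0$, since $\prob_\A(\rho)$ is a finite product of transition weights $\mu(a,r)$ and a final weight $\mu(\surd)$, all of which are strictly positive by definition of the transitions taken along $\rho$. Because ${\cal O}$ is surjective, each class ${\cal O}^{-1}(o)$ is nonempty, so $\prob({\cal O}=o) = \sum_{{\cal O}(\rho)=o}\prob(\rho) > 0$ for every $o \in Obs$. This positivity is what converts each ``a sum of probabilities vanishes'' assertion into ``no index occurs in the sum'', and it will be invoked repeatedly.

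For item (1), $\lpo$ is a sum of nonnegative terms $\prob({\cal O}=o)$ over a subset of $Obs$, so $\lpo \geq 0$; and since $\sum_{o \in Obs}\prob({\cal O}=o) = \prob(CRun(\A)) = 1$, restricting to a subset gives $\lpo \leq 1$. For $\lpso$ the only extra point is the upper bound: no nonempty class can be contained in both $\varphi$ and $\overline\varphi$ (this would force ${\cal O}^{-1}(o) \subseteq \varphi \cap \overline\varphi = \emptyset$), so the two index sets in the definition of $\lpso$ are disjoint, their union is still a subset of $Obs$, and the two sums add up to at most $1$.

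For item (2), I would first fix the dictionary between the two worlds: $\rho\cdot\surd \mapsto \rho$ is a bijection from $CRun(\A)$ onto the accepting runs of $unProb(\A)$, and $unProb({\cal O})(\rho) = {\cal O}(\rho\cdot\surd)$, so ${\cal O}^{-1}(o) \subseteq \varphi$ holds in $\A$ exactly when the corresponding possibilistic class is contained in $\varphi$ in $unProb(\A)$. By positivity, $\lpo = 0$ iff no index contributes, i.e. ${\cal O}^{-1}(o) \not\subseteq \varphi$ for all $o$; transported along the dictionary this is precisely opacity of $\varphi$ on $unProb(\A)$ for $unProb({\cal O})$ (Definition~\ref{def:opacity}). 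Since $\lpso = \lpo(\A,\varphi,{\cal O}) + \lpo(\A,\overline\varphi,{\cal O})$ is a sum of two nonnegative quantities, it vanishes iff both do, i.e. iff $\varphi$ and $\overline\varphi$ are both opaque, which is exactly symmetric opacity (Definition~\ref{def:symopacity}).

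For item (3), the $\lpo = 1$ case follows from the bound in item (1): $\lpo = 1$ forces $\sum_{{\cal O}^{-1}(o) \not\subseteq\varphi}\prob({\cal O}=o) = 0$, and positivity then gives that no such $o$ exists, i.e. $\bigcup_o {\cal O}^{-1}(o) = CRun(\A) \subseteq \varphi$, equivalently $\varphi = CRun(\A)$. The equivalence $\lpso = 1 \iff H({\bf 1}_\varphi \mid {\cal O}) = 0$ is the step I expect to be the main obstacle. Here I would expand $H({\bf 1}_\varphi \mid {\cal O}) = \sum_o \prob({\cal O}=o)\, H({\bf 1}_\varphi \mid {\cal O}=o)$ as a sum of nonnegative terms, so it is $0$ iff $H({\bf 1}_\varphi \mid {\cal O}=o) = 0$ for every $o$. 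As ${\bf 1}_\varphi$ is binary, $H({\bf 1}_\varphi \mid {\cal O}=o)$ is a binary entropy, vanishing iff $\prob(\varphi \mid {\cal O}=o) \in \{0,1\}$; and by positivity $\prob(\varphi \mid {\cal O}=o)=1$ is equivalent to ${\cal O}^{-1}(o) \subseteq \varphi$ while $\prob(\varphi \mid {\cal O}=o)=0$ is equivalent to ${\cal O}^{-1}(o) \subseteq \overline\varphi$. Hence $H({\bf 1}_\varphi \mid {\cal O})=0$ iff every class is contained in $\varphi$ or in $\overline\varphi$. Finally, by the disjointness noted in item (1), $\lpso$ sums $\prob({\cal O}=o)$ over exactly those $o$ for which ${\cal O}^{-1}(o) \subseteq \varphi$ or ${\cal O}^{-1}(o) \subseteq \overline\varphi$; by positivity this total equals $1$ iff every class satisfies one of these inclusions, which matches the entropy condition exactly and closes the equivalence.
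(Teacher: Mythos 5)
Your proposal is correct and takes essentially the same route as the paper's own proof: the same key observation that every complete run of an FPFA has positive probability (the paper phrases it as ``a complete run has non-null probability in $\A$ iff it is a run of $unProb(\A)$''), the same mutual-exclusivity argument for the bounds in item (1), and the same binary-entropy analysis (that $x\cdot\log(x)+(1-x)\cdot\log(1-x)=0$ only at $x\in\{0,1\}$) for the equivalence $\lpso(\A,\varphi,{\cal O})=1 \iff H({\bf 1}_\varphi\mid{\cal O})=0$. The only difference is one of detail: you make explicit some steps the paper leaves implicit, such as the disjointness of the two index sets in $\lpso$ and the final passage from ``every class is included in $\varphi$ or $\overline\varphi$'' to $\lpso(\A,\varphi,{\cal O})=1$.
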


\begin{proof}[Proof of Proposition~\ref{prop:lpoprop}.]
\begin{enumerate}[label=(\arabic*),topsep=-\baselineskip,labelsep=3pt]
\item The considered events are mutually exclusive, hence the sum of
  their probabilities never exceeds 1.
\item First observe that a complete run $r_0 a \dots r_n \surd$ has a non null
  probability in $\A$ iff $r_0 a \dots r_n$ is a run in
  $unProb(\A)$.  Suppose $\lpo(\A,\varphi,{\cal O}) = 0$. Recall that ${\cal O}$ is assumed surjective. Then there is
  no observable $o$ such that $\mathcal{O}^{-1}(o) \subseteq \varphi$.
  Conversely, if $\varphi$ is opaque on $unProb(\A)$, there is no observable $o \in Obs$
   such that $\mathcal{O}^{-1}(o) \subseteq \varphi$, hence the null value for $\lpo(\A,\varphi,{\cal O})$.
  The case of \lpsoname is similar, also taking into account the dual case of $\overline\varphi$ in the above.
  \item For \lponame, this is straightforward from the definition.
  For \lpsoname, $H({\bf 1}_{\varphi} | {\cal O}) = 0$ iff
  \[\sum_{\stackrel{o \in Obs}{i \in \{0, 1\}}}
  \prob({\bf 1}_{\varphi} = i | {\cal O} =  o) \cdot \log(\prob({\bf 1}_{\varphi} = i | {\cal O} =  o)) = 0
  \]
  Since all the terms have the same sign, this sum is null if and only
  if each of its term is null.  Setting for every $o \in Obs$, $f(o) =
  \prob({\bf 1}_{\varphi} = 1 | {\cal O} = o) = 1 - \prob({\bf
    1}_{\varphi} = 0 | {\cal O} = o)$, we have: $H({\bf 1}_{\varphi} |
  \mathcal{O}) = 0$ iff $\forall\, o \in Obs$, $f(o) \cdot \log(f(o))
  + (1 - f(o)) \cdot \log(1 - f(o)) = 0$.  Since the equation $x \cdot
  \log(x) + (1-x) \cdot \log(1-x) = 0$ only accepts $1$ and $0$ as
  solutions, it means that for every observable $o$, either all the
  runs $\rho$ such that ${\cal O}(\rho) = o$ are in $\varphi$, or they
  are all not in $\varphi$.  Therefore $H({\bf 1}_{\varphi} | {\cal
    O}) = 0$ iff for every observable $o$, ${\cal O}^{-1}(o) \subseteq
  \varphi$ or ${\cal O}^{-1}(o) \subseteq \overline{\varphi}$, which
  is equivalent to $\lpso(\A,\varphi,{\cal O}) = 1$.
\end{enumerate}
\end{proof}

\subsection{Example: Non-interference}
\label{ex:niwithopacity}
For the systems $\mathcal{A}_3$ and $\mathcal{A}_4$ of
\figurename~\ref{fig:niPA}, we use the predicate $\varphi_{NI}$
which is true if the trace of a run contains the letter $h$.  In both
cases the observation function ${\cal O}_L$ returns the projection of
the trace onto the alphabet $\{\ell_1,\ell_2\}$.  Remark that this
example is an interference property~\cite{goguen82} seen as opacity.
Considered unprobabilistically, both systems are interferent since an
$\ell_2$ not preceded by an $\ell_1$ betrays the presence of an $h$.
However, they differ by how often this case happens.

The runs of $\mathcal{A}_3$ and $\mathcal{A}_4$ and their properties
are displayed in \tablename~\ref{tab:niruns}.  Then we can see that
$[\rho_1]_{{\cal O}_L} = [\rho_2]_{{\cal O}_L}$ overlaps both
$\varphi_{NI}$ and $\overline{\varphi_{NI}}$, while $[\rho_3]_{{\cal
    O}_L}$ is contained totally in $\varphi$.  Hence the \lponame can
be computed for both systems:
\[
\lpo(\mathcal{A}_3,\varphi_{NI},{\cal O}_L) = \frac14 \qquad
\lpo(\mathcal{A}_4,\varphi_{NI},{\cal O}_L) = \frac34
\]
Therefore $\mathcal{A}_3$ is more secure than $\mathcal{A}_4$.
Indeed, the run that is interferent occurs more often in
$\mathcal{A}_4$, leaking information more often.

Note that in this example, \lponame and \lpsoname coincide.  This is
not always the case.  Indeed, in the unprobabilistic setting, both
symmetrical and asymmetrical opacity of $\varphi_{NI}$ with respect to
$\mathcal{O}_L$ express the intuitive notion that ``an external
observer does not know whether an action happened or not''.  The
asymmetrical notion corresponds to the definition of \emph{strong
  nondeterministic non-interference} in~\cite{goguen82} while the
symmetrical one was defined as \emph{perfect security property}
in~\cite{alur06}.

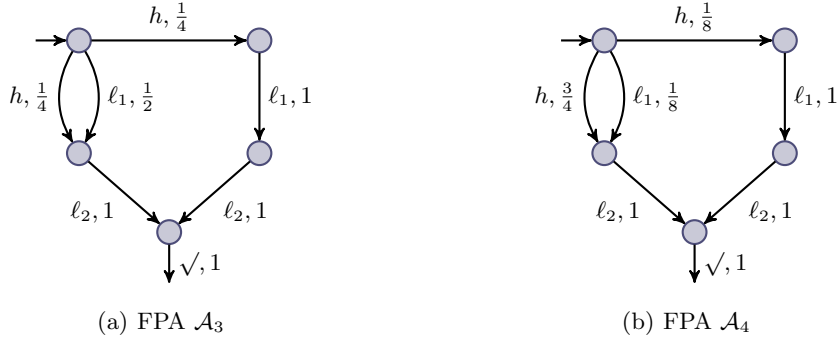
\begin{figure}[t]
\centering
\hfill{~}
\subfigure[FPA $\mathcal{A}_3$]{\label{fig:niPPA1}
\begin{tikzpicture}[thick,initial text=,auto]
\tikzstyle{ynode}=[node distance=2.5cm]
\tikzstyle{xnode}=[node distance=4cm]
\tikzstyle{every state}+=[scale=0.6,minimum size=15pt]
\node[state,initial] (q0) at (0,0) {};
\node[state] (q1) [ynode,below of=q0] {};
\node[state] (q2) [xnode,right of=q0] {};
\node[state] (q3) [ynode,below of=q2] {};
\node (ph) at (barycentric cs:q1=1,q3=1) {};
\node[state] (q4) [node distance=1.75cm,below of=ph] {};
\node (q5) [node distance=0.8cm,below of=q4] {};

\path[->] (q0) edge [bend left] node {$\ell_1,\frac12$} (q1);
\path[->] (q0) edge [bend right] node [swap] {$h,\frac14$} (q1);
\path[->] (q0) edge node {$h,\frac14$} (q2);
\path[->] (q2) edge node {$\ell_1,1$} (q3);
\path[->] (q1) edge node [swap] {$\ell_2,1$} (q4);
\path[->] (q3) edge node {$\ell_2,1$} (q4);
\path[->] (q4) edge node {$\surd,1$} (q5);
\end{tikzpicture}
}
\hfill{~}\hfill{~}
\subfigure[FPA $\mathcal{A}_4$]{\label{fig:niPPA2}
\begin{tikzpicture}[node distance=4cm,thick,initial text=,auto]
\tikzstyle{ynode}=[node distance=2.5cm]
\tikzstyle{xnode}=[node distance=4cm]
\tikzstyle{every state}+=[scale=0.6,minimum size=15pt]
\node[state,initial] (q0) at (0,0) {};
\node[state] (q1) [ynode,below of=q0] {};
\node[state] (q2) [xnode,right of=q0] {};
\node[state] (q3) [ynode,below of=q2] {};
\node (ph) at (barycentric cs:q1=1,q3=1) {};
\node[state] (q4) [node distance=1.75cm,below of=ph] {};
\node (q5) [node distance=0.8cm,below of=q4] {};

\path[->] (q0) edge [bend left] node {$\ell_1,\frac18$} (q1);
\path[->] (q0) edge [bend right] node [swap] {$h,\frac34$} (q1);
\path[->] (q0) edge node {$h,\frac18$} (q2);
\path[->] (q2) edge node {$\ell_1,1$} (q3);
\path[->] (q1) edge node [swap] {$\ell_2,1$} (q4);
\path[->] (q3) edge node {$\ell_2,1$} (q4);
\path[->] (q4) edge node {$\surd,1$} (q5);
\end{tikzpicture}
}
\hfill{~}
\caption{Interferent FPAs $\mathcal{A}_3$ and $\mathcal{A}_4$.}
\label{fig:niPA}
\end{figure}

\begin{table}
\centering
\begin{tabular}{|c|c|c|c|c|}
\hline
$\trace(\rho)$ & $\prob_{\mathcal{A}_3}(\rho)$ & $\prob_{\mathcal{A}_4}(\rho)$ & $\in \varphi_{NI}$? & ${\cal O}_L(\rho)$\\\hline
$\trace(\rho_1) = \ell_1\ell_2\surd$  & 1/2 & 1/8 & 0 & $\ell_1\ell_2$\\\hline
$\trace(\rho_2) = h\ell_1\ell_2\surd$ & 1/4 & 1/8 & 1 & $\ell_1\ell_2$\\\hline
$\trace(\rho_3) = h\ell_2\surd$    & 1/4 & 3/4 & 1 & $\ell_2$\\\hline
\end{tabular}
\caption{Runs of $\mathcal{A}_3$ and $\mathcal{A}_4$.}
\label{tab:niruns}
\end{table}

\section{Measuring the robustness of opacity}\label{sec:restrictive}

The completely opposite direction that can be taken to define a
probabilistic version is a more paranoid one: how much information is
leaked through the system's uncertainty?  For example, on
\figurename~\ref{fig:lpso0}, even though each observation class
contains a run in $\varphi$ and one in $\overline\varphi$, some
classes are \emph{nearly} in $\varphi$.  In some other classes the
balance between the runs satisfying $\varphi$ and the ones not
satisfying $\varphi$ is more even.  Hence for each observation class,
we will not ask if it is included in $\varphi$, but how likely
$\varphi$ is to be true inside this class with a probabilistic measure
taking into account the likelihood of classes.  This amounts to measuring, inside each
observation class, $\overline\varphi$ in the case of asymmetrical
opacity, and the balance between $\varphi$ and $\overline\varphi$ in
the case of symmetrical opacity. Note that these new measures are
relevant only for opaque systems, where the previous liberal measures
are equal to zero.

In~\cite{berard10}, another measure was proposed, based on the notion
of mutual information (from information theory, along similar lines as
in~\cite{smith09}). However, this measure had a weaker link with
possibilistic opacity (see discussion in
Section~\ref{sec:comparison}).  What we call here \hponame is a new
measure, whose relation with possibilistic opacity is expressed by the
second item in Proposition~\ref{prop:hpoprop}.

\subsection{Restricting Asymmetrical Opacity}\label{subsec:hpo}

In this section we extend the notion of asymmetrical opacity in
order to measure how secure the system is.

\subsubsection*{Definition and properties.}
In this case, an observation class is more secure if $\varphi$ is less
likely to be true.  That means that it is easy (as in ``more likely'')
to find a run not in $\varphi$ in the same observation class.  Dually,
a high probability for $\varphi$ inside a class means that few (again
probabilistically speaking) runs will be in the same class yet not in
$\varphi$.

Restrictive probabilistic opacity is defined to measure this effect
globally on all observation classes.  It is tailored to fit the
definition of opacity in the classical sense: indeed, if one class
totally leaks its presence in $\varphi$, \hponame will detect it
(second point in Proposition~\ref{prop:hpoprop}).

\begin{definition}\label{def:hpo}  Let $\varphi$ be a predicate on the 
  complete runs of an FPA $\A$ and ${\cal O}$ an observation
  function.  The \emph{restrictive probabilistic opacity} (\hponame)
  of $\varphi$ on $\A$, with respect to ${\cal O}$, is defined by
\[\frac1{\hpo(\A,\varphi,{\cal O})} = \sum_{o\in Obs} \prob(\mathcal{O}=o)\cdot \frac{1}{\prob(\ind_\varphi=0 \mid \mathcal{O}=o)}\]
\end{definition}

\hponame is the harmonic means (weighted by the probabilities of
observations) of the probability that $\varphi$ is false in a given
observation class.  The harmonic means averages the leakage of
information inside each class.
Since security and robustness are often evaluated on the weakest link,
more weight is given to observation classes with the higher leakage,
\emph{i.e.} those with probability of $\varphi$ being false closest to
$0$.

\medskip The following proposition gives properties of \hponame.
\begin{proposition}\label{prop:hpoprop}~
\begin{enumerate}[label=(\arabic*),topsep=-\baselineskip]
\item $0 \leq \hpo(\A,\varphi,\Obs) \leq 1$
\item $\hpo(\A,\varphi,\Obs) = 0$ if and only if $\varphi$ is not
  opaque on $unProb(\A)$ with respect to $unProb(\Obs)$.
\item $\hpo(\A,\varphi,\Obs) = 1$ if and only if $\varphi = \emptyset$.
\end{enumerate}
\end{proposition}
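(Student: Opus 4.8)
The plan is to abbreviate, for each $o \in Obs$, the weight $w_o = \prob(\mathcal{O} = o)$ and the conditional mass $p_o = \prob(\ind_\varphi = 0 \mid \mathcal{O} = o)$, so that by definition $\frac{1}{\hpo(\A,\varphi,\mathcal{O})} = \sum_{o} w_o \cdot \frac{1}{p_o}$ is a weighted harmonic mean of the values $p_o$. Two elementary facts drive everything. First, $\sum_o w_o = 1$ and, because every complete run has strictly positive probability (each transition of a run uses a positive-probability move, so $\prob_\A(\rho)$ is a product of positive numbers) and $\mathcal{O}$ is surjective onto $Obs$, in fact $w_o > 0$ for every $o$. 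Second, each $p_o$ is a conditional probability, hence $p_o \in [0,1]$ and $\frac{1}{p_o} \geq 1$ whenever $p_o > 0$.

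For item (1), I would note that $\frac{1}{\hpo} = \sum_o w_o / p_o \geq \sum_o w_o = 1$, giving $\hpo \leq 1$; and $\hpo \geq 0$ holds by non-negativity of the sum, with the convention that $\frac{1}{\hpo} = +\infty$, i.e. $\hpo = 0$, as soon as some $p_o = 0$. This step also records the sharp inequality $\frac{1}{\hpo} \geq 1$ that items (2) and (3) will saturate at the two extremes.

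For item (2), $\hpo = 0$ is equivalent to $\frac{1}{\hpo} = +\infty$, which, since all $w_o > 0$, happens exactly when $p_o = 0$ for some $o$. The step needing care is translating $p_o = 0$ into a possibilistic statement: $p_o = 0$ means no positive-probability run of the class $\mathcal{O}^{-1}(o)$ lies in $\overline\varphi$, i.e. every such run lies in $\varphi$. Using the correspondence already recorded in the proof of Proposition~\ref{prop:lpoprop}—a complete run $r_0 \cdots r_n \surd$ has non-null probability in $\A$ iff $r_0 \cdots r_n$ is a run of $unProb(\A)$, with $unProb(\mathcal{O})$ agreeing with $\mathcal{O}$ on it—this says precisely that $\emptyset \neq unProb(\mathcal{O})^{-1}(o) \subseteq \varphi$, which is exactly the failure of opacity of $\varphi$ on $unProb(\A)$ with respect to $unProb(\mathcal{O})$.

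For item (3), the bound $\frac{1}{\hpo} \geq 1$ from item (1) is an equality iff $w_o/p_o = w_o$ for every $o$, i.e. (as $w_o > 0$) iff $p_o = 1$ for every $o$. Unwinding, $p_o = 1$ says that all runs of $\mathcal{O}^{-1}(o)$ (all of which have positive probability) avoid $\varphi$; requiring this for every $o$ and using $\bigcup_o \mathcal{O}^{-1}(o) = CRun(\A)$ forces $\varphi = \emptyset$, and conversely $\varphi = \emptyset$ gives every $p_o = 1$ and hence $\hpo = 1$. The only genuine obstacle throughout is the bridge between the probabilistic conditions $p_o \in \{0,1\}$ and the purely possibilistic (non-)opacity on $unProb(\A)$; once the positive-probability/$unProb$ correspondence is invoked, the remainder is just the arithmetic of the weighted harmonic mean.
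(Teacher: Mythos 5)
Your proof is correct and follows essentially the same route as the paper's: item (1) via the weighted harmonic-mean inequality $\sum_o w_o/p_o \geq \sum_o w_o = 1$, item (2) via divergence of the sum exactly when some class satisfies $p_o = 0$, i.e.\ is contained in $\varphi$, and item (3) via the equality case of the mean forcing every $p_o = 1$. You are in fact slightly more explicit than the paper on the two bridging facts---that every complete run has strictly positive probability (so $\prob(\ind_\varphi = 0)=1$ really yields $\varphi=\emptyset$, not merely $\prob(\varphi)=0$) and the correspondence with runs of $unProb(\A)$---but the structure and key steps are identical.
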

\begin{proof}
  The first point immediately results from the fact that \hponame is a
  means of values between $0$ and $1$.

  From the definition above,
  \hponame is null if and only if there is one class that is contained
  in $\varphi$. Indeed, this
  corresponds to the case where the value of
  $\frac1{\prob(\ind_\varphi=0 \mid \mathcal{O}=o)}$ goes to
  $+\infty$, for some $o$, as well as the sum.

Thirdly, if $\varphi$ is always false, then \hponame is $1$ since it
is a means of probabilities all of value $1$. Conversely, if \hponame
is $1$, because it is defined as an average of values between $0$ and
$1$, then all these values must be equal to $1$, hence for each $o$,
$\prob(\ind_\varphi=0 \mid \mathcal{O}=o)=1$ which means that
$\prob(\ind_\varphi=0)=1$ and $\varphi$ is false.
\end{proof}

\subsubsection*{Example: Debit Card System.}\label{ex:debitcard}
Consider a Debit Card system in a store.  When a card is inserted, an
amount of money $x$ to be debited is entered, and the client enters
his \textsc{pin} number (all this being gathered as the action
Buy$(x)$).  The amount of the transaction is given probabilistically
as an abstraction of the statistics of such transactions.  Provided
the \textsc{pin} is correct, the system can either directly allow the
transaction, or interrogate the client's bank for solvency.  In order
to balance the cost associated with this verification (bandwidth,
server computation, etc.) with the loss induced if an insolvent client
was debited, the decision to interrogate the bank's servers is taken
probabilistically according to the amount of the transaction.  When
interrogated, the bank can reject the transaction with a certain
probability\footnote{Although the bank process to allow or forbid the
  transaction is deterministic, the statistics of the result can be
  abstracted into probabilities.} or accept it.  This system is
represented by the FPA $\A_{\textrm{card}}$ of
\figurename~\ref{fig:debitcard}.
\begin{figure}[h]
\centering
\begin{tikzpicture}[auto]
\tikzstyle{every state}+=[minimum size=2pt]
\node[state,initial] (q0) at (-1.5,0) {};
\node[state] (qi) at (0,0) {};
\node[state] (high) at (4,2.625) {};
\node[state] (medhigh) at (4,0.875) {};
\node[state] (medlow) at (4,-0.875) {};
\node[state] (low) at (4,-2.625) {};

\tikzstyle{every node}+=[font=\scriptsize]

\foreach \prize in {high,medhigh,medlow,low} {
\node[state] (\prize call) at ($(\prize) + (2,0.5)$) {};
\node[state] (\prize acc) at ($(\prize) + (3,-0.5)$) {};
\node[state] (\prize rej) at ($(\prize call) + (2,0)$) {};
\foreach \res in {acc,rej}{
\node (\prize\res ph) at ($(\prize\res) + (0.875,0)$) {};
\path[->] (\prize\res) edge node {$\surd,1$} (\prize\res ph);
};
};

\path[->] (q0) edge node {Buy$(x)$} (qi);
\path[->] (qi) edge[bend left] node {$x>1000, 0.05$} (high);
\path[->] (qi) edge node[pos=0.875] {$500 < x \leq 1000, 0.2$} (medhigh);
\path[->] (qi) edge node[swap,pos=0.875] {$100 < x \leq 500, 0.45$} (medlow);
\path[->] (qi) edge[bend right] node[swap] {$x \leq 100, 0.3$} (low);

\path[->] (high) edge node[pos=0.75] {Call, $0.95$} (highcall);
\path[->] (highcall) edge node[pos=0.675] {Accept, $0.8$} (highacc);
\path[->] (highcall) edge node {Reject, $0.2$} (highrej);
\path[->] (high) edge node[swap,pos=0.75] {Accept, $0.05$} (highacc);

\path[->] (medhigh) edge node[pos=0.75] {Call, $0.75$} (medhighcall);
\path[->] (medhighcall) edge node[pos=0.675] {Accept, $0.9$} (medhighacc);
\path[->] (medhighcall) edge node {Reject, $0.1$} (medhighrej);
\path[->] (medhigh) edge node[swap,pos=0.75] {Accept, $0.25$} (medhighacc);

\path[->] (medlow) edge node[pos=0.75] {Call, $0.5$} (medlowcall);
\path[->] (medlowcall) edge node[pos=0.675] {Accept, $0.95$} (medlowacc);
\path[->] (medlowcall) edge node {Reject, $0.05$} (medlowrej);
\path[->] (medlow) edge node[swap,pos=0.75] {Accept, $0.5$} (medlowacc);

\path[->] (low) edge node[pos=0.75] {Call, $0.2$} (lowcall);
\path[->] (lowcall) edge node[pos=0.675] {Accept, $0.99$} (lowacc);
\path[->] (lowcall) edge node {Reject, $0.01$} (lowrej);
\path[->] (low) edge node[swap,pos=0.75] {Accept, $0.8$} (lowacc);
\end{tikzpicture}
\vspace{-\baselineskip}
\caption{The Debit Card system $\A_{\textrm{card}}$.}
\label{fig:debitcard}
\end{figure}

Now assume an external observer can only observe if there has been a
call or not to the bank server.  In practice, this can be achieved,
for example, by measuring the time taken for the transaction to be
accepted (it takes longer when the bank is called), or by spying on
the telephone line linking the store to the bank's servers (detecting
activity on the network or idleness).  Suppose what the external
observer wants to know is whether the transaction was worth more than
500\officialeuro.  By using \hponame, one can assess how this
knowledge can be derived from observation.

Formally, in this case the observables are
$\{\varepsilon,\textrm{Call}\}$, the observation function
$\Obs_{\textrm{Call}}$ being the projection on $\{$Call$\}$.  The
predicate to be hidden to the user is represented by the regular
expression $\varphi_{>500}=\Sigma^*($``$x>1000$''$+$``$500 < x \leq
1000$''$) \Sigma^*$ (where $\Sigma$ is the whole alphabet).  By
definition of \hponame:
\begin{eqnarray*}
\frac{1}{\hpo(\A_{\textrm{card}},\varphi_{>500},\Obs_{\textrm{Call}})} &=& \prob(\Obs_{\textrm{Call}}=\varepsilon) \cdot \frac1{\prob(\neg \varphi_{>500} | \Obs_{\textrm{Call}}=\varepsilon)}
\\&& +\ \prob(\Obs_{\textrm{Call}}=\textrm{Call}) \cdot \frac1{\prob(\neg \varphi_{>500} | \Obs_{\textrm{Call}}=\textrm{Call})}
\end{eqnarray*}

Computing successively $\prob(\Obs_{\textrm{Call}}=\varepsilon)$,
$\prob(\neg \varphi_{>500} | \Obs_{\textrm{Call}}=\varepsilon)$, $\prob(\Obs_{\textrm{Call}}=\textrm{Call})$, and
$\prob(\neg \varphi_{>500} | \Obs_{\textrm{Call}}=\textrm{Call})$ (see Appendix~\ref{app:creditcardcalcul}), we
obtain: \[\hpo(\A_{\textrm{card}},\varphi_{>500},\Obs_{\textrm{Call}})
= \frac{28272}{39377} \ \simeq \ 0.718.\]

\bigskip The notion of asymmetrical opacity, however, fails to capture security
in terms of opacity for both $\varphi$ and $\overline\varphi$.  And so
does the \hponame measure.  Therefore we define in the next section a
quantitative version of symmetrical opacity.

\subsection{Restricting symmetrical opacity}\label{subsec:rpso}

Symmetrical opacity offers a sound framework to analyze the secret of
a binary value.  For example, consider a binary channel with $n$ outputs.  It can be modeled by a
tree-like system branching on $0$ and $1$ at the first level, then
branching on observables $\{o_1,\dots,o_n\}$, as in
\figurename~\ref{fig:binchannel}.  If the system wishes to prevent
communication, the secret of predicate ``the input of the channel was
$1$'' is as important as the secret of its negation; in this case
``the input of the channel was $0$''.  Such case is an example of
\emph{initial opacity}~\cite{bryans08}, since the secret appears only
at the start of each run.  Note that any system with initial opacity
and a finite set of observables can be transformed into a
channel~\cite{andres10}, with input distribution ($p$, $1-p$), which
is the distribution of the secret predicate over
$\{\varphi,\overline\varphi\}$.

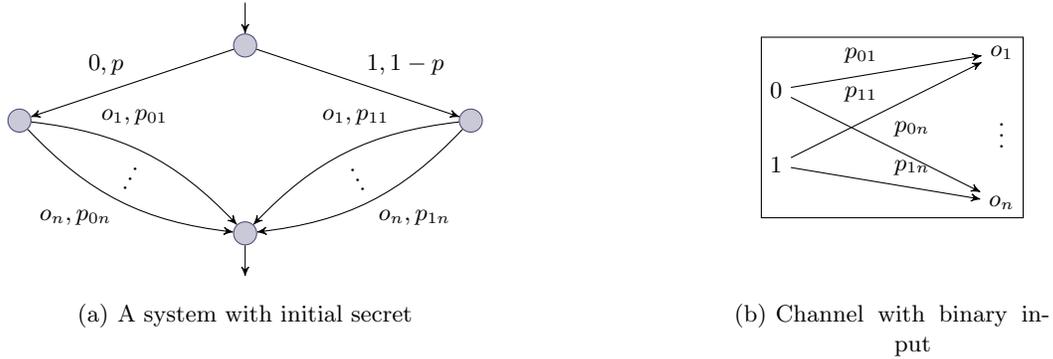
\begin{figure}[h]
\subfigure[A system with initial secret]{\label{fig:binchannelsys}
\begin{tikzpicture}[auto]
\tikzstyle{every state}+=[minimum size=5pt]
\node[state,initial above] (init) at (0,0) {};
\node[state] (i0) at (-3,-1) {};
\node[state] (i1) at (3,-1) {};
\node[state,accepting below] (fin) [node distance=2.5cm,below of=init] {};

\path[->] (init) edge node [swap] {$0,p$} (i0);
\path[->] (init) edge node {$1,1-p$} (i1);

\path[->] (i0) edge [bend left=20] node [pos=0.25] {$o_1,p_{01}$} (fin);
\path (i0) -- node[sloped,anchor=base,rotate=90] {$\dots$} (fin);
\path[->] (i0) edge [bend right=20] node [swap] {$o_n,p_{0n}$} (fin);

\path[->] (i1) edge [bend right=20] node [swap,pos=0.25] {$o_1,p_{11}$} (fin);
\path (i1) -- node[sloped,anchor=base,rotate=90] {$\dots$} (fin);
\path[->] (i1) edge [bend left=20] node {$o_n,p_{1n}$} (fin);

\end{tikzpicture}
}
\hfill{~}
\subfigure[Channel with binary input]{\label{fig:binchanneltrans}
\begin{tikzpicture}[auto]
\node (i0) at (0,0) {$0$};
\node (i1) at (0,-1) {$1$};
\node (od) at (3,-0.5) {$\vdots$};
\node (o1) [above of=od] {$o_1$};
\node (on) [below of=od] {$o_n$};

\draw (on.south east) rectangle (i0.west |- o1.north);
\useasboundingbox ($(on.south east) + (0.25,-1)$) rectangle ($(i0.west |- o1.north) + (-0.25,0)$);

\path[->] (i0) edge node {$p_{01}$} (o1);
\path[->] (i0) edge node {$p_{0n}$} (on);
\path[->] (i1) edge node {$p_{11}$} (o1);
\path[->] (i1) edge node {$p_{1n}$} (on);
\end{tikzpicture}
}
\caption{A system and its associated channel.}
\label{fig:binchannel}
\end{figure}

\subsubsection*{Definition and properties.}
Symmetrical opacity ensures that for each observation class $o$
(reached by a run), the probability of both $\prob(\varphi \mid o)$
and $\prob(\overline\varphi \mid o)$ is strictly above $0$.  That
means that the lower of these probabilities should be above $0$.  In
turn, the lowest of these probability is exactly the complement of the
vulnerability (since $\ind_\varphi$ can take only two values).  That
is, the security is measured with the probability of error in one
guess (inside a given observation class).  Hence, a system will be
secure if, in each observation class, $\varphi$ is \emph{balanced}
with $\overline\varphi$.

\begin{definition}[Restrictive probabilistic symmetric opacity]\label{def:vpo}
  Let $\varphi$ be a predicate on the complete runs of an FPA $\A$
  and ${\cal O}$ an observation function.  The \emph{restrictive
    probabilistic symmetric opacity} (\vponame) of $\varphi$ on $\A$,
  with respect to ${\cal O}$, is defined by
\[\vpo(\A,\varphi,{\cal O}) = \frac{-1}{\sum_{o \in Obs} \prob(\mathcal{O}=o) \cdot \log\left(1-V(\ind_\varphi \mid \mathcal{O}=o)\right)}\]
where $V(\ind_\varphi \mid \mathcal{O}=o) = \max_{i \in \{0,1\}} \prob(\ind_\varphi=i \mid \mathcal{O}=o)$.
\end{definition}

Remark that the definition of \vponame has very few ties with the
definition of \hponame.  Indeed, it is linked more with the notion of
possibilistic symmetrical opacity than with the notion of quantitative
asymmetrical opacity, and thus \vponame is not to be seen as an
extension of \hponame.

In the definition of \vponame, taking $-\log(1-V(\ind_\varphi \mid
\mathcal{O}=o))$ allows to give more weight to very imbalanced
classes, up to infinity for classes completely included either in
$\varphi$ or in $\overline\varphi$.  Along the lines of~\cite{smith09}, the logarithm is used in order to produce a measure in terms
of bits instead of probabilities.  These measures are then averaged
with respect to the probability of each observation class.  The final inversion ensures that the
value is between $0$ and $1$, and can be seen as a normalization
operation.  The above motivations for the definition of \vponame
directly yield the following properties:
\begin{proposition}\label{prop:vpoprop}
\begin{enumerate}[label=(\arabic*),topsep=-\baselineskip]
\item $0 \leq \vpo(\A,\varphi,{\cal O}) \leq 1$
\item $\vpo(\A,\varphi,{\cal O}) = 0$ if and only if $\varphi$ is \emph{not} symmetrically opaque on $unProb(\A)$ with respect to $unProb({\cal O})$.
\item $\vpo(\A,\varphi,{\cal O}) = 1$ if and only if $\forall o \in Obs,\ \prob(\ind_\varphi=1 \mid \mathcal{O}=o) = \frac12$.
\end{enumerate}
\end{proposition}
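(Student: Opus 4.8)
The plan is to exploit the fact that $\ind_\varphi$ takes only the two values $0$ and $1$, so that for every observable $o$ with $\prob(\mathcal{O}=o)>0$ we have $\prob(\ind_\varphi=0\mid\mathcal{O}=o)+\prob(\ind_\varphi=1\mid\mathcal{O}=o)=1$. This forces $V(\ind_\varphi\mid\mathcal{O}=o)=\max_i\prob(\ind_\varphi=i\mid\mathcal{O}=o)\in[\tfrac12,1]$, hence $1-V(\ind_\varphi\mid\mathcal{O}=o)\in[0,\tfrac12]$ and $\log\bigl(1-V(\ind_\varphi\mid\mathcal{O}=o)\bigr)\in[-\infty,-1]$. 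First I would observe that every complete run of an FPA has strictly positive probability (it is a finite product of positive transition weights), so surjectivity of $\mathcal{O}$ guarantees $\prob(\mathcal{O}=o)>0$ for each $o\in Obs$; thus all the conditionals above are well defined and the weights $\prob(\mathcal{O}=o)$ form a genuine distribution summing to $1$.

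Writing $S=\sum_{o\in Obs}\prob(\mathcal{O}=o)\cdot\log\bigl(1-V(\ind_\varphi\mid\mathcal{O}=o)\bigr)$ for the denominator, the whole proposition reduces to locating $S$. Being a convex combination (positive weights summing to $1$) of quantities each in $[-\infty,-1]$, we have $S\in[-\infty,-1]$, whence $\vpo=-1/S\in[0,1]$, which is point (1). The extremal cases then give points (2) and (3): $\vpo=0$ exactly when $S=-\infty$, and $\vpo=1$ exactly when $S=-1$.

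For point (2), $S=-\infty$ iff some summand is $-\infty$, i.e.\ iff $V(\ind_\varphi\mid\mathcal{O}=o)=1$ for some $o$. As $\ind_\varphi$ is binary, $V=1$ means $\prob(\ind_\varphi=1\mid\mathcal{O}=o)=1$ or $\prob(\ind_\varphi=0\mid\mathcal{O}=o)=1$; equivalently no positive-probability run of $\mathcal{O}^{-1}(o)$ lies in $\overline\varphi$, respectively in $\varphi$. Using the correspondence already invoked in the proof of Proposition~\ref{prop:lpoprop} between positive-probability complete runs of $\A$ and runs of $unProb(\A)$, this reads $\mathcal{O}^{-1}(o)\subseteq\varphi$ or $\mathcal{O}^{-1}(o)\subseteq\overline\varphi$ in $unProb(\A)$, which is exactly the negation of symmetrical opacity (Definition~\ref{def:symopacity}). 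For point (3), $\vpo=1$ iff $S=-1$; since every weight is positive and every term is $\leq-1$, the standard strict-averaging argument gives $S=-1$ iff $\log(1-V(\ind_\varphi\mid\mathcal{O}=o))=-1$, i.e.\ $V(\ind_\varphi\mid\mathcal{O}=o)=\tfrac12$, for every $o$; binarity of $\ind_\varphi$ then turns $V=\tfrac12$ into $\prob(\ind_\varphi=1\mid\mathcal{O}=o)=\tfrac12$, the asserted condition.

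I expect the only real content to sit in point (2), where the analytic condition $V=1$ must be translated back into the set-theoretic inclusions of Definition~\ref{def:symopacity}; the rest is finite real arithmetic together with the two-valuedness of $\ind_\varphi$. The sole technical nuisance is the bookkeeping with the value $-\infty$ appearing in $S$ and in $-1/S$, which I would neutralize by dispatching the divergent case $S=-\infty$ first (as part of point (2)), so that in points (1) and (3) I may assume $S$ finite and reason purely with finite reals.
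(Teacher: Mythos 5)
Your proposal is correct and follows essentially the same route as the paper's proof: binarity of $\ind_\varphi$ forces $V(\ind_\varphi\mid\mathcal{O}=o)\in[\tfrac12,1]$, so the weighted average of $-\log(1-V)$ lies in $[1,+\infty]$ and its inverse in $[0,1]$, with the value $0$ (resp.\ $1$) pinned down by a divergent summand (resp.\ by all summands equalling $1$), exactly as in the paper. Your additional care --- noting that every complete run of an FPA has positive probability so that all conditionals are well defined and all weights are strictly positive, and routing the inclusion argument of point (2) through the run correspondence with $unProb(\A)$ --- only makes explicit steps the paper leaves implicit.
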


\begin{proof}[Proof of Proposition~\ref{prop:vpoprop}.]
\begin{enumerate}[label=(\arabic*),topsep=-\baselineskip,labelsep=3pt]
\item Since the vulnerability of a random variable that takes only two values is between $\frac12$ and $1$, we have $1-V(\ind_\varphi \mid \mathcal{O}=o) \in [0,\frac12]$ for all $o \in Obs$.
So $-\log(1-V(\ind_\varphi \mid \mathcal{O}=o)) \in [1,+\infty[$ for any $o$.
The (arithmetic) means of these values is thus contained within the same bounds.
The inversion therefore yields a value between $0$ and $1$.
\item If $\varphi$ is not symmetrically opaque, then for some
  observation class $o$, $\Obs^{-1}(o) \subseteq \varphi$ or
  $\Obs^{-1}(o) \subseteq \overline\varphi$.  In both cases,
  $V(\ind_\varphi \mid \mathcal{O}) = 1$, so $-\log(1-V(\ind_\varphi
  \mid \mathcal{O}=o)) = +\infty$ and the average is also $+\infty$.
  Taking the limit for the inverse gives the value $0$ for \vponame.

  Conversely, if \vponame is $0$, then its inverse is $+\infty$, which
  can only occur if one of the $-\log(1-V(\ind_\varphi \mid
  \mathcal{O}=o))$ is $+\infty$ for some $o$.  This, in turn, means
  that some $V(\ind_\varphi \mid \mathcal{O}=o)$ is $1$, which means
  the observation class of $o$ is contained either in $\varphi$ or in
  $\overline\varphi$.
\item $\vpo(\A,\varphi,{\cal O}) = 1$ iff $\sum_{o \in Obs}
  \prob(\mathcal{O}=o) \cdot \left(-\log\left(1-V(\ind_\varphi \mid
      \mathcal{O}=o)\right)\right) = 1$.  Since this is an average of
  values above $1$, this is equivalent to $-\log\left(1-V(\ind_\varphi
    \mid \mathcal{O}=o)\right)= 1$ for all $o \in Obs$, \emph{i.e.}
  $V(\ind_\varphi \mid \mathcal{O}=o) = \frac12$ for all $o$.  In this
  particular case, we also have $V(\ind_\varphi \mid \mathcal{O}=o) =
  \frac12$ iff $\prob(\ind_\varphi=1 \mid \mathcal{O}=o) = \frac12$
  which concludes the proof.
\end{enumerate}
\end{proof}

\subsubsection*{Example 1: Sale protocol.}
We consider the sale protocol from~\cite{alvim10}, depicted in
\figurename~\ref{fig:ebay}.  Two products can be put on sale, either a
cheap or an expensive one, and two clients, either a rich or a poor
one, may want to buy it.  The products are put on sale according to a
distribution ($\alpha$ and $\overline\alpha = 1-\alpha$) while buyers
behave probabilistically (through $\beta$ and $\gamma$) although
differently according to the price of the item on sale.
\begin{figure}[h]
\centering
\begin{tikzpicture}[auto,level distance=6mm,level/.style={sibling distance=0.95*\textwidth/2^#1}]
\tikzstyle{edge from parent}=[draw,->]
\tikzstyle{state}+=[minimum size=3pt]
\node[state] (init) at (0,0) {}
  child {node[state] (c) {}
    child {node[state] (cp) {} edge from parent node[anchor=south east] {poor, $\beta$}}
    child {node[state] (cr) {} edge from parent node[anchor=south west] {rich, $\overline\beta$}}
    edge from parent node[anchor=south east] {cheap,$\alpha$}}
  child {node[state] (e) {}
    child {node[state] (ep) {} edge from parent node[anchor=south east] {poor, $\gamma$}}
    child {node[state] (er) {} edge from parent node[anchor=south west] {rich, $\overline\gamma$}}
    edge from parent node[anchor=south west] {expensive,$\overline\alpha$}};
    
\path[edge from parent] ($(init.north) + (0,0.5)$) -- (init);
\foreach \fnode in {cp,cr,ep,er} {
  \path[edge from parent] (\fnode) -- ($(\fnode.south) - (0,0.5)$);
  \node[anchor=north west] at (\fnode.south) {$\surd,1$};
}
\end{tikzpicture}
\caption{A simple sale protocol represented as an FPA $\mathcal{S}ale$.}
\label{fig:ebay}
\end{figure}
\begin{figure}[h]
\centering
\subfigure[$\vpo(\mathcal{S}ale,\varphi_{\mbox{\scriptsize poor}}, Price)$ when $\alpha=\frac18$]{\label{fig:graphsebayeighth}
\begin{tikzpicture}[scale=4.5]
\node[anchor=south west,inner sep=0pt] at (0,0) {\pgfimage[width=4.5cm]{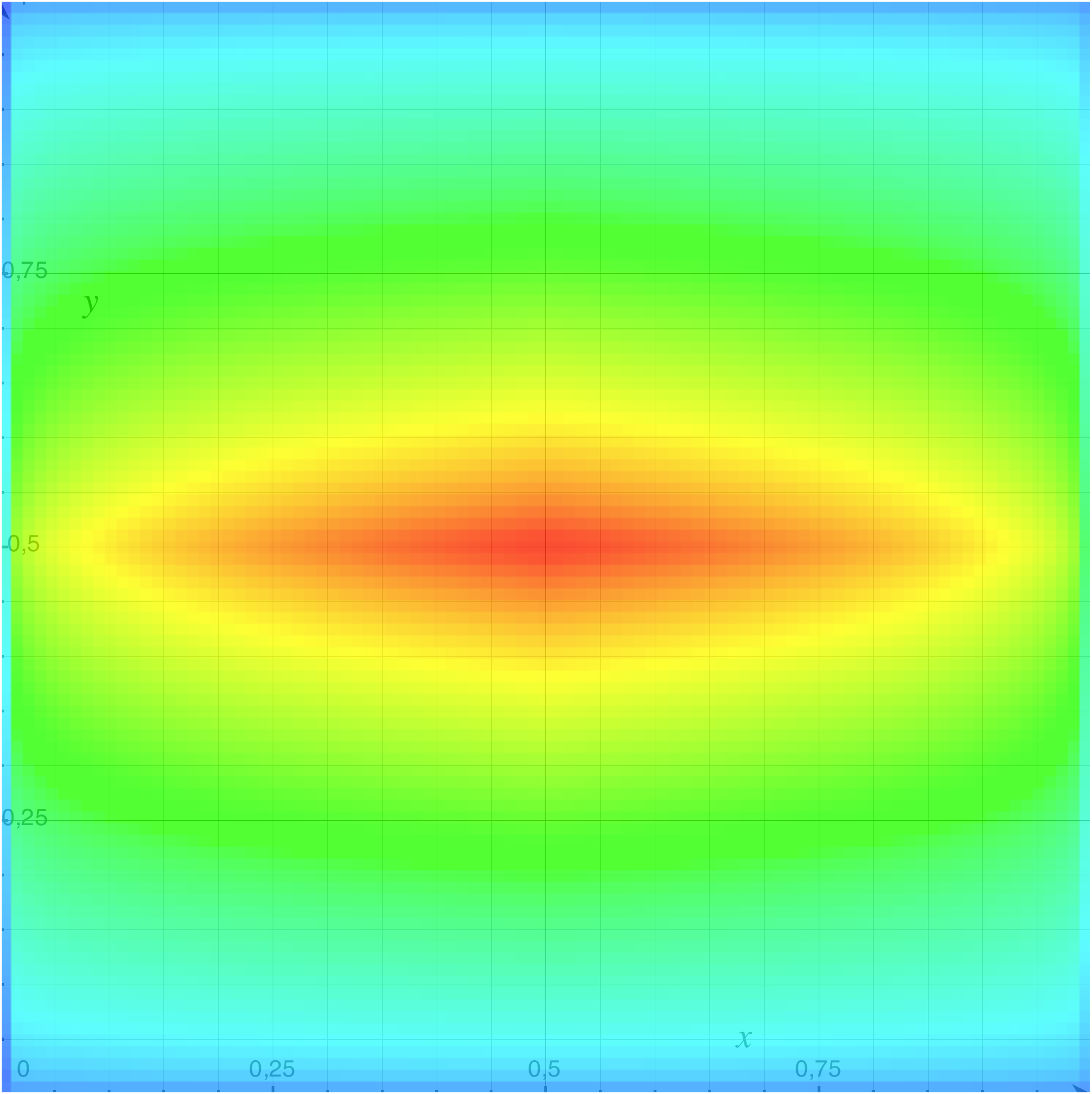}};
\path[->,draw] (0,0) -- (1.1,0) node[anchor=north] {$\beta$};
\path[->,draw] (0,0) -- (0,1.1) node[anchor=east] {$\gamma$};
\foreach \x in {0,0.25,0.5,0.75,1} {%
  \draw (\x,-0.0125) -- (\x,0.0125);
  \node[anchor=north] at (\x,0) {\x};};
\foreach \y in {0,0.25,0.5,0.75,1} {%
  \draw (-0.0125,\y) -- (0.0125,\y);
  \node[anchor=east] at (0,\y) {\y};};
\end{tikzpicture}
}
\subfigure[$\vpo(\mathcal{S}ale,\varphi_{\mbox{\scriptsize poor}}, Price)$ when $\alpha=\frac12$]{\label{fig:graphsebayhalf}
\begin{tikzpicture}[scale=4.5]
\node[anchor=south west,inner sep=0pt] at (0,0) {\pgfimage[width=4.5cm]{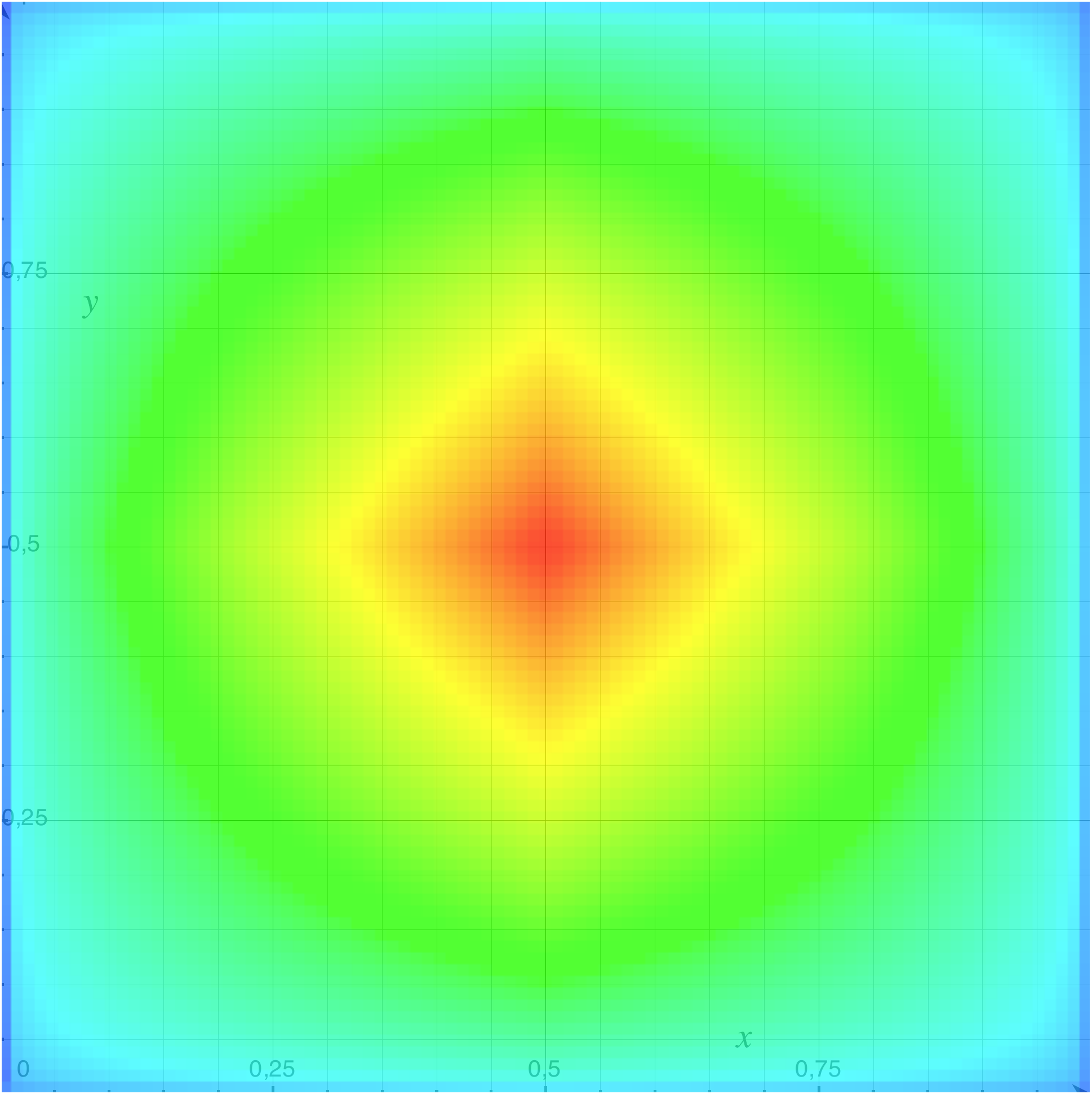}};
\path[->,draw] (0,0) -- (1.1,0) node[anchor=north] {$\beta$};
\path[->,draw] (0,0) -- (0,1.1) node[anchor=east] {$\gamma$};
\foreach \x in {0,0.25,0.5,0.75,1} {%
  \draw (\x,-0.0125) -- (\x,0.0125);
  \node[anchor=north] at (\x,0) {\x};};
\foreach \y in {0,0.25,0.5,0.75,1} {%
  \draw (-0.0125,\y) -- (0.0125,\y);
  \node[anchor=east] at (0,\y) {\y};};
\end{tikzpicture}
}
\caption{\vponame for the sale protocol.}
\label{fig:graphsebay}
\end{figure}
The price of the item is public, but the identity of the buyer should
remain secret.  Hence the observation function $Price$ yields
\emph{cheap} or \emph{expensive}, while the secret is, without loss of
symmetry, the set $\varphi_{\mbox{\scriptsize poor}}$ of runs ending
with \emph{poor}.  The bias introduced by the preference of, say, a
cheap item by the poor client betrays the secret identity of the
buyer.  \vponame allows to measure this bias, and more importantly, to
compare the bias obtained globally for different values of the
parameters $\alpha$, $\beta$, and $\gamma$.

More formally, we have:
\begin{mathpar}
\prob(Price=\mbox{cheap}) = \alpha
\and
\prob(Price=\mbox{expensive}) = \overline\alpha
\and
V(\ind_{\varphi_{\mbox{\scriptsize poor}}} \mid \Obs=\mbox{cheap}) = \max(\beta,\overline\beta)
\and
V(\ind_{\varphi_{\mbox{\scriptsize poor}}} \mid \Obs=\mbox{expensive}) = \max(\gamma,\overline\gamma)
\and
\vpo(\mathcal{S}ale,\varphi_{\mbox{\scriptsize poor}}, Price) = \frac{-1}{\alpha \cdot \log(\min(\beta,\overline\beta)) + \overline\alpha \cdot \log(\min(\gamma,\overline\gamma))}.
\end{mathpar}
The variations of \vponame w.r.t. to $\beta$ and $\gamma$ is depicted for several values of $\alpha$ in
\figurename~\ref{fig:graphsebay}, red meaning higher value for
\vponame. Thus, while the result is symmetric for $\alpha=\frac12$,
the case where $\alpha=\frac18$ gives more importance to the
fluctuations of $\gamma$.

\subsubsection*{Example 2: Dining Cryptographers Protocol.}\label{subsubsec:dcpvpo}
Introduced in~\cite{chaum88}, this problem involves three
cryptographers $C_1$, $C_2$ and $C_3$ dining in a restaurant. At the
end of the meal, their master secretly tells each of them if they
should be paying: $p_i = 1$ iff cryptographer $C_i$ pays, and $p_i =
0$ otherwise.  Wanting to know if one of the cryptographers paid or if
the master did, they follow the following protocol.  They flip a coin
with each of their neighbor, the third one not seeing the result of
the flip, marking $f_{i,j} = 0$ if the coin flip between $i$ and $j$
was heads and $f_{i,j} = 1$ if it was tails.  Then each cryptographer
$C_i$, for $i \in \{1,2,3\}$, announces the value of $r_i = f_{i,i+1}
\oplus f_{i,i-1} \oplus p_i$ (where `$3 + 1=1$',
`$1-1=3$' and `$\oplus$' represents the XOR operator).  If $\bigoplus_{i=1}^3
r_i = 0$ then no one (\emph{i.e.} the master) paid, if
$\bigoplus_{i=1}^3 r_i = 1$, then one of the cryptographers paid, but
the other two do not know who he is.

Here we will use a simplified version of this problem to limit the
size of the model.  We consider that some cryptographer paid for the
meal, and adopt the point of view of $C_1$ who did not pay.  The
anonymity of the payer is preserved if $C_1$ cannot know if $C_2$ or $C_3$
paid for the meal.  In our setting, the predicate $\varphi_2$ is,
without loss of symmetry, ``$C_2$ paid''. 
Note that predicate $\varphi_2$ is well suited for analysis of symmetrical opacity, since detecting that $\varphi_2$ is false gives information on who paid (here $C_3$). The observation
function lets $C_1$ know the results of its coin flips ($f_{1,2}$ and
$f_{1,3}$), and the results announced by the other cryptographers
($r_2$ and $r_3$).  We also assume that the coin used by $C_2$ and $C_3$
has a probability of $q$ to yield heads, and that the master flips a
fair coin to decide if $C_2$ or $C_3$ pays.  It can be assumed that the
coins $C_1$ flips with its neighbors are fair, since it does not affect
anonymity from $C_1$'s point of view.  In order to limit the
(irrelevant) interleaving, we have made the choice to fix the ordering
between the coin flips.

The corresponding FPA $\mathcal{D}$ is depicted on
\figurename~\ref{fig:diningPPA} where all $\surd$ transitions with probability $1$ have been omitted from final (rectangular) states.
On $\mathcal{D}$, the runs
satisfying predicate $\varphi_2$ are the ones where action $p_2$
appears.  The observation function ${\cal O}_1$ takes a run and returns the
sequence of actions over the alphabet $\{h_{1,2}, t_{1,2}, h_{1,3},
t_{1,3}\}$ and the final state reached, containing the value announced
by $C_2$ and $C_3$.

\begin{figure}[h]
\centering
\begin{tikzpicture}[level distance=11mm,level/.style={sibling distance=0.95*\textwidth/2^#1}]
\tikzstyle{level 1}+=[level distance=6mm]
\tikzstyle{level 2}+=[level distance=7mm]
\tikzstyle{level 3}+=[level distance=8mm]
\tikzstyle{level 4}+=[level distance=11mm]
\tikzstyle{textless}=[circle,aux,minimum size=3pt]
\tikzstyle{textfull}=[inner xsep=0pt,draw=auxdraw,font=\tiny,text width=0.75cm,text centered]
\tikzstyle{edgetxt}=[font=\scriptsize]
\tikzstyle{lastleft}=[pos=0.25]
\tikzstyle{lastright}=[pos=0.65]
\tikzstyle{edge from parent}=[draw,->]

\node[textless] (origin) at (0,0) {}
  child {node [textless] (h) {}
    child {node [textless] (hh) {}
      child {node [textless] (hhh) {}
	child {node [textfull] (hhh2) {$r_2=1$ $r_3=0$} edge from parent node[left,edgetxt,lastleft] {$p_2,\frac12$}}
	child {node [textfull] (hhh3) {$r_2=0$ $r_3=1$} edge from parent node[right,edgetxt,lastright] {$p_3,\frac12$}}
      edge from parent node[left,edgetxt] {$h_{2,3},q$}
      }
      child {node [textless] (hht) {}
	child {node [textfull] (hht2) {$r_2=0$ $r_3=1$} edge from parent node[left,edgetxt,lastleft] {$p_2,\frac12$}}
	child {node [textfull] (hht3) {$r_2=1$ $r_3=0$} edge from parent node[right,edgetxt,lastright] {$p_3,\frac12$}}
      edge from parent node[right,edgetxt] {$t_{2,3},1-q$}
      }
    edge from parent node[above left,edgetxt] {$h_{1,3},\frac12$}
    }
    child {node [textless] (ht) {}
      child {node [textless] (hth) {}
	child {node [textfull] (hth2) {$r_2=1$ $r_3=1$} edge from parent node[left,edgetxt,lastleft] {$p_2,\frac12$}}
	child {node [textfull] (hth3) {$r_2=0$ $r_3=0$} edge from parent node[right,edgetxt,lastright] {$p_3,\frac12$}}
      edge from parent node[left,edgetxt] {$h_{2,3},q$}
      }
      child {node [textless] (htt) {}
	child {node [textfull] (htt2) {$r_2=0$ $r_3=0$} edge from parent node[left,edgetxt,lastleft] {$p_2,\frac12$}}
	child {node [textfull] (htt3) {$r_2=1$ $r_3=1$} edge from parent node[right,edgetxt,lastright] {$p_3,\frac12$}}
      edge from parent node[right,edgetxt] {$t_{2,3},1-q$}
      }
    edge from parent node[above right,edgetxt] {$t_{1,3},\frac12$}
    }
  edge from parent node[above left,edgetxt] {$h_{1,2},\frac12$}
  }
  child {node [textless] (t) {}
    child {node [textless] (th) {}
      child {node [textless] (thh) {}
	child {node [textfull] (thh2) {$r_2=0$ $r_3=0$} edge from parent node[left,edgetxt,lastleft] {$p_2,\frac12$}}
	child {node [textfull] (thh3) {$r_2=1$ $r_3=1$} edge from parent node[right,edgetxt,lastright] {$p_3,\frac12$}}
      edge from parent node[left,edgetxt] {$h_{2,3},q$}
      }
      child {node [textless] (tht) {}
	child {node [textfull] (tht2) {$r_2=1$ $r_3=1$} edge from parent node[left,edgetxt,lastleft] {$p_2,\frac12$}}
	child {node [textfull] (tht3) {$r_2=0$ $r_3=0$} edge from parent node[right,edgetxt,lastright] {$p_3,\frac12$}}
      edge from parent node[right,edgetxt] {$t_{2,3},1-q$}
      }
    edge from parent node[above left,edgetxt] {$h_{1,3},\frac12$}
    }
    child {node [textless] (tt) {}
      child {node [textless] (tth) {}
	child {node [textfull] (tth2) {$r_2=0$ $r_3=1$} edge from parent node[left,edgetxt,lastleft] {$p_2,\frac12$}}
	child {node [textfull] (tth3) {$r_2=1$ $r_3=0$} edge from parent node[right,edgetxt,lastright] {$p_3,\frac12$}}
      edge from parent node[left,edgetxt] {$h_{2,3},q$}
      }
      child {node [textless] (ttt) {}
	child {node [textfull] (ttt2) {$r_2=1$ $r_3=0$} edge from parent node[left,edgetxt,lastleft] {$p_2,\frac12$}}
	child {node [textfull] (ttt3) {$r_2=0$ $r_3=1$} edge from parent node[right,edgetxt,lastright] {$p_3,\frac12$}}
      edge from parent node[right,edgetxt] {$t_{2,3},1-q$}
      }
    edge from parent node[above right,edgetxt] {$t_{1,3},\frac12$}
    }
  edge from parent node[above right,edgetxt] {$t_{1,2},\frac12$}
  }
;
\path[edge from parent] ($(origin.north) + (0,0.25)$) -- (origin);
\end{tikzpicture}

\caption{The FPA corresponding to the Dining Cryptographers protocol.}
\label{fig:diningPPA}
\end{figure}

There are 16 possible complete runs in this system, that yield 8 equiprobable
observables:
\[
\begin{array}{r@{}l}
Obs = \{ & (h_{1,2}h_{1,3}(r_2=1,r_3=0)), (h_{1,2}h_{1,3}(r_2=0,r_3=1)),\\
& (h_{1,2}t_{1,3}(r_2=0,r_3=0)), (h_{1,2}t_{1,3}(r_2=1,r_3=1)),\\
& (t_{1,2}h_{1,3}(r_2=0,r_3=0)), (t_{1,2}h_{1,3}(r_2=1,r_3=1)),\\
& (t_{1,2}t_{1,3}(r_2=1,r_3=0)), (t_{1,2}t_{1,3}(r_2=0,r_3=1))
\,\}
\end{array}
\]
Moreover, each observation results in a run in which $C_2$ pays and a
run in which $C_3$ pays, this difference being masked by the secret coin
flip between them.  For example, runs $\rho_h =
h_{1,2}h_{1,3}h_{2,3}p_2(r_2=1,r_3=0)$ and $\rho_t =
h_{1,2}h_{1,3}t_{2,3}p_3(r_2=1,r_3=0)$ yield the same observable $o_0
= h_{1,2}h_{1,3}(r_2=1,r_3=0)$, but the predicate is true in the first
case and false in the second one.  Therefore, if $0 < q < 1$, the
unprobabilistic version of $\mathcal{D}$ is  opaque.
However, if $q \neq \frac12$, for each observable, one of them is
\emph{more likely} to be lying, therefore paying.  In the
aforementioned example, when observing $o_0$, $\rho_h$ has occurred
with probability $q$, whereas $\rho_t$ has occurred with probability
$1-q$.  \vponame can measure this advantage globally.

For each observation class, the vulnerability of $\varphi_2$ is $\max(q,1-q)$.
Hence the \vponame will be
\[\vpo(\mathcal{D},\varphi_2,{\cal O}_1) = \frac{-1}{\log(\min(q,1-q))}\]
The variations of the \vponame when changing the bias on $q$ are depicted in \figurename~\ref{fig:diningVPO}.
Analysis of \vponame according to the variation of $q$ yields that the system is perfectly secure if there is no bias on the coin, and insecure if $q=0$ or $q=1$.

\begin{figure}
\hfill~
% pdflatex --jobname=mscs2012-arxiv-graph-DcpVpo mscs2012-arxiv.tex
\beginpgfgraphicnamed{mscs2012-arxiv-graph-DcpVpo}
\begin{tikzpicture}[yscale=2,xscale=4]
\draw[->] (-2/30,0) -- (1.2,0);
\node[anchor=north] at (1.2,0) {$q$};

\draw[->] (0,-0.1) -- (0,1.2);
\node[anchor=east] at (0,1.2) {$\vpo(\mathcal{D},\varphi_2,{\cal O}_1)$};

\node[anchor=north east] at (0,0) {$0$};

\draw (-1/60,1) -- (1/60,1);
\node[anchor=east] at (-1/60,1) {$1$};

\draw (0.5,-0.025) -- (0.5,0.025);
\node[anchor=north] at (0.5,-0.025) {$\frac12$};
\draw (1,-0.025) -- (1,0.025);
\node[anchor=north] at (1,-0.025) {$1$};

\draw[domain=0:0.9999999999999,very thick,color=auxdraw] plot[id=DcpVpo] function{(-1 * log(2))/ log((x < (1-x)) ? x : (1-x))};
\draw[dashed] (0,1) -- (0.5,1) -- (0.5,0);
\end{tikzpicture}
\endpgfgraphicnamed
\hfill~
\caption{Evolution of the restrictive probabilistic symmetric opacity of the Dining Cryptographers protocol when changing the bias on the coin.}
\label{fig:diningVPO}
\end{figure}
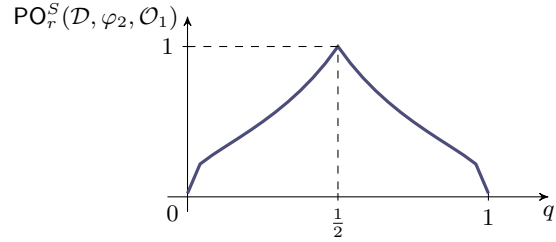

\section{Computing opacity measures}\label{sec:computing}

We now show how all measures defined above can be computed for regular predicates and simple
observation functions. The method relies on a synchronized product
between an SA $\A$ and a deterministic FA $\cal K$, similarly
to~\cite{courcoubetis98}. This product (which can be considered
pruned of its unreachable states and states not reaching a final state)
constrains the unprobabilistic version of
$\A$ by synchronizing it with $\mathcal{K}$.
The probability of $\lang(\mathcal{K})$ is then obtained by solving a system of equations associated with this product.
The computation of all measures results in 
applications of this operation with several automata.

\subsection{Computing the probability of a substochastic automaton}
\label{subsec:bg-computls}
Given an SA $\A$, a system of equations can be derived on the
probabilities for each state to yield an accepting run. This allows to
compute the probability of all complete runs of $\A$ by a technique
similar to those used in~\cite{courcoubetis98,hansson94,bianco95} for
probabilistic verification.

\begin{definition}[Linear system of a substochastic automata]
  Let $\A = \langle \Sigma, Q, \Delta, q_0\rangle$ be a substochastic
  automaton where any state can reach a final state. The {\em linear
    system associated with $\A$} is the following system ${\cal S}_\A$
  of linear equations over $\mathbb{R}$:
\[{\cal S}_\A = \left(X_q  =  \sum_{q' \in Q}  \alpha_{q,q'} X_{q'} 
+ \beta_q\right)_{ q \in Q}\]
\[\textrm{where}\quad\alpha_{q,q'} = \sum_{a \in \Sigma} \Delta(q)(a, q') 
\mbox{ and } \beta_q = \Delta(q)(\surd)\]
\end{definition}

When non-determinism is involved, for instance in Markov Decision
Processes~\cite{courcoubetis98,bianco95}, two systems of inequations
are needed to compute maximal and minimal probabilities. Here, without
non-determinism, both values are the same, hence Lemma
\ref{lem:solsys} is a particular case of the results
in~\cite{courcoubetis98,bianco95}, where uniqueness is ensured by the
hypothesis (any state can reach a final state). The probability can
thus be computed in polynomial time by solving the linear system
associated with the SA.

\begin{lemma}\label{lem:solsys}
  Let $\A = \langle \Sigma, Q, \Delta, q_0\rangle$ be a substochastic
  automaton and define for all $q \in Q$,  $L^\A_q =
    \prob{(CRun_q(\A))}$. Then $(L^\A_q)_{q \in Q}$
  is the unique solution of the system ${\cal S}_\A$.
\end{lemma}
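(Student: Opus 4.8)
The plan is to establish the statement in two independent parts: first that the family $(L^\A_q)_{q\in Q}$ is \emph{a} solution of ${\cal S}_\A$, and then that it is the \emph{unique} one. Throughout, write $M$ for the $Q\times Q$ matrix with entries $M_{q,q'}=\alpha_{q,q'}$ (the same matrix introduced in Section~\ref{subsec:bg-probautomata}) and $\beta=(\beta_q)_{q\in Q}$, so that in vector form ${\cal S}_\A$ reads $X = MX + \beta$.

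For \textbf{existence}, I would decompose $CRun_q(\A)$ according to the first transition of a complete run. A complete run starting in $q$ is either the immediate termination $q\surd$, which contributes $\Delta(q)(\surd)=\beta_q$, or it begins with a labelled transition $q\xrightarrow{a}q'$ followed by a complete run of $CRun_{q'}(\A)$. By the inductive definition of $\prob_\A$, the probability of the latter factors as $\Delta(q)(a,q')\cdot\prob_\A(\rho')$. Since all terms are non-negative, the sum defining $L^\A_q=\sum_{\rho\in CRun_q(\A)}\prob(\rho)$ is absolutely convergent and may be reordered and regrouped by target state $q'$ and action $a$, giving
\[ L^\A_q = \beta_q + \sum_{q'\in Q}\Big(\sum_{a\in\Sigma}\Delta(q)(a,q')\Big) L^\A_{q'} = \beta_q + \sum_{q'\in Q}\alpha_{q,q'}\, L^\A_{q'}, \]
which is precisely the equation of ${\cal S}_\A$ indexed by $q$.

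For \textbf{uniqueness}, it suffices to show $I-M$ is invertible. If $X$ and $Y$ are two solutions, their difference $Z=X-Y$ satisfies $Z=MZ$, hence $Z=M^nZ$ for every $n$; so if $M^n\to 0$ entrywise then $Z=0$ and $X=Y$ (equivalently, $\sum_n M^n$ converges to $(I-M)^{-1}$). Now $M^n_{q,q'}$ is the probability of following $n$ consecutive labelled, non-terminating transitions from $q$ to $q'$, so $s^{(n)}_q:=\sum_{q'}M^n_{q,q'}$ is the probability of \emph{surviving} $n$ steps from $q$ without terminating.

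The \textbf{main obstacle}, and the only place where the hypothesis that every state reaches a final state is used, is showing this survival probability decays. Since $Q$ is finite, from each $q$ there is a positive-probability path of at most $|Q|-1$ edges to some final state $f$ (one with $\beta_f>0$); appending $\surd$ then terminates within $N:=|Q|$ steps with probability at least some $p_q\beta_f>0$. As the events ``survive $N$ steps'' and ``terminate within $N$ steps'' are disjoint, $s^{(N)}_q\le 1-p_q\beta_f$, and taking the maximum over the finitely many states yields a uniform bound $\max_q s^{(N)}_q\le c<1$. The identity $s^{(n+m)}_q=\sum_{q'}M^n_{q,q'}\,s^{(m)}_{q'}$ gives the submultiplicative estimate $\max_q s^{(n+m)}_q\le(\max_q s^{(n)}_q)(\max_q s^{(m)}_q)$, whence $\max_q s^{(mN)}_q\le c^m\to 0$. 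Since $0\le M^n_{q,q'}\le s^{(n)}_q$, we get $M^n\to 0$, which delivers uniqueness and completes the proof.
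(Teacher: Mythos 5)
Your proof is correct, but it is genuinely different in character from what the paper provides: the paper does not prove Lemma~\ref{lem:solsys} at all. It observes that the lemma is a particular case of results on probabilistic verification~\cite{courcoubetis98,bianco95}, where for Markov decision processes two systems of \emph{inequations} characterize maximal and minimal probabilities; without nondeterminism these coincide, and uniqueness is ensured by the hypothesis that any state can reach a final state. Your two steps supply, self-contained, exactly what that citation encapsulates. The existence half --- the first-transition decomposition of $CRun_q(\A)$, with the regrouping justified by non-negativity of all terms --- is the expected computation and matches the paper's definition of $\alpha_{q,q'}$ and $\beta_q$. The uniqueness half is the classical absorbing-chain argument: reachability of a final state from every state gives a uniform bound $\max_q s^{(N)}_q \le c < 1$ on the $N$-step survival probability with $N = |Q|$, submultiplicativity forces $M^n \to 0$, hence $I-M$ is invertible; and you correctly isolate this as the only place the hypothesis is used, which is precisely the paper's remark. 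Two points you leave implicit and should state for completeness: (i) passing from $\max_q s^{(mN)}_q \le c^m \to 0$ to $M^n \to 0$ for \emph{all} $n$ uses that $s^{(n)}_q$ is non-increasing in $n$ (or submultiplicativity together with the trivial bound $s^{(r)}_q \le 1$); and (ii) since a substochastic automaton can lose mass even without the $\surd$ action, the disjointness claim behind $s^{(N)}_q \le 1 - p_q\beta_f$ is cleanest verified algebraically, via $s^{(1)}_{q''} \le 1 - \beta_{q''}$ and hence $s^{(k+1)}_q = \sum_{q''} M^k_{q,q''}\, s^{(1)}_{q''} \le s^{(k)}_q - M^k_{q,f}\,\beta_f$. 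As for what each approach buys: the paper's citation is short and ties the lemma directly to polynomial-time computability by linear-system solving, while your proof makes the lemma independent of the MDP literature and exposes exactly why the reachability hypothesis is indispensable.
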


\subsection{Computing the probability of a regular language}

In order to compute the probability of a language inside a system, we build a substochastic automaton that corresponds to the intersection of the system and the language, then compute the probability as above.

\begin{definition}[Synchronized product]
  Let $\A = \langle \Sigma, Q, \Delta, q_0\rangle$ be a substochastic
  automaton and let ${\cal K} = \langle Q \times \Sigma \times Q, Q_K, \Delta_K, q_K,
  F\rangle$ be a deterministic finite automaton. The synchronized
  product $\A || {\cal K}$ is the substochastic automaton $\langle
  \Sigma, Q \times Q_K, \Delta', (q_0, q_K)\rangle$ where transitions
  in $ \Delta'$ are defined by: if $q_1 \rightarrow \mu \in \Delta$,
  then $(q_1,r_1) \rightarrow \nu \in \Delta'$ where for all $a \in
  \Sigma$ and $(q_2,r_2) \in Q \times Q_K$,
\[\nu(a, (q_2, r_2)) = \left\{
\begin{array}{ll}
\mu(a, q_2)  & \mbox{ if }  r_1 \xrightarrow{q_1,a,q_2} r_2  \in \Delta_K \\
0 & \mbox{ otherwise}
\end{array}
\right.
\]

\[\mbox{and} \quad \nu(\surd) = \left\{
\begin{array}{ll}
  \mu(\surd)  & \mbox{ if }  r_1 \in F \\
  0 & \mbox{ otherwise}
\end{array}
\right.
\]
\end{definition}
In this synchronized product, the behaviors are constrained by the
finite automaton.  Actions not allowed by the automaton are trimmed,
and states can accept only if they correspond to a valid behavior of
the DFA.  Note that this product is defined on SA in order to allow
several intersections.  The correspondence between the probability of
a language in a system and the probability of the synchronized product
is laid out in the following lemma.
\begin{lemma} \label{easylemma}
  Let $\A = \langle \Sigma, Q, \Delta, q_0\rangle$ be an SA and $K$
  a regular language over $Q \times \Sigma \times Q$ accepted by a deterministic finite
  automaton $\mathcal{K}=\langle Q \times \Sigma \times Q, Q_K, \Delta_K, q_K,
  F\rangle$. Then
\[\prob_\A{(K) } = L^{\A || {\cal K}}_{(q_0, q_K)}\]
\end{lemma}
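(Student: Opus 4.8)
The plan is to exhibit a probability-preserving bijection between the complete runs of the product $\A \| \mathcal{K}$ issued from $(q_0,q_K)$ and the complete runs $\rho$ of $\A$ whose sequence of transitions forms a word of $K$; summing probabilities on both sides then yields the claim through Lemma~\ref{lem:solsys}. First I would fix notation: to a complete run $\rho = q_0 \xrightarrow{a_1} q_1 \cdots \xrightarrow{a_n} q_n \surd$ of $\A$ I associate the word $w(\rho) = (q_0,a_1,q_1)(q_1,a_2,q_2)\cdots(q_{n-1},a_n,q_n)$ over $Q \times \Sigma \times Q$, and read $\prob_\A(K)$ as $\sum_{\rho \in CRun(\A),\, w(\rho)\in K} \prob_\A(\rho)$, consistently with the way the trace measure was extended to languages (here the ``trace'' is taken over transitions rather than over $\Sigma$).

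Next I would define the projection $\pi$ that erases the $\mathcal{K}$-component of a run of the product. From the definition of $\Delta'$, a complete run of $\A \| \mathcal{K}$ has the shape $(q_0,r_0) \xrightarrow{a_1} (q_1,r_1) \cdots \xrightarrow{a_n}(q_n,r_n)\surd$ with $r_0 = q_K$, where each step requires $r_{i-1} \xrightarrow{q_{i-1},a_i,q_i} r_i \in \Delta_K$ and where the final $\surd$ is enabled only when $r_n \in F$. Erasing the second components gives $\pi(\sigma) = q_0 \xrightarrow{a_1} q_1 \cdots \xrightarrow{a_n} q_n \surd$; this is a genuine complete run of $\A$ because the equalities $\nu(a_i,(q_i,r_i)) = \mu(a_i,q_i)$ and $\nu(\surd) = \mu(\surd)$ force the corresponding $\A$-transitions (and the terminal $\surd$) to have non-zero probability. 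The accepting condition $r_n \in F$ together with $r_0 = q_K$ says exactly that the (unique) $\mathcal{K}$-run on $w(\pi(\sigma))$ is accepting, i.e. $w(\pi(\sigma)) \in \lang(\mathcal{K}) = K$.

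The heart of the argument, and the step I expect to be the main obstacle, is showing that $\pi$ is a bijection onto $\{\rho \in CRun(\A) \mid w(\rho) \in K\}$, which is precisely where determinism of $\mathcal{K}$ is indispensable. For injectivity, once the $\A$-run $\rho$ is fixed, the sequence of $\mathcal{K}$-states $r_0 = q_K, r_1, \ldots, r_n$ is uniquely determined by reading the letters $(q_{i-1},a_i,q_i)$ of $w(\rho)$ deterministically in $\mathcal{K}$; hence $\sigma$ is entirely recoverable from $\pi(\sigma)$. For surjectivity, every $\rho$ with $w(\rho) \in K$ lifts along this unique $\mathcal{K}$-path to exactly one product run, and that lift is a \emph{complete} run precisely because acceptance gives $r_n \in F$. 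Were $\mathcal{K}$ nondeterministic, a single $\rho$ could lift to several product runs, so $\prob_\A(\rho)$ would be counted several times and the identity would fail; determinism rules this out.

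Finally I would check probability preservation by induction on run length: since every factor matches ($\nu(a_i,(q_i,r_i)) = \mu(a_i,q_i)$ for the action steps and $\nu(\surd) = \mu(\surd)$ for the terminal step), the inductive definition of $\prob_{\A\|\mathcal{K}}$ yields $\prob_{\A\|\mathcal{K}}(\sigma) = \prob_\A(\pi(\sigma))$. Combining the bijection with this equality, and invoking Lemma~\ref{lem:solsys} in the form $L^{\A\|\mathcal{K}}_{(q_0,q_K)} = \prob_{\A\|\mathcal{K}}\big(CRun_{(q_0,q_K)}(\A\|\mathcal{K})\big) = \sum_{\sigma} \prob_{\A\|\mathcal{K}}(\sigma)$, I obtain
\[
L^{\A\|\mathcal{K}}_{(q_0,q_K)} \;=\; \sum_{\substack{\rho \in CRun(\A)\\ w(\rho) \in K}} \prob_\A(\rho) \;=\; \prob_\A(K),
\]
which is the desired equality.
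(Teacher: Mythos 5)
Your proof is correct and follows essentially the same route as the paper's: a probability-preserving one-to-one correspondence between complete runs of $\A \| \mathcal{K}$ and complete runs of $\A$ whose transition-word lies in $K$ (with determinism of $\mathcal{K}$ guaranteeing uniqueness of the lift), followed by summation and an appeal to Lemma~\ref{lem:solsys}. If anything, you are more explicit than the paper on two points it glosses over — that the alphabet of $\mathcal{K}$ consists of transitions $(q,a,q')$ rather than actions, and exactly where determinism is needed for the correspondence to be bijective — so there is nothing to fix.
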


\begin{proof}
  Let $\rho \in CRun(\A)$ with $\trace(\rho) \in K$ and $\rho = q_0
  \xrightarrow{a_1} q_1 \cdots \xrightarrow{a_n} q_n \surd$.  Since
  $\trace(\rho) \in K$ and $\mathcal{K}$ is deterministic, there is a
  unique run $\rho_K = q_K \xrightarrow{a_1} r_1 \cdots
  \xrightarrow{a_n} r_n$ in $\mathcal{K}$ with $r_n \in F$.  Then the
  sequence $\rho' = (q_0,q_K) \xrightarrow{a_1} (q_1,r_1) \cdots
  \xrightarrow{a_n} (q_n,r_n)$ is a run of $\A || {\cal K}$.
  There is a one-to one-match between runs of $\A || {\cal K}$ and
  pairs of runs in $\A$ and $\mathcal{K}$ with the same trace.
  Moreover,
\begin{eqnarray*}
  \prob_{\A || {\cal K}}(\rho') &=& \Delta'(q_0,q_K)(a_1,(q_1,r_1)) \times \cdots \times \Delta'(q_n,r_n)(\surd) \\
  & = & \Delta(q_0)(a_1,q_1) \times \cdots \times \Delta(q_n)(\surd) \\
  \prob_{\A || {\cal K}}(\rho') &=& \prob_\A(\rho).
\end{eqnarray*}

Hence
\[
\prob_\A(K) \ =\  \sum_{\{\rho \mid  \trace(\rho)\in K\}} \prob_\A(\rho) \ =\   \sum_{\rho' \in Run(\A || {\cal K})} \prob_{\A || {\cal K}}(\rho') \ =\   \prob_{\A || {\cal K}}(Run(\A || {\cal K}))
\]
and therefore from Lemma~\ref{lem:solsys},
$\prob_\A(K) = L^{\A || {\cal K}}_{(q_0, q_0')}$.
\end{proof}

\subsection{Computing all opacity measures}

All measures defined previously can be computed as long as, for $i \in \{0,1\}$ and $o \in Obs$, all probabilities
\begin{mathpar}
\prob(\ind_\varphi = i)
\and
\prob(\Obs=o)
\and
\prob(\ind_\varphi = i,\Obs=o)
\end{mathpar}
can be computed.
Indeed, even deciding whether $\Obs^{-1}(o) \subseteq \varphi$ can be done by testing $\prob(\Obs=o)>0 \wedge \prob(\ind_\varphi=0,\Obs=o)=0$.

Now suppose $Obs$ is a finite set, $\varphi$ and all $\Obs^{-1}(o)$ are regular sets.
Then one can build deterministic finite automata $\A_\varphi$, $\A_{\overline\varphi}$, $\A_o$ for $o \in Obs$ that accept respectively $\varphi$, $\overline\varphi$, and $\Obs^{-1}(o)$.

 Synchronizing automaton $\A_\varphi$
  with $\A$ and pruning it yields a substochastic automaton
  $\A||\A_\varphi$.  By Lemma~\ref{easylemma}, the probability
  $\prob(\ind_\varphi=1)$ is then computed by solving the linear system
  associated with $\A||\A_\varphi$.
  Similarly, one obtain $\prob(\ind_\varphi = 0)$ (with $\A_{\overline\varphi}$), $\prob(\Obs=o)$ (with $\A_o$), $\prob(\ind_\varphi=1,\Obs=o)$ (synchronizing $\A||\A_\varphi$ with $\A_o$), and $\prob(\ind_\varphi=0,\Obs=o)$ (synchronizing $\A||\A_{\overline\varphi}$ with $\A_o$).

\begin{theorem}\label{thm:computing}
Let $\A$ be an FPA. If $Obs$ is a finite set, $\varphi$ is a regular set and for $o \in Obs$, $\Obs^{-1}(o)$ is a regular set, then for $\genericpo \in \{\lpo,\lpso,\hpo,\vpo\}$, $\genericpo(\A,\varphi,\Obs)$ can be computed.
\end{theorem}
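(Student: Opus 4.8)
The plan is to reduce the computation of every one of the four quantities $\genericpo(\A,\varphi,\Obs)$ to finitely many applications of Lemma~\ref{easylemma}. Inspecting the definitions of \lponame, \lpsoname, \hponame and \vponame, each measure is a fixed arithmetic expression --- a sum, a quotient, a maximum, or a composition with $\log$ --- in the three families of numbers $\prob(\ind_\varphi=i)$, $\prob(\Obs=o)$ and $\prob(\ind_\varphi=i,\Obs=o)$ for $i\in\{0,1\}$ and $o\in Obs$. Since $Obs$ is finite, only finitely many such numbers enter each expression, so it suffices to compute each of them and then evaluate the expression by real arithmetic.

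First I would construct the automata. As $\varphi$ and every $\Obs^{-1}(o)$ are regular languages over $Q\times\Sigma\times Q$ (that is, regular sets of runs), we obtain deterministic finite automata $\A_\varphi$ and $\A_o$, and, using closure of regular languages under complementation, an automaton $\A_{\overline\varphi}$ for $\overline\varphi$. Applying Lemma~\ref{easylemma} then gives $\prob(\ind_\varphi=1)=\prob_\A(\varphi)=L^{\A||\A_\varphi}_{(q_0,q_K)}$, and likewise $\prob(\ind_\varphi=0)$ from $\A||\A_{\overline\varphi}$ and $\prob(\Obs=o)$ from $\A||\A_o$; each value is read off the unique solution of the associated linear system by Lemma~\ref{lem:solsys}, after pruning the states that cannot reach a final state so that its hypothesis holds. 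For the joint probabilities I would use that the synchronized product of an SA with a DFA is again an SA --- which is precisely why the product was defined on SA: synchronizing $\A||\A_\varphi$ further with $\A_o$ yields an SA whose total probability is $\prob(\ind_\varphi=1,\Obs=o)=\prob_\A(\varphi\cap\Obs^{-1}(o))$, and symmetrically $\A||\A_{\overline\varphi}$ with $\A_o$ gives $\prob(\ind_\varphi=0,\Obs=o)$.

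With these numbers available, each measure is assembled directly. For \lponame one must decide, for each $o$, whether $\Obs^{-1}(o)\subseteq\varphi$; as already noted this is the test $\prob(\Obs=o)>0 \wedge \prob(\ind_\varphi=0,\Obs=o)=0$, so both \lponame and \lpsoname are finite sums of computed probabilities. The conditional quantities feeding \hponame and \vponame are the quotients $\prob(\ind_\varphi=i\mid\Obs=o)=\prob(\ind_\varphi=i,\Obs=o)/\prob(\Obs=o)$, which are well defined: since $\Obs$ is surjective and every complete run of an FPA carries strictly positive probability, $\prob(\Obs=o)>0$ for each $o\in Obs$. Then $V(\ind_\varphi\mid\Obs=o)$ is the maximum of two such quotients, and the defining formulas for \hponame and \vponame are evaluated termwise over the finite set $Obs$.

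I do not expect a genuine obstacle here; the mathematical content lies entirely in the already established Lemmas~\ref{lem:solsys} and~\ref{easylemma}, whose linear systems are even solvable in polynomial time, and finiteness of $Obs$ bounds the number of products and complementations to be formed. The one point deserving a word is the meaning of ``computed'' for \vponame: because of the $\log$, $\vpo(\A,\varphi,\Obs)$ is in general irrational even when $\A$ has rational transition probabilities, so it can be produced only to arbitrary precision, whereas $\lpo$, $\lpso$ and $\hpo$ remain rational in that case.
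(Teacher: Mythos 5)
Your proposal is correct and takes essentially the same route as the paper: construct DFAs for $\varphi$, $\overline\varphi$ and each $\Obs^{-1}(o)$, compute the marginal and joint probabilities $\prob(\ind_\varphi=i)$, $\prob(\Obs=o)$, $\prob(\ind_\varphi=i,\Obs=o)$ via synchronized products with $\A$ and Lemma~\ref{easylemma}, and assemble each of the four measures as a finite arithmetic expression over these values, using the test $\prob(\Obs=o)>0 \wedge \prob(\ind_\varphi=0,\Obs=o)=0$ to decide class inclusion. Your two additional remarks --- that the conditionals are well defined because $\prob(\Obs=o)>0$ for every $o$, and that \vponame is in general irrational and hence only computable to arbitrary precision because of the logarithm --- are sound refinements of points the paper leaves implicit.
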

The computation of opacity measures is done in polynomial time in the size of $Obs$ and DFAs $\A_\varphi$, $\A_{\overline\varphi}$, $\A_o$.

A prototype tool implementing this algorithm was developed in Java~\cite{eftenie10}, yielding numerical values for measures of opacity.

\section{Comparison of the measures of opacity}\label{sec:comparison}

In this section we compare the discriminating power of the measures
discussed above. As described above, the liberal measures evaluate the
leak, hence $0$ represents the best possible value from a security
point of view, producing an opaque system. For such an opaque system,
the restrictive measure evaluate the robustness of this opacity. As a
result, $1$ is the best possible value.

\subsection{Abstract examples}
The values of these metrics are first compared for extremal cases of
\figurename~\ref{fig:comprepartition}.  These values are displayed in
\tablename~\ref{tab:compvalues}.

\tikzstyle{abstractexample}=[scale=0.825]
\tikzstyle{abstractexample}=[scale=0.5]
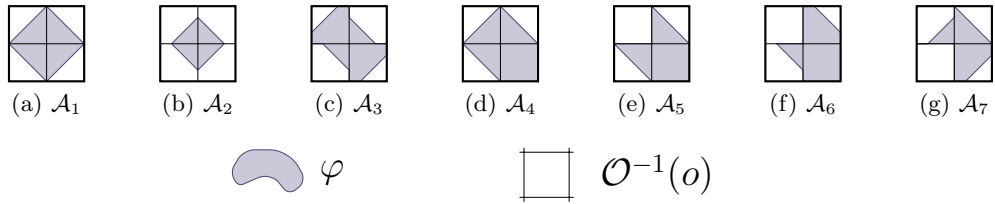
\begin{figure}[h]
\centering
\hfill{~}
\subfigure[$\A_{\arabic{subfigure}}$]{\label{fig:compperfect}
\begin{tikzpicture}[abstractexample]
\path[phee] (1,0) -- (0,1) -- (1,2) -- (2,1) -- cycle;
\draw[very thin] (0,0) grid (2,2);
\draw[thick] (0,0) -- (0,2) -- (2,2) -- (2,0) -- cycle;
\end{tikzpicture}
}
\hfill{~}
\subfigure[$\A_{\arabic{subfigure}}$]{\label{fig:comprpoperfect}
\begin{tikzpicture}[abstractexample]
\path[phee] (1,0.3) -- (0.3,1) -- (1,1.7) -- (1.7,1) -- cycle;
\draw[very thin] (0,0) grid (2,2);
\draw[thick] (0,0) -- (0,2) -- (2,2) -- (2,0) -- cycle;
\end{tikzpicture}
}
\hfill{~}
\subfigure[$\A_{\arabic{subfigure}}$]{\label{fig:compvpouseless}
\begin{tikzpicture}[abstractexample]
\path[phee] (1,0.3) -- (0.3,1) -- (0,1) -- (0,1.3) --(0.7,2) -- (1,2) -- (1,1.7) -- (1.7,1) -- (2,1) -- (2,0.7) -- (1.3,0) -- (1,0) -- cycle;
\draw[very thin] (0,0) grid (2,2);
\draw[thick] (0,0) -- (0,2) -- (2,2) -- (2,0) -- cycle;
\end{tikzpicture}
}
\hfill{~}
\subfigure[$\A_{\arabic{subfigure}}$]{\label{fig:compvperfect}
\begin{tikzpicture}[abstractexample]
\path[phee] (1,0) -- (0,1) --  (1,2)  -- (2,1) -- (2,0) -- cycle;
\draw[very thin] (0,0) grid (2,2);
\draw[thick] (0,0) -- (0,2) -- (2,2) -- (2,0) -- cycle;
\end{tikzpicture}
}
% Middle
\hfill{~}
\subfigure[$\A_{\arabic{subfigure}}$]{\label{fig:complpo}
\begin{tikzpicture}[abstractexample]
\path[phee] (1,0) -- (0,1) -- (1,1) --  (1,2)  -- (2,1) -- (2,0) -- cycle;
\draw[very thin] (0,0) grid (2,2);
\draw[thick] (0,0) -- (0,2) -- (2,2) -- (2,0) -- cycle;
\end{tikzpicture}
}
\hfill{~}
\subfigure[$\A_{\arabic{subfigure}}$]{\label{fig:comprpo}
\begin{tikzpicture}[abstractexample]
\path[phee] (1,0.3) -- (0.3,1) -- (1,1) --  (1,2)  -- (1.3,2) -- (2,1.3) -- (2,1) -- (2,0) -- (1,0) -- cycle;
\draw[very thin] (0,0) grid (2,2);
\draw[thick] (0,0) -- (0,2) -- (2,2) -- (2,0) -- cycle;
\end{tikzpicture}
}
\hfill{~}
\subfigure[$\A_{\arabic{subfigure}}$]{\label{fig:compweird}
\begin{tikzpicture}[abstractexample]
\path[phee] (1,0) -- (1,1) -- (0,1) -- (0.3,1) -- (1,1.7) -- (1,2) -- (2,1) -- (2,0.7) -- (1.3,0) -- cycle ;
\draw[very thin] (0,0) grid (2,2);
\draw[thick] (0,0) -- (0,2) -- (2,2) -- (2,0) -- cycle;
\end{tikzpicture}
}

\subfigure{\label{fig:complegende}
\begin{tikzpicture}[scale=3/5]
\node (pheestart) at (0,0) {};
\path[phee] (pheestart) [rounded corners=3pt]-- +(-0.3,0.5) [rounded corners=4pt]-- +(0,1) [rounded corners=7pt]-- +(1,1) [rounded corners=4pt]-- +(1.4,0.3) [rounded corners=3pt]-- +(1,0) [rounded corners=5pt]-- +(0.6,0.5) [rounded corners=2pt]-- cycle;
\node[anchor=west] at ($(pheestart) + (1.5,0.5)$) {\Large$\varphi$};
\end{tikzpicture}
\hspace{2cm}
\begin{tikzpicture}[scale=3/5]
\draw[very thin] (-0.1,-0.1) grid (1.1,1.1);
\node[anchor=west] at (1.5,0.5) {\Large${\cal O}^{-1}(o)$};
\end{tikzpicture}
}
\caption{Example of repartition of probabilities of $\ind_\varphi$ and
  $\cal O$ in 7 cases.}
\label{fig:comprepartition}
\end{figure}

\begin{table}
\centering
\newcounter{system}
\begin{tabular}{|c||c|c|c|c|c|c|c|}
\hline System &
\stepcounter{system}\subref{fig:compperfect} $\A_{\thesystem}$ &
\stepcounter{system}\subref{fig:comprpoperfect} $\A_{\thesystem}$ &
\stepcounter{system}\subref{fig:compvpouseless} $\A_{\thesystem}$ &
\stepcounter{system}\subref{fig:compvperfect} $\A_{\thesystem}$ &
\stepcounter{system}\subref{fig:complpo} $\A_{\thesystem}$ &
\stepcounter{system}\subref{fig:comprpo} $\A_{\thesystem}$ &
\stepcounter{system}\subref{fig:compweird} $\A_{\thesystem}$ \\ 
\hline \lponame & $0$ & $0$ & $0$ & $\frac14$ & $\frac14$ & $\frac14$ & $0$ \\ 
\hline \lpsoname & $0$ & $0$ & $0$ & $\frac14$ & $\frac12$ & $\frac12$ & $\frac14$ \\ 
\hline \hponame & $\frac12$ & $\frac34$ & $\frac38$ & $0$ & $0$ & $0$ & $\frac{12}{25}$ \\ 
\hline \vponame & $1$ & $\frac12$ & $\frac12$ & $0$ & $0$ & $0$ & $0$ \\ 
\hline 
\end{tabular} 
\caption{Values of the different opacity measures for systems of \figurename~\ref{fig:compperfect}-\subref{fig:compweird}.}
\label{tab:compvalues}
\end{table}

First, the system $\A_1$ of \figurename~\ref{fig:compperfect} is
intuitively very secure since, with or without observation, an
attacker has no information whether $\varphi$ was true or not.  This
security is reflected in all symmetrical measures, with highest scores
possibles in all cases.  It is nonetheless deemed more insecure for
\hponame, since opacity is perfect when $\varphi$ is always false.

The case of $\A_2$ of \figurename~\ref{fig:comprpoperfect} only
differs from $\A_1$ by the
global repartition of $\varphi$ in $Run(\A)$.  The information an
attacker gets comes not from the observation, but from $\varphi$
itself.  Therefore \vponame, which does not remove the information
available before observation, evaluates this system as less secure
than $\A_1$.  Measures based on information
theory~\cite{smith09,berard10} would consider this system as secure.
However, such measures lack strong ties with opacity, which depend
only on the information available to the observer, wherever this
information comes from. In
addition, \hponame finds $\A_2$ more secure than $\A_1$: $\varphi$ is
verified less often.  Note that the complement would not change the
value for symmetrical measures, while being insecure for \hponame
(with $\hpo=\frac14$).

However, since each observation class is considered individually,
\vponame does not discriminate $\A_2$ and $\A_3$ of
\figurename~\ref{fig:compvpouseless}.  Here, the information is the
same in each observation class as for $\A_2$, but the repartition of
$\varphi$ gives no advantage at all to an attacker without
observation.

When the system is not opaque (resp. symmetrically opaque), \hponame
(resp. \vponame) cannot discriminate them, and \lponame
(resp. \lpsoname) becomes relevant.  For example, $\A_4$ is not
opaque for the classical definitions, therefore $\hpo = \vpo = 0$ and
both $\lpo>0$ and $\lpso > 0$.

System, $\A_5$ of \figurename~\ref{fig:complpo} has a greater $\lpso$
than $\A_4$.  However, \lponame is unchanged since the class
completely out of $\varphi$ is not taken into account.  Remark that
system $\A_7$ is opaque but not symmetrically opaque, hence the
relevant measures are $\lpso$ and $\hpo$.  Also note that once a
system is not opaque, the repartition of classes that do not leak
information is not taken into account, hence equal values in the cases
of $\A_5$ and $\A_6$.

\subsection{A more concrete example}

Consider the following programs $P_1$ and $P_2$, inspired
from~\cite{smith09}, where $k$ is a given parameter, \texttt{random}
select uniformly an integer value (in binary) between its two
arguments and $\&$ is the bitwise \emph{and}:

\hfill{~}
\begin{tabular}{cl}
$P_1$:
& $H:=$ \texttt{random}$(0,2^{8k}-1)$;\\
& if $H \mod 8 = 0$ then \\
&\quad $L:=H$\\
& else\\
& \quad $L:=-1$\\
& fi
\end{tabular}
\hfill{~}\hfill{~}
\begin{tabular}{cl}
$P_2$: %\\
& $H:=$ \texttt{random}$(0,2^{8k}-1)$;\\
& $L:= H \ \&\  0^{7k}1^{k}$
\end{tabular}
\hfill{~}

In both cases, the value of $H$, an integer over $8k$ bits, is
 supposed to remain secret, and
cannot be observed directly, while the value of $L$ is public. Thus 
the observation is the ``$L:=\ldots$'' action. Intuitively, $P_1$
divulges the exact value of $H$ with probability $\frac18$.  On the
other hand, $P_2$ leaks the value of one eighth of its bits (the least
significant ones) at every execution.  

These programs can be translated into FPAs $\A_{P_1}$ and $\A_{P_2}$ of
\figurename~\ref{fig:exsmithfpa}.  In order to have a boolean
predicate, the secret is not the value of variable $H$, but whether
$H=L$: $\varphi_= = \left\{(H=x)(L=x) \mid x \in
  \{0,\dots,2^{8k}-1\}\right\}$. Non opacity then means that the
attacker discovers the secret value. Weaker predicates can also be
considered, like equality of $H$ with a particular value or $H$
belonging to a specified subset of values, but we chose the simplest
one.
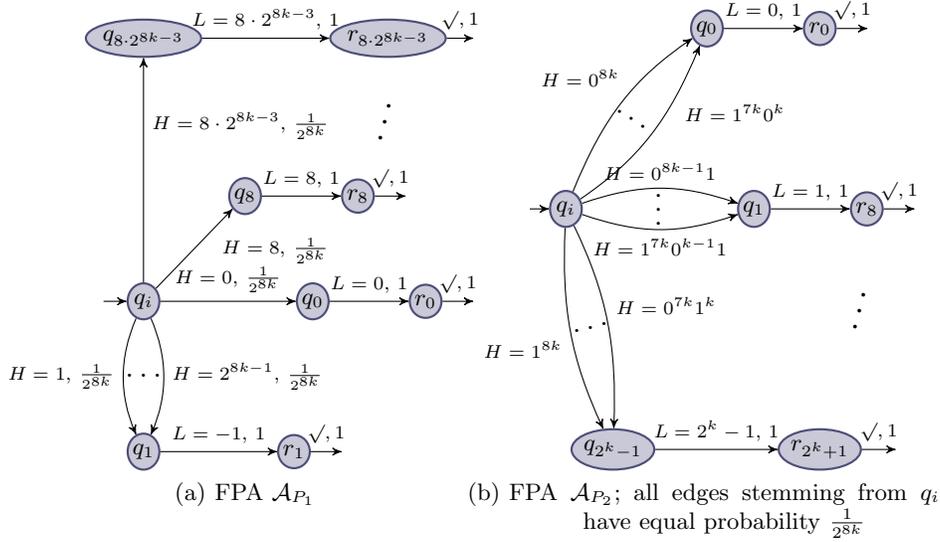
\begin{figure}
\centering
\subfigure[FPA $\A_{P_1}$]{\label{fig:fpasmith1}
\begin{tikzpicture}[auto,xscale=0.75]
\tikzstyle{every node}+=[font=\scriptsize]
\tikzstyle{every state}+=[ellipse,minimum width=12pt,minimum height=10pt,thick,inner xsep=0pt,inner ysep=2pt,font=\small]

\node[state,initial] (q0) at (0,0) {$q_i$};
\node[state] (q1) at (3,0) {$q_0$};
\node[state] (q3) at (0,3.5) {$q_{8 \cdot 2^{8k-3}}$};
\node[state] (q2) at (barycentric cs:q1=3,q3=2) {$q_8$};
\node[state] (q4) at (0,-2) {$q_1$};

\path[->] (q0) edge node {$H=0$, $\frac1{2^{8k}}$} (q1);
\path[->] (q0) edge node[swap,pos=0.75] {$H=8$, $\frac1{2^{8k}}$} (q2);
\path[->] (q0) edge node[pos=0.7,swap] {$H=8 \cdot 2^{8k-3}$, $\frac1{2^{8k}}$} (q3);
\path[->] (q0) edge [bend right] node [swap] (lab1) {$H=1$, $\frac1{2^{8k}}$} (q4);
\path[->] (q0) edge [bend left] node (lab2) {$H=2^{8k-1}$, $\frac1{2^{8k}}$} (q4);
\path (lab1) -- node[sloped,anchor=base] {\large$\ldots$} (lab2);

\tikzstyle{response}=[node distance=1.5cm]
\tikzstyle{ph}=[node distance=0.75cm]
\node[state] (r1) [response,right of=q1] {$r_0$};
\node (ph1) [ph,right of=r1] {};
\path[->] (q1) edge node {$L=0$, $1$} (r1);
\path[->] (r1) edge node {$\surd,1$} (ph1);

\node[state] (r2) [response,right of=q2] {$r_8$};
\node (ph2) [ph,right of=r2] {};
\path[->] (q2) edge node {$L=8$, $1$} (r2);
\path[->] (r2) edge node {$\surd,1$} (ph2);

\node[state] (r3) [node distance=3.25cm,right of=q3] {$r_{8 \cdot 2^{8k-3}}$};
\node (ph3) [node distance=1.25cm,right of=r3] {};
\path[->] (q3) edge node {$L=8 \cdot 2^{8k-3}$, $1$} (r3);
\path[->] (r3) edge node {$\surd,1$} (ph3);

\path (r3) -- node[sloped,pos=0.75] {\Large$\ldots$} (r2);

\node[state] (r4) [response,node distance=2cm,right of=q4] {$r_1$};
\node (ph4) [ph,right of=r4] {};
\path[->] (q4) edge node {$L=-1$, $1$} (r4);
\path[->] (r4) edge node {$\surd,1$} (ph4);
\end{tikzpicture}
}
\hspace{-0.625cm}
\subfigure[FPA $\A_{P_2}$; all edges stemming from $q_i$ have equal probability $\frac1{2^{8k}}$]{\label{fig:fpasmith2}
\begin{tikzpicture}[auto,xscale=0.625,yscale=0.8]
\tikzstyle{every node}+=[font=\scriptsize]
\tikzstyle{every state}+=[ellipse,minimum width=12pt,minimum height=10pt,thick,inner xsep=0pt,inner ysep=2pt,font=\small]

\node[state,initial] (q0) at (0,0) {$q_i$};
\node[state] (q1) at (3,3) {$q_0$};
\node[state] (q3) at (1,-4) {$q_{2^k-1}$};
\node[state] (q2) at (4,0) {$q_1$};

\path[->] (q0) edge [bend left=17.5] node[pos=0.55] (lab1a) {$H=0^{8k}$} (q1);
\path[->] (q0) edge [bend right=17.5] node[swap,pos=0.5] (lab1b) {} node[swap,pos=0.75] {$H=1^{7k}0^k$} (q1);
\path[->] (q0) edge [bend left=15] node (lab2a) {$H=0^{8k-1}1$} (q2);
\path[->] (q0) edge [bend right=15] node[swap] (lab2b) {$H=1^{7k}0^{k-1}1$} (q2);
\path[->] (q0) edge [bend left=17.5] node (lab3a) {$H=0^{7k}1^k$} (q3);
\path[->] (q0) edge [bend right=17.5] node[swap] (lab3b) {$H=1^{8k}$} (q3);

\path (lab1a) -- node[sloped,pos=0.6,anchor=base] {\large$\ldots$} (lab1b);
\path (lab2a) -- node[sloped,anchor=base] {\large$\ldots$} (lab2b);
\path (lab3a) -- node[sloped,anchor=base] {\large$\ldots$} (lab3b);

\tikzstyle{response}=[node distance=1.5cm]
\tikzstyle{ph}=[node distance=0.75cm]
\node[state] (r1) [response,right of=q1] {$r_0$};
\node (ph1) [ph,right of=r1] {};
\path[->] (q1) edge node {$L=0$, $1$} (r1);
\path[->] (r1) edge node {$\surd,1$} (ph1);

\node[state] (r2) [response,right of=q2] {$r_8$};
\node (ph2) [ph,right of=r2] {};
\path[->] (q2) edge node {$L=1$, $1$} (r2);
\path[->] (r2) edge node {$\surd,1$} (ph2);

\node[state] (r3) [node distance=2.75cm,right of=q3] {$r_{2^k+1}$};
\node (ph3) [node distance=1.125cm,right of=r3] {};
\path[->] (q3) edge node {$L=2^k-1$, $1$} (r3);
\path[->] (r3) edge node {$\surd,1$} (ph3);

\path (r2) -- node [sloped,pos=0.55] {\Large\dots} (r3);
\end{tikzpicture}
}
\caption{FPAs for programs $P_1$ and $P_2$.}
\label{fig:exsmithfpa}
\end{figure}
First remark that $\varphi_=$ is not opaque on $P_1$ in the classical
sense (both symmetrically or not).  Hence both \hponame and \vponame
are null.  On the other hand, $\varphi_=$ is opaque on $P_2$, hence
\lponame and \lpsoname are null.  The values for all measures are
gathered in \tablename~\ref{tab:smithexresults}.
\begin{table}
\centering
\begin{tabular}{|c|c|c|c|c|c|c|}
Program & $\lpo$ & $\lpso$ & $\hpo$ & $\vpo$ \\\hline\hline
$P_1$ & $\frac18$ & $1$ & $0$ & $0$ \\\hline
$P_2$ & $0$ & $0$ & $1-\frac1{2^{7k}}$ & $\frac1{7k}$ \\\hline
\end{tabular}
\caption{Opacity measures for programs $P_1$ and $P_2$.}
\label{tab:smithexresults}
\end{table}
Note that only restrictive opacity for $P_2$ depends on $k$.  This
comes from the fact that in all other cases, both $\varphi_=$ and the
equivalence classes scale at the same rate with $k$.  In the case of
$P_2$, adding length to the secret variable $H$ \emph{dilutes}
$\varphi_=$ inside each class.  Hence the greater $k$ is, the hardest
it is for an attacker to know that $\varphi_=$ is true, thus to crack
asymmetrical opacity.  Indeed, it will tend to get false in most
cases, thus providing an easy guess, and a low value for symmetrical
opacity.

\subsection{Crowds protocol}\label{sec:crowds}

The anonymity protocol known as \emph{crowds} was introduced
in~\cite{reiter98} and recently studied in the probabilistic framework
in~\cite{chatzikokolakis08,andres10}.  When a user wants to send a
message (or request) to a server without the latter knowing the origin
of the message, the user routes the message through a crowd of $n$
users.  To do so, it selects a user randomly in the crowd (including
himself), and sends him the message.  When a user receives a message
to be routed according to this protocol, it either sends the message
to the server with probability $1-q$ or forwards it to a user in the
crowd, with probability $q$.  The choice of a user in the crowd is
always equiprobable.  Under these assumptions, this protocol is known
to be secure, since no user is more likely than another to be the
actual initiator; indeed its \hponame is very low.  However, there can
be $c$ corrupt users in the crowd which divulge the identity of the
person that sent the message to them.  In that case, if a user sends
directly a message to a corrupt user, its identity is no longer
protected. The goal of corrupt users is therefore not to transmit
messages, hence they cannot initiate the protocol. The server and the
corrupt users cooperate to discover the identity of the
initiator. \hponame can measure the security of this system, depending
on $n$ and $c$.

First, consider our protocol as the system in
\figurename~\ref{fig:crowdsPPA}. In this automaton, states $1, \dots, n-c$
corresponding to honest users are duplicated in order to differentiate
their behavior as initiator or as the receiver of a message from the
crowd. The predicate we want to
be opaque is $\varphi_i$ that contains all the runs in which $i$ is
the initiator of the request.  The observation function ${\cal O}$
returns the penultimate state of the run, \textit{i.e.} the honest
user that will be seen by the server or a corrupt user.

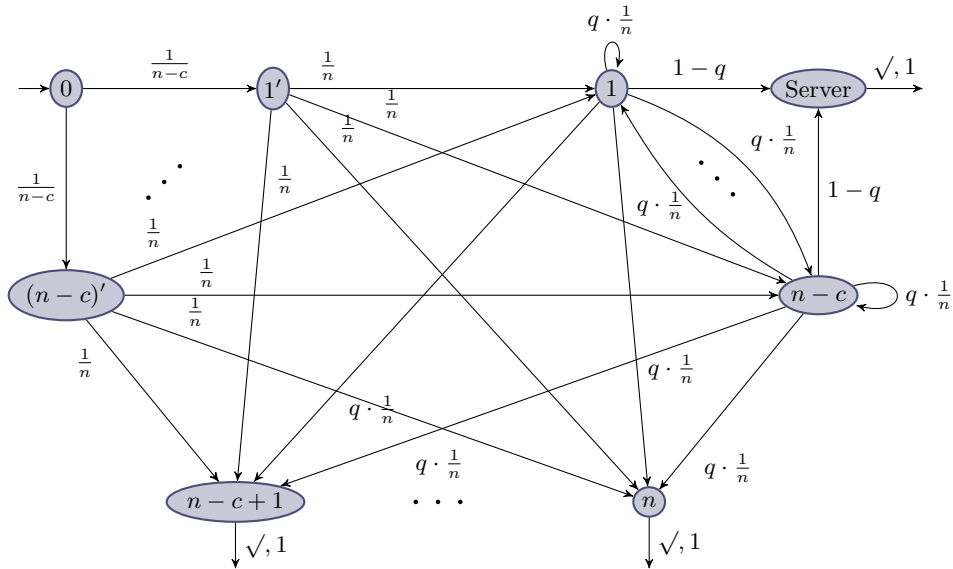
\begin{figure}
\centering
\begin{tikzpicture}[node distance=2.75cm,auto,initial text=]
\tikzstyle{every state}+=[ellipse,minimum width=12pt,minimum height=10pt,thick,inner xsep=0pt,inner ysep=2pt]
\tikzstyle{every node}+=[font=\small]

\node[state,initial] (init) at (0,0) {$0$};
\node[state] (q1) [right of=init] {$1'$};
\node[state] (q2) [below of=init] {$(n-c)'$};
\path (q1) -- (q2) node[anchor=south,midway,sloped] {\huge\dots};
\path[->] (init) edge node {$\frac{1}{n-c}$} (q1);
\path[->] (init) edge node [swap] {$\frac{1}{n-c}$} (q2);

\node[state] (server) at (10,0) {Server};
\node[state] (r1) [left of=server] {$1$};
\node[state] (r2) [below of=server] {$n-c$};
\node (ph) [node distance=1.5cm,right of=server] {};
\path[->] (server) edge node {$\surd,1$} (ph);
\path (r1) -- (r2) node[anchor=south,midway,sloped] {\huge\dots};
\path[->] (r1) edge node {$1-q$} (server);
\path[->] (r2) edge node [swap] {$1-q$} (server);
\path[->] (r1) edge [out=-20,in=110] node {$q \cdot \frac1n$} (r2);
\path[->] (r1) edge [loop above] node {$q \cdot \frac1n$} (r1);
\path[->] (r2) edge [out=150,in=-60] node [anchor=east] {$q \cdot \frac1n$} (r1);
\path[->] (r2) edge [loop right] node {$q \cdot \frac1n$} (r2);

\node (botdots) at ($(barycentric cs:init=1,server=1) - (0,5.5)$) {\huge\dots};
\node[state] (p1) [left of=botdots] {$n-c+1$};
\node[state] (p2) [right of=botdots] {$n$};
\node (ph1) [node distance=1cm,below of=p1] {};
\node (ph2) [node distance=1cm,below of=p2] {};
\path[->] (p1) edge node {$\surd,1$} (ph1);
\path[->] (p2) edge node {$\surd,1$} (ph2);

\path[->] (q1) edge node [pos=0.125] {$\frac1n$} (r1);
\path[->] (q1) edge node [pos=0.17] {$\frac1n$} (r2);
\path[->] (q1) edge node [pos=0.125] {$\frac1n$} (p1);
\path[->] (q1) edge node [pos=0.125] {$\frac1n$} (p2);
\path[->] (q2) edge node [pos=0.125] {$\frac1n$} (r1);
\path[->] (q2) edge node [pos=0.125] {$\frac1n$} (r2);
\path[->] (q2) edge node [pos=0.125,swap] {$\frac1n$} (p1);
\path[->] (q2) edge node [pos=0.125] {$\frac1n$} (p2);
\path[->] (r1) edge node [pos=0.75] {$q \cdot \frac1n$} (p1);
\path[->] (r1) edge node [pos=0.75] {$q \cdot \frac1n$} (p2);
\path[->] (r2) edge node [pos=0.75] {$q \cdot \frac1n$} (p1);
\path[->] (r2) edge node [pos=0.75] {$q \cdot \frac1n$} (p2);
\end{tikzpicture}
\caption{FPA $\mathcal{C}_n^c$ for Crowds protocol with $n$ users, among whom $c$ are corrupted.}
\label{fig:crowdsPPA}
\end{figure}

For sake of brevity, we write `$i \rightsquigarrow$' to denote the
event ``a request was initiated by $i$'' and `$\rightsquigarrow j$'
when ``$j$ was detected by the
adversary'', which means that $j$ sent the message either to a corrupt
user or to the server, who both try to discover who the initiator was.
The abbreviation $i \rightsquigarrow j$ stands for $i \rightsquigarrow
\wedge \rightsquigarrow j$.  Notation `$\neg i \rightsquigarrow$'
means that ``a request was initiated by someone else than $i$'';
similarly, combinations of this notations are used in the sequel. We
also use the Kronecker symbol $\delta_{ij}$ defined by $\delta_{ij} =
1$ if $i = j$ and $0$ otherwise.

\subsubsection*{Computation of the probabilities.}
All probabilities $\prob(i\rightsquigarrow j)$ can be
 automatically computed using the method described in
Section~\ref{sec:computing}.  For example, $\prob(1
\rightsquigarrow (n-c))$, the probability for the first user to
initiate the protocol while the last honest user is detected, can be
computed from substochastic automaton $\mathcal{C}_n^c || \mathcal{A}_{1
  \rightsquigarrow (n-c)}$ depicted on
\figurename~\ref{fig:crowdsRestrictedSa}.
In this automaton, the only duplicated state remaining is $1'$.

This SA can also be represented by a transition matrix (like a Markov chain), which is given in \tablename~\ref{tab:crowdsRestrictedSAmatrix}.
An additional column indicates the probability for the $\surd$ action, which ends the run (here it is either $1$ if the state is final and $0$ if not).

The associated system is
represented in \tablename~\ref{tab:crowdsSystem} where $L_S$ corresponds to the ``Server'' state. 
Each line of the system is given by the outgoing probabilities of the corresponding state in the SA, or alternatively by the corresponding line of the matrix.
Resolving it (see Appendix~\ref{app:systemcrowds})
yields, $L_i = \frac{q}{n}$ for all $i \in \{1, \dots, n-c-1\}$,
$L_{n-c} = 1 - \frac{q \cdot (n-c-1)}{n}$, $L_{1'} = \frac1n$, and
$L_0 = \frac1{(n-c) \cdot n}$.  Therefore, $\prob(1 \rightsquigarrow
(n-c)) = \frac1{(n-c) \cdot n}$.

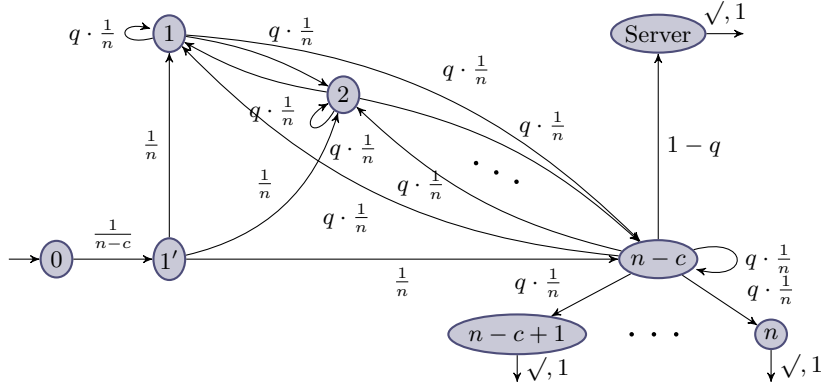
\begin{figure}
\centering
\begin{tikzpicture}[node distance=3cm,auto,initial text=]
\tikzstyle{every state}+=[ellipse,minimum width=12pt,minimum height=10pt,thick,inner xsep=0pt,inner ysep=2pt]
\tikzstyle{every node}+=[font=\small]

\node[state,initial] (init) at (0,-3) {$0$};
\node[state] (q1) [node distance=1.5cm,right of=init] {$1'$};
\path[->] (init) edge node {$\frac{1}{n-c}$} (q1);

\node[state] (server) at (8,0) {Server};
\node[state] (r1) at (server -| q1) {$1$};
\node[state] (r2) [below of=server] {$n-c$};
\node (ph) [node distance=1.25cm,right of=server] {};
\path[->] (server) edge node {$\surd,1$} (ph);
\path (r1) -- (r2) 
    node[state,pos=0.33] (r1bis) {$2$}
    node[anchor=south,pos=0.7,sloped] {\huge\dots};
\path[->] (r2) edge node [swap] {$1-q$} (server);
\path[->] (r1) edge [bend left=20] node {$q \cdot \frac1n$} (r2);
\path[->] (r1) edge [loop left] node {$q \cdot \frac1n$} (r1);
\path[->] (r2) edge [bend left=20] node {$q \cdot \frac1n$} (r1);
\path[->] (r2) edge [loop right] node {$q \cdot \frac1n$} (r2);
\path[->] (r1) edge [bend left=10] node {$q \cdot \frac1n$} (r1bis);
\path[->] (r1bis) edge [bend left=10] node[pos=0.125] {$q \cdot \frac1n$} (r1);
\path[->] (r1bis) edge [loop, out=-120,in=-150,min distance=0.5cm] node[pos=0.25] {$q \cdot \frac1n$} (r1bis);
\path[->] (r1bis) edge [bend left=15] node {$q \cdot \frac1n$} (r2);
\path[->] (r2) edge [bend left=15] node[anchor=east,pos=0.6] {$q \cdot \frac1n$} (r1bis);

\node (botdots) [node distance=1cm,below of=r2] {\huge\dots};
\node[state] (p1) [node distance=1.875cm,left of=botdots] {$n-c+1$};
\node[state] (p2) [node distance=1.5cm,right of=botdots] {$n$};
\node (ph1) [node distance=0.75cm,below of=p1] {};
\node (ph2) [node distance=0.75cm,below of=p2] {};
\path[->] (p1) edge node {$\surd,1$} (ph1);
\path[->] (p2) edge node {$\surd,1$} (ph2);

\path[->] (q1) edge node [pos=0.5] {$\frac1n$} (r1);
\path[->] (q1) edge [bend right] node {$\frac1n$} (r1bis);
\path[->] (q1) edge node [swap,pos=0.5] {$\frac1n$} (r2);
\path[->] (r2) edge node [swap,pos=0.75] {$q \cdot \frac1n$} (p1);
\path[->] (r2) edge node [pos=0.75] {$q \cdot \frac1n$} (p2);
\end{tikzpicture}
\caption{SA $\mathcal{C}_n^c || \mathcal{A}_{1 \rightsquigarrow (n-c)}$ corresponding to runs where user $1$ initiates the protocol and user $(n-c)$ is detected.}
\label{fig:crowdsRestrictedSa}
\end{figure}

\begin{table}
\[
\begin{array}{c|*{10}c}
 & 0 & 1' & 1& \cdots & n-c & n-c+1 & \cdots & n & \textit{Server} & \surd \\ \hline
0 & 0 & \frac{1}{n-c} & 0 & \cdots & 0 & 0 & \cdots & 0 & 0 & 0 \\
1' & 0 & 0 & \frac1n & \cdots & \frac1n & 0 & \cdots & 0 & 0 & 0 \\
1 & 0 & 0 & q\cdot\frac1n & \cdots & q\cdot\frac1n & 0 & \cdots & 0 & 0 & 0 \\
\vdots & \vdots & \vdots & \vdots & \ddots & \vdots & \vdots & \ddots & \vdots & \vdots & \vdots \\
n-c-1 & 0 & 0 & q\cdot\frac1n & \cdots & q\cdot\frac1n & 0 & \cdots & 0 & 0  & 0\\
n-c & 0 & 0 & q\cdot\frac1n & \cdots & q\cdot\frac1n &  q\cdot\frac1n & \cdots &  q\cdot\frac1n & 1-q & 0 \\
n-c+1 & 0 & 0 & 0 & \cdots & 0 & 0 & \cdots & 0 & 0 & 1 \\
\vdots & \vdots & \vdots & \vdots & \ddots & \vdots & \vdots & \ddots & \vdots & \vdots & \vdots \\
n & 0 & 0 & 0 & \cdots & 0 & 0 & \cdots & 0 & 0 & 1 \\
\textit{Server} & 0 & 0 & 0 & \cdots & 0 & 0 & \cdots & 0 & 0 & 1 \\
\end{array}
\]
\caption{The matrix giving the transition probabilities between the states of $\mathcal{C}_n^c || \mathcal{A}_{1 \rightsquigarrow (n-c)}$.}
\label{tab:crowdsRestrictedSAmatrix}
\end{table}

\begin{table}
\hfill{~}
\parbox{4.5cm}{
\[\left\{
\begin{array}{rcl}
L_0 &=& \frac1{n-c} \cdot L_{1'} \\
L_{1'} &=& \sum_{i = 1}^{n-c} \frac1n \cdot L_i \\
L_1 &=& \sum_{i = 1}^{n-c} \frac{q}{n} \cdot L_i \\
&\vdots& \\
L_{n-c-1} &=& \sum_{i = 1}^{n-c} \frac{q}{n} \cdot L_i
\end{array}
\right.
\]}
\hfill{~}\hfill{~}
\parbox{6.5cm}{
\[\left\{
\begin{array}{rcl}
L_{n-c} &=& (1-q) \cdot L_{S} + \sum_{i = 1}^n \frac{q}{n} \cdot L_i \\
L_{n-c+1} &=& 1\\
&\vdots& \\
L_{n} &=& 1\\
L_{S} &=& 1
\end{array}
\right.
\]}
\hfill{~}
\caption{Linear system associated to SA $\mathcal{C}_n^c || \mathcal{A}_{1 \rightsquigarrow (n-c)}$ of \figurename~\ref{fig:crowdsRestrictedSa}.}
\label{tab:crowdsSystem}
\end{table}

In this case, simple reasoning on the symmetries of the model allows
to derive other probabilities $\prob(i \rightsquigarrow j)$.  Remark
that the probability for a message to go directly from initiator to
the adversay (who cannot be the server) is $\frac{c}{n}$: it only
happens if a corrupt user is
chosen by the initiator.  If a honest user is chosen by the initiator,
then the length of the path will be greater, with probability
$\frac{n-c}{n}$.  By symmetry
all honest users have equal probability to be the initiator, and equal
probability to be detected.  Hence \( \prob(i \rightsquigarrow) =
\prob(\rightsquigarrow j) = \frac{1}{n-c} \).

Event $i \rightsquigarrow j$ occurs when $i$ is chosen as the
initiator (probability $\frac{1}{n-c}$), and either (1) if $i =j$ and
$i$ chooses a corrupted user to route its message, or (2) if a honest
user is chosen and $j$ sends the message to a corrupted user or the
server (the internal route between honest users before $j$ is
irrelevant).  Therefore
\begin{eqnarray*}
\prob(i \rightsquigarrow j) &=& \frac{1}{n-c} \cdot \left(\delta_{ij} \cdot \frac{c}{n} + \frac{1}{n-c} \cdot \frac{n-c}{n}\right) \\
\prob(i \rightsquigarrow j) &=& \frac{1}{n-c} \cdot \left(\delta_{ij} \cdot \frac{c}{n} + \frac{1}{n}\right)
\end{eqnarray*}
The case when $i$ is not the initiator is derived from this probability:
\begin{eqnarray*}
\prob(\neg i \rightsquigarrow j) &=& \sum_{\substack{k=1\\k \neq i}}^{n-c} {\prob(k \rightsquigarrow j)} \\
\prob(\neg i \rightsquigarrow j) &=& \frac{1}{n-c} \cdot \left((1-\delta_{ij}) \cdot \frac{c}{n} + \frac{n-c-1}{n}\right)
\end{eqnarray*}
Conditional probabilities thus follow:
\[
\prob(i \rightsquigarrow | \rightsquigarrow j) = \frac{\prob(i \rightsquigarrow j)}{\prob(\rightsquigarrow j)} = \delta_{ij} \cdot \frac{c}{n} + \frac{1}{n}
\]
\[
\prob(\neg i \rightsquigarrow | \rightsquigarrow j) = \frac{\prob(\neg i \rightsquigarrow j)}{\prob(\rightsquigarrow j)} = (1-\delta_{ij}) \cdot \frac{c}{n} + \frac{n-c-1}{n}
\]
Interestingly, these probabilities do not depend on $q$\footnote{This stems from the fact that the original models had either the server or the corrupt users as attackers, not both at the same time.}.

\subsubsection*{Computation of \hponame.}\label{par:crowdshpo}

From the probabilities above, one can compute an analytical value for $\hpo(\mathcal{C}_n^c,\ind_{\varphi_i},\Obs)$.
\begin{eqnarray*}
\frac1{\hpo(\mathcal{C}_n^c,\varphi_i,\Obs)} &=& \sum_{j=1}^{n-c} \prob(\rightsquigarrow j) \frac1{\prob(\neg i \rightsquigarrow | \rightsquigarrow j)}\\
&=& (n-c-1) \cdot \frac1{n-c} \cdot \frac{n}{n-1} + \frac1{n-c} \cdot \frac{n}{n-c-1}\\
&=& \frac{n}{n-c} \left(\frac{n-c-1}{n-1} + \frac1{n-c-1}\right)\\
\frac1{\hpo(\mathcal{C}_n^c,\varphi_i,\Obs)} &=& \frac{n \cdot (n^2 + c^2 - 2nc -n + 2c)}{(n-c) \cdot (n-1) \cdot (n-c-1)}
\end{eqnarray*}
Hence
\[\hpo(\mathcal{C}_n^c,\varphi_i,\Obs) = \frac{(n-c) \cdot (n-1) \cdot (n-c-1)}{n \cdot (n^2 + c^2 - 2nc -n + 2c)}\]
which tends to $1$ as $n$ increases to $+\infty$ (for a fixed number
of corrupted users).  The evolution of \hponame is represented in
\figurename~\ref{fig:crowdshpograph} where blue means low and red
means high.  If the proportion of corrupted users is fixed, say $n =
4c$, we obtain
\[\hpo(\mathcal{C}_{4c}^c,\varphi_i,\Obs) = \frac{(4c-1) \cdot (9c-3)}{4c \cdot (9c-2)}\]
which also tends to $1$ as the crowds size increases.  When there are
no corrupted users,
\[\hpo(\mathcal{C}_n^0,\varphi_i,\Obs) = \frac{n-1}n,\] which is close to $1$, but never exactly, since $\varphi_i$ is not always false, although of decreasing proportion as the crowds grows.
This result has to be put in parallel with the one
from~\cite{reiter98}, which states that crowds is secure since each
user is ``beyond suspicion'' of being the initiator, but ``absolute
privacy'' is not achieved.
\begin{figure}
\centering
\subfigure[Evolution of $\hpo(\mathcal{C}_n^c,\varphi_i,\Obs)$ with $n$  and $c$.][Evolution of $\hpo(\mathcal{C}_n^c,\varphi_i,\Obs)$ with $n$  and $c$. Red meaning a value close to $1$ and blue meaning close to $0$.]{
\label{fig:crowdshpograph}
\begin{tikzpicture}[xscale=0.25,yscale=0.1175]
\useasboundingbox (-2,-2.5) rectangle (21.75,22);
\node[anchor=south west,inner sep=0pt] at (1,0) {\pgfimage[width=4.75cm]{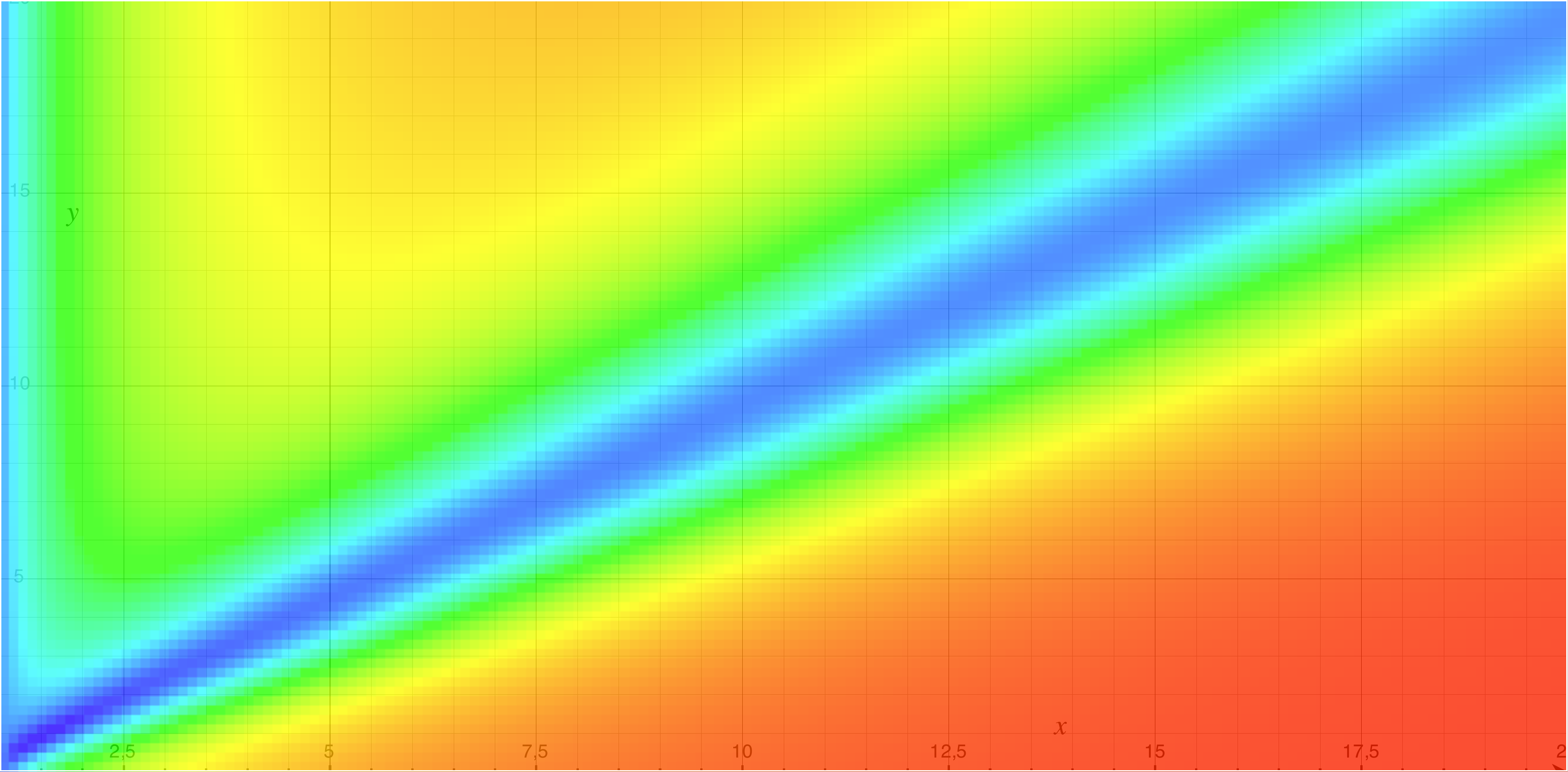}};
\fill[white] (0,0) -- (20,20) -- (0,20) -- cycle;
\path[->,draw] (0,0) -- (21,0) node[anchor=west] {$n$};
\path[->,draw] (0,0) -- (0,22) node[anchor=south] {$c$};
\foreach \x in {0,2.5,...,20} {%
  \draw (\x,-0.25) -- (\x,0.25);
  \node[anchor=north] at (\x,0) {\x};};
\foreach \y in {0,5,...,20} {%
  \draw (-0.125,\y) -- (0.125,\y);
  \node[anchor=east] at (0,\y) {\y};};
\end{tikzpicture}
}
\hfill{~}
\subfigure[Evolution of $\vpo(\mathcal{C}_n^c,\varphi_i,\Obs)$ with $n$ when \mbox{$c=5$}.]{
\label{fig:crowdsvpograph}
% pdflatex --jobname=mscs2012-arxiv-graph-CrowdsVpo mscs2012-arxiv.tex
\beginpgfgraphicnamed{mscs2012-arxiv-graph-CrowdsVpo}
\begin{tikzpicture}[xscale=0.25,yscale=3]
%\useasboundingbox (-1,-0.1) rectangle (22.5,1.15);
\path[->,draw] (0,0) -- (21,0) node[anchor=west] {$n$};
\path[->,draw] (0,0) -- (0,1.05) node[anchor=south] {$\vpo(\mathcal{C}_n^5,\varphi_i,\Obs)$};
\foreach \x in {0,2,...,20} {%
  \draw (\x,-0.01) -- (\x,0.01);
  \node[anchor=north] at (\x,0) {\x};};
\foreach \y in {0,0.25,0.5,0.75,1} {%
  \draw (-0.125,\y) -- (0.125,\y);
  \node[anchor=east] at (0,\y) {\y};};
  \draw[very thick,color=auxdraw] plot[domain=6.0000000000000001:12,id=CrowdsVpoInf,samples=1000] function{(x-5)/((log(x)*(x-5)/log(2))-(log(x-6)/log(2)))};
  \draw[very thick,color=auxdraw] plot[domain=12:20,id=CrowdsVpoSup,samples=1000] function{(x-5)/((log(x)*(x-5)/log(2))-(log(6)/log(2)))};
\end{tikzpicture}
\endpgfgraphicnamed
}
\caption{Evolution of restrictive opacity with the size of the crowd.}
\end{figure}

\subsubsection*{Computation of \vponame.}\label{subsubsec:crowdsvpo}

From the probabilities computed above, we obtain that if $i \neq j$,
\[V(i \rightsquigarrow | \rightsquigarrow j) = \max\left(\frac1n,\frac{n-1}n\right) = \frac1n \max(1,n-1).\]
Except in the case of $n = 1$ (when the system is non-opaque, hence $\vpo(\mathcal{C}^0_1,\varphi_1,\mathcal{O}) = 0$), $V(i \rightsquigarrow | \rightsquigarrow j) = \frac{n-1}n$.

In the case when $i=j$
\[V(i \rightsquigarrow | \rightsquigarrow i) =
\max\left(\frac{c+1}n,\frac{n-c-1}n\right) = \frac1n
\max(c+1,n-c-1).\] That means the vulnerability for the observation
class corresponding to the case when $i$ is actually detected depends
on the proportion of corrupted users in the crowd.  Indeed, $V(i
\rightsquigarrow | \rightsquigarrow i) = \frac{c+1}n$ if and only if
$n \leq 2(c+1)$.  The two cases shall be separated.

\begin{description}
\item[When $n \leq 2(c+1)$.] The message is initially more likely to be sent to a corrupt user or to the initiator himself than to any other user in the crowd:
\begin{eqnarray*}
\sum_{j=1}^{n-c} \prob(\rightsquigarrow j) \cdot \log(1-V(i \rightsquigarrow | \rightsquigarrow j)) &=&
\frac{(n-c-1) \cdot \log\left(\frac1n\right) + \log\left(\frac{n-c-1}n\right)}{n-c}\\
&=&\frac1{n-c} \cdot \left(\log(n-c-1) - (n-c) \cdot \log(n)\right) \\
&=&\frac{\log(n-c-1)}{n-c} - \log(n)
\end{eqnarray*}
\[\mbox{Hence}\quad \vpo(\mathcal{C}^c_n,\varphi_i,\mathcal{O}) \ =\  \frac1{\log(n) - \frac{\log(n-c-1)}{n-c}}\]
\item[When $n > 2(c+1)$.] The message is initially more likely to be sent to a honest user different from the initiator:
\begin{eqnarray*}
\sum_{j=1}^{n-c} \prob(\rightsquigarrow j) \cdot \log(1-V(i \rightsquigarrow | \rightsquigarrow j)) &=&
\frac{(n-c-1) \cdot \log\left(\frac1n\right) + \log\left(\frac{c+1}n\right)}{n-c}\\
&=& \frac1{n-c} \cdot (\log(c+1) - (n-c) \cdot \log(n))\\
&=& \frac{\log(c+1)}{n-c} - \log(n)
\end{eqnarray*}
\[\mbox{Hence}\quad \vpo(\mathcal{C}^c_n,\varphi_i,\mathcal{O}) \ =\  \frac1{\log(n) - \frac{\log(c+1)}{n-c}}\]
\end{description}
The evolution of \vponame for $c=5$ is depicted in \figurename~\ref{fig:crowdsvpograph}.

One can see that actually the \vponame decreases when $n$ increases.
That is because when there are more users in the crowd, user $i$ is
less likely to be the initiator.  Hence the predicate chosen does not
model anonymity as specified in~\cite{reiter98} but a stronger
property since \vponame is based on the definition of symmetrical
opacity.  Therefore it is meaningful in terms of security properties
only when both the predicate and its negation are meaningful.

\section{Dealing with nondeterminism}\label{sec:schedulers}

The measures presented above were all defined in the case of fully
probabilistic finite automata.  However, some systems present
nondeterminism that cannot reasonably be abstracted away.  For
example, consider the case of a system, in which a malicious user
Alice can control certain actions. The goal of Alice is to establish a
covert communication channel with an external observer Bob.  Hence she
will try to influence the system in order to render communication
easier.  Therefore, the actual security of the system as observed by
Bob should be measured against the best possible actions for Alice.
Formally, Alice is a scheduler who, when facing several possible
output distributions $\{\mu_1,\dots,\mu_n\}$, can choose whichever
distribution $\nu$ on $\{1,\dots,n\}$ as weights for the $\mu_i$s.
The security as measured by opacity is the minimal security of all
possible successive choices.

\subsection{The nondeterministic framework}

Here we enlarge the setting of probabilistic automata considered
before with nondeterminism.  There are several outgoing distribution
from a given state instead of a single one.

\begin{definition}[Nondeterministic probabilistic automaton]
  A \emph{nondeterministic probabilistic automaton} (NPA) is a tuple
  $\langle \Sigma, Q, \Delta, q_0 \rangle$ where
\begin{itemize}
  \item $\Sigma$ is a finite set of actions;
  \item $Q$ is a finite set of states;
  \item $\Delta: Q \rightarrow \powerset(\probset((\Sigma \times Q)
    \uplus \{\surd\}))$ is a nondeterministic probabilistic transition
    function;
  \item $q_0$ is the initial state;
\end{itemize}
where $\powerset(A)$ denotes the set of finite subsets of $A$.
\end{definition}

The choice over the several possible distributions is made by the
\emph{scheduler}.  It does not, however, selects one distribution to
be used, but can give weight to the possible distributions.

\begin{definition}[Scheduler]\label{def:scheduler}
  A scheduler on $\A = \langle \Sigma, Q, \Delta, q_0 \rangle$ is a
  function
\[\sigma: Run(\A) \rightarrow \probset(\probset((\Sigma \times Q) \uplus \{\surd\}))\]
such that $\sigma(\rho)(\nu) > 0 \Rightarrow \nu \in
\Delta(\lst(\rho))$.

The set of all schedulers for $\A$ is denoted $\Adv_{\A}$ (the dependence on $\A$ will be omitted if clear from the context).
\end{definition}

Observe that the choice made by a scheduler can depend on the
(arbitrarily long) history of the execution.  A scheduler $\sigma$ is
\emph{memoryless} if there exists a function $\sigma': Q \rightarrow
\probset(\probset((\Sigma \times Q) \uplus \{\surd\}))$ such that
$\sigma(\rho) = \sigma'(\lst(\rho))$.  Hence a memoryless scheduler
takes only into account the current state.

\begin{definition}[Scheduled NPA]
NPA $\A = \langle \Sigma, Q, \Delta, q_0 \rangle$ scheduled by $\sigma$ is the (infinite) FPFA $\A_{/\sigma} = \langle \Sigma, Run(\A), \Delta',\varepsilon\rangle$ where
\[\Delta'(\rho)(a,\rho') = \sum_{\mu \in \Delta(q)} \sigma(\rho)(\mu) \cdot \mu(a,q') \quad\mbox{if}\ \rho' = \overbrace{q_0 \rightarrow \dots \rightarrow q}^{\rho} \,\trans{a}\, q'\]
and $\Delta'(\rho)(a,\rho') = 0$ otherwise.
\end{definition}
A scheduled NPA behaves as an FPFA, where the outgoing distribution is
the set of all possible distributions weighted by the scheduler.

All measures defined in this paper on fully probabilistic finite
automata can be extended to non-deterministic probabilistic automata.
First note that all measures can be defined on infinite systems,
although they cannot in general be computed automatically, even with
proper restrictions on predicate and observables.  From the security
point of view, opacity in the case of an NPA should be the measure for
the FPFA obtained with the worst possible scheduler.
Hence the leak evaluated by the liberal
measures (\lponame and \lpsoname) is the greatest possible, and the
robustness evaluated by the restrictive measures (\hponame and
\vponame) is the weakest possible.
\begin{definition}
Let $\A$ be an NPA, $\varphi$ a predicate, and $\Obs$ an observation function.
\[\mbox{For } \genericpo \in \{\lpo,\lpso\}, \quad \widehat{\genericpo}(\A,\varphi,\Obs) = \max_{\sigma \in \Adv} \genericpo(\A_{/\sigma},\varphi,\Obs).\]
\[\mbox{For } \genericpo \in \{\hpo,\vpo\}, \quad \widehat{\genericpo}(\A,\varphi,\Obs) = \min_{\sigma \in \Adv} \genericpo(\A_{/\sigma},\varphi,\Obs).\]
\end{definition}

\subsection{The expressive power of schedulers}\label{subsec:memorylessnotenough}
In the context of analysis of security systems running in a hostile
environment, it is quite natural to consider the scheduler to be under
control of the adversary.  However if not constrained this gives the
adversary an unreasonably strong power even for obviously secure
systems as it can reveal certain secrets.  Also several classes of
schedulers have been proposed in order to avoid considering
unrealistic power of unconstrained schedulers and the ability of these
classes to reach supremum probabilities~\cite{giro09}. We now
investigate this problem for quantitative opacity.

First we show that memoryless schedulers are not sufficiently expressive,
with the following counterexample.
\begin{theorem}\label{thm:ctre-ex}
  There exists an NPA $\mathcal{B}$ such that the value
  $\widehat{\hpo}(\mathcal{B},\varphi,\Obs)$ cannot be reached by a
  memoryless scheduler.
\end{theorem}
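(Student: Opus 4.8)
The plan is to exhibit a concrete small NPA $\mathcal{B}$ with a single nondeterministic state that is reached by two different histories, and to arrange the predicate and observation so that a history-dependent scheduler can make one observation class wholly contained in $\varphi$ (thus driving $\hpo$ to its minimum $0$ by the second item of Proposition~\ref{prop:hpoprop}), while every memoryless scheduler is forced to apply the \emph{same} randomization at the shared state and therefore always retains a $\overline\varphi$-run in that class. Concretely, I would take $\mathcal{B}$ with states $q_0,q,f_1,f_2$, where $q_0$ carries the single (hence non-nondeterministic) distribution sending $q_0\xrightarrow{a}q$ and $q_0\xrightarrow{b}q$ each with probability $\tfrac12$, and where the only nondeterminism is at $q$, offering two distributions $\mu_1\colon q\xrightarrow{c}f_1$ and $\mu_2\colon q\xrightarrow{d}f_2$, with $f_1,f_2$ final. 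The observation $\Obs$ projects each trace onto $\{c,d\}$, giving two observables $o_c,o_d$, and the secret is $\varphi=\{ac\surd,\ bd\surd\}$. The essential feature is that $q$ is reached by the two distinct runs $\rho_a=q_0\xrightarrow{a}q$ and $\rho_b=q_0\xrightarrow{b}q$, and that the $\varphi$-run and the $\overline\varphi$-run lying in a common observation class are separated only by the history at $q$, not by the local action chosen there.

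First I would compute $\widehat{\hpo}(\mathcal{B},\varphi,\Obs)$. Using the history-dependent scheduler $\sigma$ with $\sigma(\rho_a)(\mu_1)=1$ and $\sigma(\rho_b)(\mu_1)=0$, the class $o_c$ consists solely of the complete run $ac\surd\in\varphi$, so $\Obs^{-1}(o_c)\subseteq\varphi$ and hence $\hpo(\mathcal{B}_{/\sigma},\varphi,\Obs)=0$ by Proposition~\ref{prop:hpoprop}(2). Since $\hpo\ge 0$ always, this shows $\widehat{\hpo}(\mathcal{B},\varphi,\Obs)=\min_{\sigma}\hpo(\mathcal{B}_{/\sigma},\varphi,\Obs)=0$, and the minimum is attained (by a non-memoryless scheduler).

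Next I would rule out memoryless schedulers. A memoryless scheduler must use one fixed weight $x=\sigma'(q)(\mu_1)$ at $q$, so the four complete runs get probabilities $\prob(ac\surd)=\prob(bc\surd)=\tfrac12 x$ and $\prob(ad\surd)=\prob(bd\surd)=\tfrac12(1-x)$. Then $\Obs^{-1}(o_c)=\{ac\surd,bc\surd\}$ gives $\prob(\ind_\varphi=0\mid\Obs=o_c)=\tfrac12$, and symmetrically $\prob(\ind_\varphi=0\mid\Obs=o_d)=\tfrac12$, independently of $x$ (the boundary cases $x\in\{0,1\}$ leave a single nonempty class with conditional probability still $\tfrac12$). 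Substituting into Definition~\ref{def:hpo} yields $\tfrac1{\hpo}=\prob(\Obs=o_c)/\tfrac12+\prob(\Obs=o_d)/\tfrac12=2$, so $\hpo(\mathcal{B}_{/\sigma},\varphi,\Obs)=\tfrac12$ for \emph{every} memoryless scheduler. As $\tfrac12\neq 0=\widehat{\hpo}(\mathcal{B},\varphi,\Obs)$, no memoryless scheduler reaches the value, which proves the theorem.

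The conceptual crux, and the only place that needs care, is the design of the gadget rather than any calculation: one must place the $\varphi$-run and the $\overline\varphi$-run of a single observation class behind the \emph{same} local action at the shared state $q$ but behind \emph{different} prior histories, so that purging the $\overline\varphi$-run from that class (to expose the leak and send $\hpo$ to $0$) genuinely requires the scheduler to branch on the history, while simultaneously ensuring that the memoryless value is a constant strictly bounded away from $0$. I would double-check that $q_0$ carries no genuine choice (so the only scheduler decision is at $q$) and that every state reaches a final state, so that $\mathcal{B}_{/\sigma}$ is a well-defined FPFA and all the conditional probabilities above are legitimate.
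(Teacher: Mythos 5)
Your proof is correct, and it takes a genuinely different---and considerably more elementary---route than the paper's. The paper also argues by counterexample, but its gadget is cyclic: the predicate is the regular language $(ab)^+ + (ab)^*a$, the observation keeps the last $o_i$, and establishing the gap requires computing $\hpo(\mathcal{B}_{/\sigma_p},\varphi,\Obs)$ as a function of the memoryless parameter $p$ by solving parametric linear systems over synchronized products (two appendices of calculation, plus computer-algebra assistance to certify that the memoryless value never falls below $0.88$, against $\simeq 0.60$ for a two-phase scheduler with memory). Your acyclic four-state gadget replaces all of this with exact one-line computations: every memoryless scheduler yields exactly $\hpo = \frac12$ independently of the randomization parameter $x$, while the history-dependent scheduler drives $\hpo$ to $0$ via Proposition~\ref{prop:hpoprop}(2), so the separation is maximal and attained; your handling of the boundary cases $x \in \{0,1\}$ and of null-probability observation classes is the right level of care, and your parametrization by $x$ does exhaust all memoryless schedulers since the other states carry singleton sets of distributions. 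Both proofs rest on the same conceptual crux you articulate: a state reached by distinct histories, at which the choice exposing the leak depends on that history rather than on the state itself. The one thing the paper's heavier example buys is that it exhibits the phenomenon strictly inside the opaque regime: there, every scheduler leaves the scheduled system opaque (the optimum is about $0.60$ and the memoryless bound about $0.88$, both positive), so the memory/memoryless gap is not an artifact of the adversary being able to destroy opacity outright. In your example the optimal scheduler makes both observation classes subsets of $\varphi$, i.e., it breaks possibilistic opacity entirely---a degenerate extreme, though perfectly sufficient for the theorem as stated.
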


\begin{proof}
  Consider the NPA $\mathcal{B}$ of
  \figurename~\ref{fig:cexmemsystem}.  Transitions on $a$ and $b$
  going to state $q_1$ (along with the westbound $\surd$) are part of
  the same probabilistic transition indicated by the arc linking the
  outgoing edges (and similarly eastbound).  Let $\varphi$ be the
  (regular) predicate consisting of runs whose trace projected onto
  $\{a,b\}$ is in $(ab)^+ + (ab)^*a$ (so $a$ and $b$ must alternate).
  Let $\Obs$ be the observation function that keeps the last $o_i$ of
  the run.  Hence there are only three observables, $\varepsilon$,
  $o_1$, and $o_2$.  Intuitively, a scheduler can introduce a bias in
  the next letter read from state $q_0$.

\begin{figure}
\centering
\begin{tikzpicture}[auto,node distance=5cm]
\tikzstyle{grouptrans}=[draw,very thick,auxdraw]
\node[state,initial above] (q0) at (0,0) {$q_0$};
\node[state,left of=q0] (q1) {$q_1$};
\node[state,right of=q0] (q2) {$q_2$};
\node[node distance=1.5cm,below left of=q0] (ph1) {$\surd,\frac18$};
\node[node distance=1.5cm,below right of=q0] (ph2) {$\surd,\frac18$};

\path[->] (q0) edge [bend left=7] node {$a,\frac34$} (q1);
\path[->] (q0) edge [bend right=7] node[pos=0.07,anchor=south] (a1) {} node [swap] {$b,\frac18$} (q1);
\path[->] (q0) edge node[pos=0.25,anchor=south] (s1) {} (ph1);

\node (control) at ($(barycentric cs:a1=1,s1=1) + (-0.25,-0.17)$) {};
\path[grouptrans] (a1.south) .. controls (control) .. (s1.south);

\path[->] (q1) edge [bend left=27] node {$o_1,1$} (q0);

\path[->] (q0) edge [bend right=7] node [swap] {$a,\frac18$} (q2);
\path[->] (q0) edge [bend left=7] node[pos=0.07,anchor=south] (a2) {} node {$b,\frac34$} (q2);
\path[->] (q0) edge node[pos=0.25,anchor=south] (s2) {} (ph2);

\node (control) at ($(barycentric cs:a2=1,s2=1) + (0.25,-0.17)$) {};
\path[grouptrans] (a2.south) .. controls (control) .. (s2.south);

\path[->] (q2) edge [bend right=27] node [swap] {$o_2,1$} (q0);
\end{tikzpicture}
\caption{A nondeterministic probabilistic automaton $\mathcal{B}$.}
\label{fig:cexmemsystem}
\end{figure}
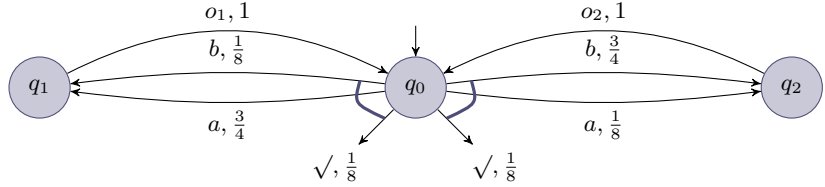

First consider a memoryless scheduler $\sigma_p$. It can only choose
once what weight will be affected to each transition.  This choice is
parametrized by probability $p$ that represents the weight of
probability of the $q_1$ transition.  The scheduled NPA 
$\mathcal{B}_{/\sigma_p}$ is depicted on
\figurename~\ref{fig:cexmemnomem}.  The probabilities can be computed
using the technique laid out in Section~\ref{sec:computing}. We obtain
the following probabilities (see Appendix~\ref{app:schedulercalcul} for details):
\begin{mathpar}
\prob(\varepsilon) = \frac18
\and
\prob(o_1) = \frac78 \cdot p
\and
\prob(o_2) = \frac78 \cdot (1-p)
\and
\prob(\varphi,\varepsilon) = 0
\and
\prob(\varphi,o_1) = \frac{p}{25p^2 - 25p + 58} \cdot \frac{5p + 49}{8}
\and
\prob(\varphi,o_2) = \frac{1-p}{25p^2 - 25p + 58} \cdot \frac{15p + 7}{4}
\end{mathpar}
Which yields
\[\frac1{\hpo(\mathcal{B}_{/\sigma_p},\varphi,\Obs)} = 
\frac18 + \frac{49}{8} \cdot f(p) \cdot \left(\frac{p}{7f(p) - 5p - 49} + \frac{1-p}{7f(p) -30p - 14}\right)
\]
with $f(p) = 25p^2 - 25p + 58$ (see Appendix~\ref{app:cexmemhponomem}).
It can be shown\footnote{With the help of tools such as WolframAlpha.} that regardless of $p$,
$\hpo(\mathcal{B}_{/\sigma_p},\varphi,\Obs)$ never falls below $0.88$.

\begin{figure}
\centering
\subfigure[Fully probabilistic finite automaton $\mathcal{B}_{/\sigma_p}$]{
\label{fig:cexmemnomem}
\begin{tikzpicture}[auto,node distance=3cm]
\tikzstyle{every state}+=[shape=ellipse,minimum size=5pt,inner sep=2pt]
\node[state,initial above] (q0) at (0,1.75) {$q_0$};
\node[state] (q1) at (-2.,0) {$q_1$};
\node[state] (q2) at (2.,0) {$q_2$};
\node[node distance=1.25cm,below of=q0] (ph) {$\surd,\frac18$};

\path[->] (q0) edge (ph);

\path[->] (q0) edge [bend left=10,pos=0.9] node {$a,\frac34p$} (q1);
\path[->] (q0) edge [bend right=10,pos=0.25] node [swap] {$b,\frac18p$} (q1);

\path[->] (q1) edge [bend left=75,distance=2cm] node {$o_1,1$} (q0);

\path[->] (q0) edge [bend right=10,pos=0.9] node [swap] {$a,\frac18(1-p)$} (q2);
\path[->] (q0) edge [bend left=10,pos=0.25] node {$b,\frac34(1-p)$} (q2);

\path[->] (q2) edge [bend right=75,distance=2cm] node [swap] {$o_2,1$} (q0);
\end{tikzpicture}
}
\hfill{~}
\subfigure[Fully probabilistic finite automaton $\mathcal{B}_{/\sigma_m}$]{
\label{fig:cexmemmemory}
\begin{tikzpicture}[auto,node distance=2cm]
\tikzstyle{every state}+=[shape=ellipse,minimum size=5pt,inner sep=2pt]
\node[state,initial above] (q01) at (0,0) {$q_0,e$};
\node[state,below of=q01] (q1) {$q_1$};
\node[state,node distance=2cm,right of=q1] (q02) {$q_0,o$};
\node[state,above of=q02] (q2) {$q_2$};
\node[node distance=1.5cm,left of=q01] (ph1) {$\surd,\frac18$};
\node[node distance=1.5cm,right of=q02] (ph2) {$\surd,\frac18$};

\path[->] (q01) edge (ph1);
\path[->] (q01) edge [bend left=20] node {$a,\frac34$} (q1);
\path[->] (q01) edge [bend right=20] node [swap] {$b,\frac18$} (q1);

\path[->] (q1) edge node {$o_1,1$} (q02);

\path[->] (q02) edge (ph2);
\path[->] (q02) edge [bend right=20] node [swap] {$a,\frac18$} (q2);
\path[->] (q02) edge [bend left=20] node {$b,\frac34$} (q2);

\path[->] (q2) edge node [swap] {$o_2,1$} (q01);
\end{tikzpicture}
}
\caption{Scheduled automata.}
\end{figure}
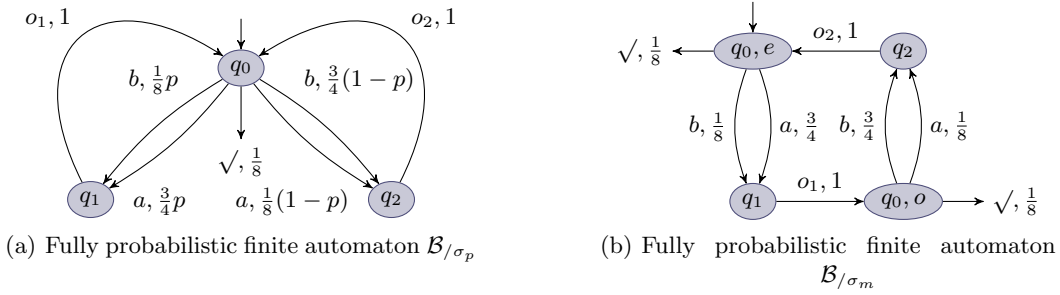

Now consider a scheduler $\sigma_m$ with memory who will try to
maximize the realization of $\varphi$.  In order to achieve that, it
introduces a bias towards taking the letter which will fulfill
$\varphi$: first an $a$, then a $b$, etc.  Hence on the even
positions, it will choose only transition to $q_1$ (with probability
$1$) while it will choose the transition to $q_2$ on odd positions.
The resulting FPFA is depicted on \figurename~\ref{fig:cexmemmemory}.
In this case, the probabilities of interest are:
\begin{mathpar}
\prob(\varepsilon) = \frac18
\and
\prob(o_1) = \frac7{15}
\and
\prob(o_2) = \frac78 \cdot \frac7{15}
\\
\prob(\varphi,\varepsilon) = 0
\and
\prob(\varphi,o_1) = \frac3{14}
\and
\prob(\varphi,o_2) = \frac34 \cdot \frac3{14}
\end{mathpar}
Probability $\prob(o_1)$ can be obtained by noticing that the execution has to stop after an odd number of letters from $\{a,b\}$ have been read.
The probability of stopping after exactly $n$ letters from $a$ or $b$ is $\frac18 \cdot \left(\frac78\right)^n$.
Therefore
\[
\prob(o_1) \ =\  \frac18 \cdot \sum_{i\geq 0} \left(\frac78\right)^{2i+1} \ =\  \frac18 \cdot \frac78 \cdot \frac1{1- \frac{49}{64}} \ =\  \frac18 \cdot \frac78 \cdot \frac{64}{15} \ =\  \frac7{15}.
\]
Similar reasoning yield the other probabilities.
The computation of \hponame from these values (see Appendix~\ref{app:cexmemhpomem}) gives $\hpo(\mathcal{B}_{/\sigma_m},\varphi,\Obs) = \frac{88192}{146509} \simeq 0.60$.

Hence a lower security is achieved by a scheduler provided it has (a
finite amount of) memory. 
\end{proof} 
Note that this
example used \hponame, but a similar argument could be adapted for the
other measures.

\subsection{Restricted schedulers}

What made a scheduler with memory more powerful than the one without
in the counterexample of Section~\ref{subsec:memorylessnotenough} was
the knowledge of the truth value of $\varphi$ and exactly what was
observed.  More precisely, if the predicate and the observables are
regular languages represented by finite deterministic and complete
automata (FDCA), schedulers can be restricted to choices according to
the current state of these automata and the state of the system.  We
conjecture that this knowledge is sufficient to any scheduler to
compromise security at the best of its capabilities.

Let $\varphi \subseteq CRun(\A)$ be a regular predicate represented
by an FDCA $\A_\varphi$.  Let $\Obs: CRun(\A)
\rightarrow\{o_1,\dots,o_n\}$ be an observation function such that for
$1 \leq i \leq n$, $\Obs^{-1}(o_i)$ is a regular set represented by an
FDCA $\A_{o_i}$.  Consider the synchronized product $\A_{\varphi,\Obs}
=\A_\varphi || \A_{o_1} || \dots || \A_{o_n}$, which is also an FDCA,
and denote by $Q_{\varphi,\Obs}$ its set of states.  Let
$\A_{\varphi,\Obs}(\rho)$ be the state of $\A_{\varphi,\Obs}$ reached
after reading $\rho$.
\begin{definition}[Restricted $(\varphi,\Obs)$-scheduler]
  A scheduler $\sigma$ for $\A$ is said $(\varphi,\Obs)$-restricted
  if there exists a function $\sigma': (Q_{\varphi,\Obs} \times Q)
  \rightarrow \probset(\probset((\Sigma \times Q) \uplus \{\surd\}))$
  such that for any run $\rho \in Run(\A)$, $\sigma(\rho) =
  \sigma'(\A_{\varphi,\Obs}(\rho),\lst(\rho))$.
\end{definition}

Remark that memoryless schedulers are always $(\varphi,\Obs)$-restricted.

\begin{proposition}\label{prop:restr}
  If $\sigma$ is $(\varphi,\Obs)$-restricted, then $\A_{/\sigma}$ is
  isomorphic to a finite FPFA.
\end{proposition}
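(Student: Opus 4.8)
The plan is to exhibit an explicit finite FPFA $\mathcal{C}$ of which $\A_{/\sigma}$ is the unfolding, and to let the projection onto the states of the product automaton witness the isomorphism. Write $\Delta_{\varphi,\Obs}$ for the (deterministic and total) transition function of the FDCA $\A_{\varphi,\Obs}$, which reads the transition triples $(q,a,q') \in Q\times\Sigma\times Q$ produced along a run, and let $s_0 = \A_{\varphi,\Obs}(\varepsilon)$ be its initial state; let $\sigma'$ be the finite table on $Q_{\varphi,\Obs}\times Q$ witnessing that $\sigma$ is $(\varphi,\Obs)$-restricted. First I would define the finite FPFA $\mathcal{C} = \langle \Sigma, Q_{\varphi,\Obs}\times Q, \Delta_{\mathcal C}, (s_0,q_0)\rangle$ by setting, for every pair $(s,q)$,
\[\Delta_{\mathcal C}\big((s,q)\big)\big(a,(s',q')\big) = \sum_{\mu \in \Delta(q)} \sigma'(s,q)(\mu)\cdot\mu(a,q') \quad\text{if } s' = \Delta_{\varphi,\Obs}(s,(q,a,q')),\]
the value being $0$ otherwise, together with $\Delta_{\mathcal C}((s,q))(\surd) = \sum_{\mu \in \Delta(q)} \sigma'(s,q)(\mu)\cdot\mu(\surd)$. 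Since each $\mu \in \Delta(q)$ is a genuine distribution and $\sigma'(s,q)$ is a distribution over $\Delta(q)$, summing all outcomes gives $\sum_{\mu}\sigma'(s,q)(\mu) = 1$, so $\Delta_{\mathcal C}((s,q))$ is a distribution; as usual I would then prune $\mathcal C$ to the states that are reachable from $(s_0,q_0)$ and co-reachable to a final state.

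Next I would introduce the projection $\pi : Run(\A) \rightarrow Q_{\varphi,\Obs}\times Q$ defined by $\pi(\rho) = (\A_{\varphi,\Obs}(\rho), \lst(\rho))$, noting that $\pi(\varepsilon) = (s_0, q_0)$. The crucial point, which I would isolate as a lemma, is that the product-state component updates by a \emph{function of the current pair and the last transition alone}: if $\rho' = \rho\cdot(q\xrightarrow{a}q')$ with $q = \lst(\rho)$, then $\A_{\varphi,\Obs}(\rho') = \Delta_{\varphi,\Obs}(\A_{\varphi,\Obs}(\rho),(q,a,q'))$. This is exactly where determinism and completeness of the FDCA $\A_{\varphi,\Obs}$ are used: completeness guarantees that $\pi$ is total (no run ever blocks the product), and determinism guarantees that the successor state is uniquely determined, so $\pi(\rho')$ depends only on $\pi(\rho)$ and $(q,a,q')$.

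Finally I would check that $\pi$ is a probabilistic morphism exhibiting $\A_{/\sigma}$ as the unfolding of $\mathcal C$. Fix $\rho$ with $\pi(\rho) = (s,q)$; because $\sigma$ is $(\varphi,\Obs)$-restricted, $\sigma(\rho) = \sigma'(s,q)$, so for $\rho' = \rho\cdot(q\xrightarrow{a}q')$ one gets $\Delta'(\rho)(a,\rho') = \sum_{\mu}\sigma'(s,q)(\mu)\cdot\mu(a,q') = \Delta_{\mathcal C}((s,q))(a,(s',q'))$ with $s' = \Delta_{\varphi,\Obs}(s,(q,a,q'))$ the forced successor, and likewise the $\surd$-weights coincide; since each run of $\A_{/\sigma}$ has a unique predecessor, $\A_{/\sigma}$ is the execution tree of $\mathcal C$ along $\pi$. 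In particular $\pi$ is a lumping (probabilistic bisimulation), so $\A_{/\sigma}$ and $\mathcal C$ induce the same probability measure on complete runs modulo $\pi$ — which is the sense in which the two are isomorphic and, more to the point, is what is needed to compute the opacity measures on the finite object $\mathcal C$. I expect the main obstacle to be precisely the well-definedness of the product update (the lemma above), i.e.\ pinning down that the memory a restricted scheduler may consult collapses to the finitely many states of $\A_{\varphi,\Obs}$; the completeness hypothesis on the representing automata, easy to overlook, is what keeps $\pi$ total and is the technical linchpin.
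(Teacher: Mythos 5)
Your proof is correct and follows essentially the same route as the paper: your explicit automaton $\mathcal{C}$ is precisely the product $\A\,||\,\A_{\varphi,\Obs}$ scheduled by the memoryless scheduler $\sigma'$, which is exactly the paper's (two-line) sketch, with the lumping $\pi$ making explicit the identification that the paper states as $\left(\A\,||\,\A_{\varphi,\Obs}\right)_{/\sigma} = \A_{/\sigma}$. Your isolation of the well-definedness of the product update (via determinism and completeness of the FDCA) and of the sense in which the infinite tree $\A_{/\sigma}$ is ``isomorphic'' to a finite object fills in precisely the details the paper leaves implicit.
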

 These schedulers keep all information about the predicate and the
observation.  We conjecture that the relevant supremum is reached by a
$(\varphi,\Obs)$-restricted scheduler.

\begin{proof}[Sketch of proof of Proposition~\ref{prop:restr}]
It can be shown that if $\sigma$ is $(\varphi,\Obs)$-restricted, then:
\begin{enumerate}[topsep=-\baselineskip,label={(\arabic*)\ }]
\item $\sigma$ is a memoryless scheduler for the product $\A || \A_{\varphi,\Obs}$
\item and $\left(\A || \A_{\varphi,\Obs}\right)_{/\sigma} = \A_{/\sigma}$.
\end{enumerate}
\end{proof}

\section{Conclusion}\label{sec:conclusion}

In this paper we introduced two dual notions of probabilistic opacity.
The liberal one measures the probability for an attacker observing a
random execution of the system to be able to gain information he can
be sure about.  The restrictive one measures the level of certitude in
the information acquired by an attacker observing the system.  The
extremal cases of both these notions coincide with the possibilistic
notion of opacity, which evaluates the existence of a leak of sure
information.  These notions yield measures that generalize either the
case of asymmetrical or symmetrical opacity, thus providing four
measures. 

However, probabilistic opacity is not always easy to compute,
especially if there are an infinite number of observables.
Nevertheless, automatic computation is possible when dealing with
regular predicates and finitely many regular observation classes.
A prototype tool was
implemented in Java, and can be used for numerical computation of
opacity values.

In future work we plan to explore more of the properties of
probabilistic opacity, to instantiate it to known security measures
(anonymity, non-interference, etc.). Also, we want to extend the study
of the non-deterministic case, by investigating the expressiveness of
schedulers.

\noindent\textbf{Acknowledgments.} We wish to thank Catuscia Palamidessi 
for interesting discussions on this work. Thanks also to the reviewers
for their helpful comments and to Adrian Eftenie for implementing the
\textsc{tpot} tool.

\smallskip
The authors are supported by projects ImpRo (ANR-2010-BLAN-0317) (French Government), CoChaT (DIGITEO-2009-27HD) (R\'egion \^Ile de France), inVEST (ERC Starting Grant 279499), the NSERC Discovery Individual grant No. 13321 (Government of Canada), and by the FQRNT Team grant No. 167440 (Quebec's Government).

\bibliographystyle{splncs}
\bibliography{bibliography}

\appendix
\section{Computation of \hponame for the debit card system}
\label{app:creditcardcalcul}

We give here the details of the computation of \hponame in the example of the debit cards system of Section~\ref{ex:debitcard}.

\begin{eqnarray*}
\prob(\Obs_{\textrm{Call}}=\varepsilon) &=&
\prob(\Obs_{\textrm{Call}}=\varepsilon,x>1000) \:+\:
\prob(\Obs_{\textrm{Call}}=\varepsilon,500 < x \leq 1000) \\&&\:+\:
\prob(\Obs_{\textrm{Call}}=\varepsilon,100 < x \leq 500) \:+\:
\prob(\Obs_{\textrm{Call}}=\varepsilon,x \leq 100)\\
&=&
\prob(\Obs_{\textrm{Call}}=\varepsilon) \cdot \prob(x>1000) \\&&\:+\:
\prob(\Obs_{\textrm{Call}}=\varepsilon) \cdot \prob(500 < x \leq 1000) \\&&\:+\:
\prob(\Obs_{\textrm{Call}}=\varepsilon) \cdot \prob(100 < x \leq 500) \\&&\:+\:
\prob(\Obs_{\textrm{Call}}=\varepsilon) \cdot \prob(x \leq 100) \\
&=& 0.05 \cdot 0.05 + 0.25 \cdot 0.2 + 0.5 \cdot 0.45 + 0.8 \cdot 0.3 \\
\prob(\Obs_{\textrm{Call}}=\varepsilon) &=& 0.5175
\end{eqnarray*}
\[\prob(\Obs_{\textrm{Call}}=\textrm{Call}) \ =\ 1 - \prob(\Obs_{\textrm{Call}}=\varepsilon) \ =\  0.4825\]
\begin{eqnarray*}
\prob(\neg\varphi_{>500}|\Obs_{\textrm{Call}}=\varepsilon) &=&
\frac{\prob(\neg\varphi_{>500},\Obs_{\textrm{Call}}=\varepsilon)}{\prob(\Obs_{\textrm{Call}}=\varepsilon)} \\
&=&\frac{\prob(x \leq 100,\Obs_{\textrm{Call}}=\varepsilon) \:+\:\prob(100 < x \leq 500,\Obs_{\textrm{Call}}=\varepsilon)}{\prob(\Obs_{\textrm{Call}}=\varepsilon)} \\
&=&\frac{\prob(\Obs_{\textrm{Call}}=\varepsilon|x \leq 100) \cdot \prob(x \leq 100)}{\prob(\Obs_{\textrm{Call}}=\varepsilon)} \\
&&+\:\frac{\prob(\Obs_{\textrm{Call}}=\varepsilon|100 < x \leq 500) \cdot \prob(100 < x \leq 500)}{\prob(\Obs_{\textrm{Call}}=\varepsilon)} \\
&=& \frac{0.8 \cdot 0.3 + 0.5 \cdot 0.45}{0.5175} \\
\prob(\neg\varphi_{>500}|\Obs_{\textrm{Call}}=\varepsilon) &=& \frac{0.465}{0.5175} \ \simeq\ 0.899
\end{eqnarray*}
\begin{eqnarray*}
\prob(\neg\varphi_{>500}|\Obs_{\textrm{Call}}=\textrm{Call}) &=&
\frac{\prob(\neg\varphi_{>500},\Obs_{\textrm{Call}}=\textrm{Call})}{\prob(\Obs_{\textrm{Call}}=\textrm{Call})} \\
&=&\frac{\prob(x \leq 100,\Obs_{\textrm{Call}}=\textrm{Call}) \:+\:\prob(100 < x \leq 500,\Obs_{\textrm{Call}}=\textrm{Call})}{\prob(\Obs_{\textrm{Call}}=\textrm{Call})} \\
&=&\frac{\prob(\Obs_{\textrm{Call}}=\textrm{Call}|x \leq 100) \cdot \prob(x \leq 100)}{\prob(\Obs_{\textrm{Call}}=\textrm{Call})} \\
&&+\:\frac{\prob(\Obs_{\textrm{Call}}=\textrm{Call}|100 < x \leq 500) \cdot \prob(100 < x \leq 500)}{\prob(\Obs_{\textrm{Call}}=\textrm{Call})} \\
&=& \frac{0.2 \cdot 0.3 + 0.5 \cdot 0.45}{0.4825} \\
\prob(\neg\varphi_{>500}|\Obs_{\textrm{Call}}=\textrm{Call}) &=& \frac{0.285}{0.4825} \ \simeq\ 0.591
\end{eqnarray*}
\begin{eqnarray*}
\frac{1}{\hpo(\A_{\textrm{card}},\varphi_{>500},\Obs_{\textrm{Call}})} &=& 0.5175 \cdot \frac{0.5175}{0.465} + 0.4825 \cdot \frac{0.4825}{0.285}\\
\frac{1}{\hpo(\A_{\textrm{card}},\varphi_{>500},\Obs_{\textrm{Call}})} &=& \frac{39377}{28272} \ \simeq \ 1.393
\end{eqnarray*}
The last line was obtained by reducing the one above with the help of
the formal computation tool \emph{WolframAlpha}.

\section{Resolution of the linear system for Crowds protocol}
\label{app:systemcrowds}

It can be seen in the system of \tablename~\ref{tab:crowdsSystem} (page~\pageref{tab:crowdsSystem}) that $L_1=L_2 = \cdots = L_{n-c-1} =
q \cdot L_{1'}$ and $L_{n-c+1} = \cdots = L_n = L_S=1$. Therefore, it
suffices to eliminate $L_{1'}$ and compute $L_0$, $L_1$ and $L_{n-c}$.
\[\left\{\begin{array}{rcl}
L_0 &=& \frac1{q(n-c)} \cdot L_1 \\
L_1 &=& q \left(\frac{n-c-1}n \cdot L_1 + \frac1n \cdot L_{n-c}\right) \\
L_{n-c} &=& 1 - \frac{q(n-c)}{n} + L_1
\end{array}\right.\]
The line for $L_{n-c}$ is obtained as follows:
\begin{eqnarray*}
L_{n-c} &=& (1-q) \cdot L_{S} + \sum_{i = 1}^{n} \frac{q}{n} \cdot L_i \\
L_{n-c} &=& (1-q) \cdot L_{S} + \sum_{i = 1}^{n-c} \frac{q}{n} \cdot L_i + \sum_{i = n-c+1}^{n} \frac{q}{n} \cdot L_i \\
L_{n-c} &=&  1-q + L_1 +\frac{q \cdot c}{n}\\
L_{n-c} &=&  1 - \frac{q(n-c)}{n} + L_1
\end{eqnarray*}
This yields, for $L_1$:
\begin{eqnarray*}
L_1 &=& \frac{q}{n} (n-c) \cdot L_1 + \frac{q}{n} \left(1 - \frac{q(n-c)}{n}\right) \\
\frac{n}{q} \cdot L_1 &=& (n-c) \cdot L_1 + 1 - \frac{q(n-c)}{n} \\
L_1 \left(\frac{n}{q} - (n-c)\right) &=& \frac{n-q(n-c)}{n} \\
L_1 &=& \frac{q}{n}
\end{eqnarray*}
The other values are easily deduced from $L_1$.

\section{Calculations in the proof of Theorem~\ref{thm:ctre-ex}}

\subsection{Probabilities in $\mathcal{B}_{/\sigma_p}$.}
\label{app:schedulercalcul}
We compute the probabilities of several events in the automaton $\mathcal{B}_{/\sigma_p}$, reproduced on \figurename~\ref{fig:cexmemnomembis}.
Recall that $\varphi = (ab)^+ + (ab)^*a$ and $\Obs$ is the last letter read, so the set of observables is $Obs = \{\varepsilon, o_1,o_2\}$.

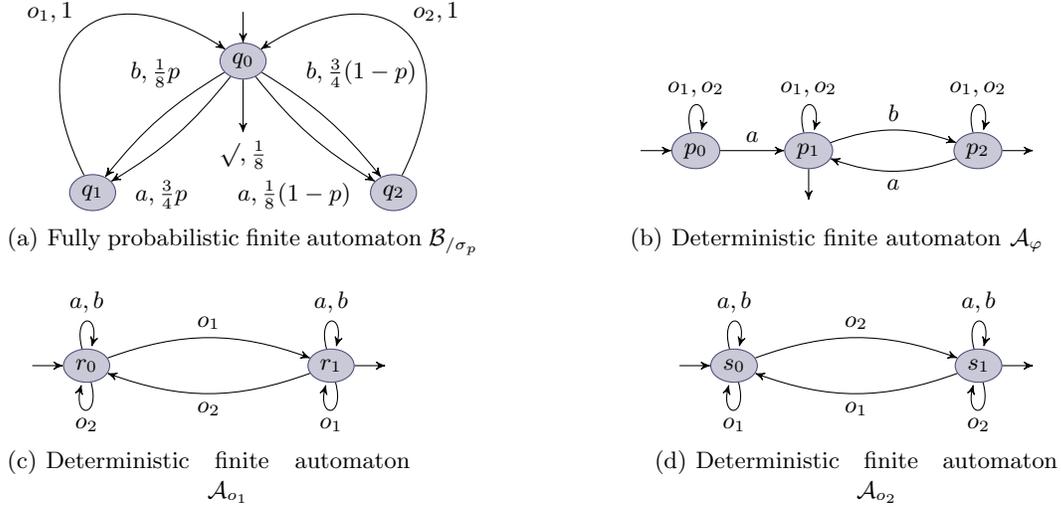
\begin{figure}
\centering
\subfigure[Fully probabilistic finite automaton $\mathcal{B}_{/\sigma_p}$]{
\label{fig:cexmemnomembis}
\begin{tikzpicture}[auto,node distance=3cm]
\tikzstyle{every state}+=[shape=ellipse,minimum size=5pt,inner sep=2pt]
\node[state,initial above] (q0) at (0,1.75) {$q_0$};
\node[state] (q1) at (-2.,0) {$q_1$};
\node[state] (q2) at (2.,0) {$q_2$};
\node[node distance=1.25cm,below of=q0] (ph) {$\surd,\frac18$};

\path[->] (q0) edge (ph);

\path[->] (q0) edge [bend left=10,pos=0.9] node {$a,\frac34p$} (q1);
\path[->] (q0) edge [bend right=10,pos=0.25] node [swap] {$b,\frac18p$} (q1);

\path[->] (q1) edge [bend left=75,distance=2cm] node {$o_1,1$} (q0);

\path[->] (q0) edge [bend right=10,pos=0.9] node [swap] {$a,\frac18(1-p)$} (q2);
\path[->] (q0) edge [bend left=10,pos=0.25] node {$b,\frac34(1-p)$} (q2);

\path[->] (q2) edge [bend right=75,distance=2cm] node [swap] {$o_2,1$} (q0);
\end{tikzpicture}
}
\hfill{~}
\subfigure[Deterministic finite automaton $\A_\varphi$]{
\label{fig:cexmemphi}
\begin{tikzpicture}[auto,node distance=1.5cm]
\tikzstyle{every state}+=[shape=ellipse,minimum size=5pt,inner sep=2pt]
\node[state,initial left] (q0) at (0,0) {$p_0$};
\node[state,right of=q0,accepting below] (q1) {$p_1$};
\node[state,node distance=2.25cm,right of=q1,accepting right] (q2) {$p_2$};

\path[->] (q0) edge node {$a$} (q1);
\path[->] (q1) edge [bend left=20] node {$b$} (q2);
\path[->] (q2) edge [bend left=20] node {$a$} (q1);

\path[->] (q0) edge [loop above] node {$o_1,o_2$} (q0);
\path[->] (q1) edge [loop above] node {$o_1,o_2$} (q1);
\path[->] (q2) edge [loop above] node {$o_1,o_2$} (q2);
\end{tikzpicture}
}

\subfigure[Deterministic finite automaton $\A_{o_1}$]{
\label{fig:cexmemo1}
\begin{tikzpicture}[auto,node distance=3.25cm]
\tikzstyle{every state}+=[shape=ellipse,minimum size=5pt,inner sep=2pt]
\node[state,initial left] (q0) at (0,0) {$r_0$};
\node[state,right of=q0,accepting right] (q1) {$r_1$};

\path[->] (q0) edge [bend left=20] node {$o_1$} (q1);
\path[->] (q1) edge [bend left=20] node {$o_2$} (q0);

\path[->] (q0) edge [loop above] node {$a,b$} (q0);
\path[->] (q1) edge [loop above] node {$a,b$} (q1);
\path[->] (q0) edge [loop below] node {$o_2$} (q0);
\path[->] (q1) edge [loop below] node {$o_1$} (q1);
\end{tikzpicture}
}
\hfill{~}
\subfigure[Deterministic finite automaton $\A_{o_2}$]{
\label{fig:cexmemo2}
\begin{tikzpicture}[auto,node distance=3.25cm]
\tikzstyle{every state}+=[shape=ellipse,minimum size=5pt,inner sep=2pt]
\node[state,initial left] (q0) at (0,0) {$s_0$};
\node[state,right of=q0,accepting right] (q1) {$s_1$};

\path[->] (q0) edge [bend left=20] node {$o_2$} (q1);
\path[->] (q1) edge [bend left=20] node {$o_1$} (q0);

\path[->] (q0) edge [loop above] node {$a,b$} (q0);
\path[->] (q1) edge [loop above] node {$a,b$} (q1);
\path[->] (q0) edge [loop below] node {$o_1$} (q0);
\path[->] (q1) edge [loop below] node {$o_2$} (q1);
\end{tikzpicture}
}
\caption{Automata for the computation of $\prob(\varphi,o_1)$ and $\prob(\varphi,o_2)$.}
\end{figure}

The computation of the probability $\prob(\varphi,o_1)$ in FPFA
$\mathcal{B}_{/\sigma_p}$ goes as follows.  We write $\overline{p} =
1-p$ for brevity.  First we build the synchronized product
$\mathcal{B}_{/\sigma_p} || \A_\varphi || \A_{o_1}$, as depicted in
\figurename~\ref{fig:cexmemproduct1}.
\begin{figure}
\centering
\begin{tikzpicture}[auto,node distance=2cm]
\tikzstyle{every state}+=[shape=rectangle,rounded corners=4pt,minimum size=5pt,inner sep=3pt]
\tikzstyle{placeholder}=[node distance=1.5cm]
\node[state, initial above] at (0,0) (q000) {$q_0,p_0,r_0$};

\node[state,below left of=q000] (q210) {$q_2,p_1,r_0$};
\node[state,below left of=q210] (q010) {$q_0,p_1,r_0$};
\node[state, below left of=q010] (q120) {$q_1,p_2,r_0$};
\node[state, below right of=q010] (q220) {$q_2,p_2,r_0$};

\node[state,below right of=q000] (q110) {$q_1,p_1,r_0$};
\node[state,below right of=q110] (q011) {$q_0,p_1,r_1$};
\node[placeholder,right of=q011] (ph011) {};
\path[->] (q011) edge node {$\surd,\frac18$} (ph011);
\node[state, below left of=q011] (q221) {$q_2,p_2,r_1$};
\node[state, below right of=q011] (q121) {$q_1,p_2,r_1$};

\node[state,below right of=q220] (q020) {$q_0,p_2,r_0$};
\node[state,node distance=2.75cm,below of=q020] (q021) {$q_0,p_2,r_1$};
\node[node distance=0.75cm,above of=q021,anchor=south] (ph021) {$\surd,\frac18$};
\path[->] (q021) edge (ph021);
\node[state,below left of=q021] (q211) {$q_2,p_1,r_1$};
\node[state,below right of=q021] (q111) {$q_1,p_1,r_1$};

\path[->] (q000) edge node [swap] {$a,\frac{\overline{p}}8$} (q210);
\path[->] (q000) edge node {$a,\frac{3p}4$} (q110);
\path[->] (q210) edge node [swap] {$o_2,1$} (q010);
\path[->] (q110) edge node {$o_1,1$} (q011);

\path[->] (q010) edge node [swap] {$b,\frac{p}8$} (q120);
\path[->] (q010) edge node {$b,\frac{3\overline{p}}4$} (q220);

\path[->] (q011) edge node [swap] {$b,\frac{3\overline{p}}4$} (q221);
\path[->] (q011) edge node{$b,\frac{p}8$} (q121);

\path[->] (q220) edge node [swap] {$o_2,1$} (q020);
\path[->] (q221) edge node {$o_2,1$} (q020);

\path[->] (q120) edge node [swap] {$o_1,1$} (q021);
\path[->] (q121) edge node {$o_1,1$} (q021);

\path[->] (q020) edge node [swap,pos=0.75] {$a,\frac{\overline{p}}8$} (q210);
\path[->] (q020) edge node [pos=0.75] {$a,\frac{3p}4$} (q110);

\path[->] (q021) edge node [swap] {$a,\frac{\overline{p}}8$} (q211);
\path[->] (q021) edge node {$a,\frac{3p}4$} (q111);

\path[->] (q111) edge [bend right=40,swap] node {$o_1,1$} (q011);
\path[->] (q211) edge [bend left=40] node {$o_2,1$} (q010);
\end{tikzpicture}
\caption{Substochastic Automaton $\mathcal{B}_{/\sigma_p} || \A_\varphi || \A_{o_1}$.}
\label{fig:cexmemproduct1}
\end{figure}
The linear system of \tablename~\ref{tab:cexmemsystem} is built from this automaton.
\begin{table}
\[\left\{\begin{array}{rcl}
x_{000} &=& \frac18 \overline{p} \ x_{210} + \frac34 p \ x_{110} \\
x_{210} &=& x_{010} \\
x_{110} &=& x_{011} \\
x_{010} &=& \frac18 p \ x_{120} + \frac34 \overline{p} \ x_{220} \\
x_{011} &=& \frac18 + \frac34\overline{p} \ x_{221} + \frac18 p \ x_{121} \\
x_{120} &=& x_{021} \\
x_{220} &=& x_{020} \\
x_{221} &=& x_{020} \\
x_{121} &=& x_{021} \\
x_{020} &=& \frac18 \overline{p} \ x_{210} + \frac34 p \ x_{110} \\
x_{021} &=& \frac18 + \frac18 \overline{p} \ x_{211} + \frac34 p \ x_{111} \\
x_{211} &=& x_{010} \\
x_{111} &=& x_{011}
\end{array}\right.\]
\caption[Linear system associated to the SA $\mathcal{B}_{/\sigma_p} || \A_\varphi || \A_{o_1}$.]{Linear system associated to the SA $\mathcal{B}_{/\sigma_p} || \A_\varphi || \A_{o_1}$. The variables names indicate the corresponding state in the automaton; for example $x_{210}$ corresponds to state $(q_2,p_1,r_0)$.}
\label{tab:cexmemsystem}
\end{table}
This system can be trimmed down in order to remove redundancy, and since only the value of $x_{000} = \prob(\varphi,o_1)$ is of interest:
\[\left\{\begin{array}{rcl}
x_{000} &=& \frac18 \overline{p} \ x_{010} + \frac34 p \ x_{011} \\
x_{010} &=& \frac18 p \ x_{021} + \frac34 \overline{p} \ x_{020} \\
x_{011} &=& \frac18 + \frac34\overline{p} \ x_{020} + \frac18 p \ x_{021} \\
x_{020} &=& \frac18 \overline{p} \ x_{010} + \frac34 p \ x_{011} \\
x_{021} &=& \frac18 + \frac18 \overline{p} \ x_{010} + \frac34 p \ x_{011}
\end{array}\right.
\Longleftrightarrow
\left\{\begin{array}{rcl}
x_{000} &=& \frac18 \overline{p} \ x_{010} + \frac34 p \ x_{011} \\
x_{010} &=& \frac18 p \ x_{021} + \frac34 \overline{p} \ x_{020} \\
x_{011} &=& \frac18 + x_{010} \\
x_{020} &=& x_{000} \\
x_{021} &=& \frac18 + x_{000}
\end{array}\right.\]
We therefore obtain:
\[\left\{\begin{array}{rcl}
x_{000} &=& \frac18 \overline{p} \ x_{010} + \frac34 p \left(\frac18 + x_{010}\right) \\
x_{010} &=& \frac18 p \left(\frac18 + x_{000}\right) + \frac34 \overline{p} \ x_{000}
\end{array}\right.
\textrm{So}\ 
\left\{\begin{array}{rcl}
x_{000} &=& \frac18 \overline{p} \ x_{010} + \frac34 p \left(\frac18 + x_{010}\right) \\
x_{010} &=& \frac1{64} p + \frac18 p \  x_{000} + \frac34 \overline{p} \ x_{000}
\end{array}\right.\]
As a result
\[x_{000} \ =\  \frac18 \overline{p} \ \left(\frac1{64} p + \frac18 p \  x_{000} + \frac34 \overline{p} \ x_{000}\right) + \frac34 p \left(\frac18 + \frac1{64} p + \frac18 p \  x_{000} + \frac34 \overline{p} \ x_{000}\right)\]
In the sequel, we replace $x_{000}$ with $x$ for readability's sake.
\begin{eqnarray*}
 && x \ =\  \frac18 \overline{p} \ \left(\frac1{64} p + \frac18 p \  x + \frac34 \overline{p} \ x\right) + \frac34 p \left(\frac18 + \frac1{64} p + \frac18 p \  x + \frac34 \overline{p} \ x\right) \\
&\Longleftrightarrow&
x \ =\  \frac1{512} p \overline{p} + \frac1{64} p \overline{p} x + \frac3{32} \overline{p}^2 x + \frac3{32} p + \frac3{256} p^2 + \frac3{32} p^2 x + \frac9{16} p \overline{p} x \\
&\Longleftrightarrow& x \left(1 - \frac1{64} p \overline{p} -\frac3{32} \overline{p}^2 - \frac3{32} p^2 - \frac9{16} p \overline{p} \right) \ = \ \frac1{512} p \overline{p} + \frac3{32} p + \frac3{256} p^2 \\
&\Longleftrightarrow& x \ =\  \frac{\frac18 p \overline{p} + 6p + \frac34p^2}{64- p \overline{p} - 6\overline{p}^2 - 6p^2 - 36 p \overline{p}} \\
&\Longleftrightarrow& x \ =\  \frac{\frac18 p (1-p) + 6p + \frac34p^2}{64 - 37 p (1-p) - 6 (p^2 - 2p +1) - 6p^2} \\
&\Longleftrightarrow& x \ =\  \frac{\frac18 p - \frac18 p^2 + 6p + \frac34 p^2}{64 - 37p + 37p^2 - 6p^2 + 12p - 6 -6p^2} \\
&\Longleftrightarrow& x \ =\ \frac18 \cdot p\cdot  \frac{5p + 49}{25p^2 - 25p + 58}
\end{eqnarray*}

\bigskip
The same technique can be applied to the computation of $\prob(\varphi,o_2)$ in $\mathcal{B}_{/\sigma_p}$.
The product is depicted on \figurename~\ref{fig:cexmemproduct2}, and the linear system obtained boils down to
\[\left\{\begin{array}{rcl}
x_{000} &=& \frac18 \overline{p} \left(\frac18 + x_{010}\right) + \frac34 p \ x_{010} \\
x_{010} &=& \frac18 p \ x_{000} + \frac34 \overline{p} \left(\frac18 + x_{000}\right)
\end{array}\right.\]
\begin{figure}
\centering
\begin{tikzpicture}[auto,node distance=2cm]
\tikzstyle{every state}+=[shape=rectangle,rounded corners=4pt,minimum size=5pt,inner sep=3pt]
\tikzstyle{placeholder}=[node distance=1.5cm]
\node[state, initial above] at (0,0) (q000) {$q_0,p_0,s_0$};

\node[state,below left of=q000] (q210) {$q_2,p_1,s_0$};
\node[state,below left of=q210] (q011) {$q_0,p_1,s_1$};
\node[placeholder,left of=q011] (ph011) {};
\path[->] (q011) edge node [swap] {$\surd,\frac18$} (ph011);
\node[state, below right of=q011] (q121) {$q_1,p_2,s_1$};
\node[state, below left of=q011] (q221) {$q_2,p_2,s_1$};

\node[state,below right of=q000] (q110) {$q_1,p_1,s_0$};
\node[state,below right of=q110] (q010) {$q_0,p_1,s_0$};
\node[state, below right of=q010] (q220) {$q_2,p_2,s_0$};
\node[state, below left of=q010] (q120) {$q_1,p_2,s_0$};

\node[state,below right of=q121] (q020) {$q_0,p_2,s_0$};
\node[state,node distance=2.75cm,below of=q020] (q021) {$q_0,p_2,s_1$};
\node[node distance=0.75cm,above of=q021,anchor=south] (ph021) {$\surd,\frac18$};
\path[->] (q021) edge (ph021);
\node[state,below left of=q021] (q211) {$q_2,p_1,s_1$};
\node[state,below right of=q021] (q111) {$q_1,p_1,s_1$};

\path[->] (q000) edge node [swap] {$a,\frac{\overline{p}}8$} (q210);
\path[->] (q000) edge node {$a,\frac{3p}4$} (q110);
\path[->] (q210) edge node [swap] {$o_2,1$} (q011);
\path[->] (q110) edge node {$o_1,1$} (q010);

\path[->] (q011) edge node {$b,\frac{p}8$} (q121);
\path[->] (q011) edge node [swap] {$b,\frac{3\overline{p}}4$} (q221);

\path[->] (q010) edge node {$b,\frac{3\overline{p}}4$} (q220);
\path[->] (q010) edge node [swap] {$b,\frac{p}8$} (q120);

\path[->] (q220) edge node [swap] {$o_2,1$} (q021);
\path[->] (q221) edge node {$o_2,1$} (q021);

\path[->] (q120) edge node {$o_1,1$} (q020);
\path[->] (q121) edge node [swap] {$o_1,1$} (q020);

\path[->] (q020) edge node [swap,pos=0.75] {$a,\frac{\overline{p}}8$} (q210);
\path[->] (q020) edge node [pos=0.75] {$a,\frac{3p}4$} (q110);

\path[->] (q021) edge node [swap] {$a,\frac{\overline{p}}8$} (q211);
\path[->] (q021) edge node {$a,\frac{3p}4$} (q111);

\path[->] (q111) edge [bend right=40,swap] node {$o_1,1$} (q010);
\path[->] (q211) edge [bend left=40] node {$o_2,1$} (q011);
\end{tikzpicture}
\caption{Substochastic Automaton $\mathcal{B}_{/\sigma_p} || \A_\varphi || \A_{o_2}$.}
\label{fig:cexmemproduct2}
\end{figure}
As before, $x_{000}$ is replaced by $x$ for readability; we solve:
\begin{eqnarray*}
x &=&  \frac18 \overline{p} \left(\frac18 + \frac18 p x + \frac34 \overline{p} \left(\frac18 + x\right)\right) + \frac34 p \ \left(\frac18 p x + \frac34 \overline{p} \left(\frac18 + x\right)\right) \\
x &=& \frac1{64} \overline{p} + \frac1{64} p \overline{p} x + \frac3{256} \overline{p}^2 + \frac3{32} \overline{p}^2 x + \frac3{32} p^2 x + \frac9{128} p \overline{p} + \frac9{16} p \overline{p} x \\
x &=& \frac{%
\frac1{64} \overline{p} + \frac3{256} \overline{p}^2 + \frac9{128} p \overline{p}
}{%
1 - \frac1{64} p \overline{p} - \frac3{32}  \overline{p}^2 - \frac3{32} p^2 - \frac9{16} p \overline{p}
} \\
x &=& \frac{%
4 \overline{p} + 3 \overline{p}^2 + 18 p \overline{p}
}{%
256 - 24 \overline{p}^2 - 24 p^2 - 148 p \overline{p}
} \\
x &=& \frac{%
(1-p) (4 + 3 - 3p + 18p)
}{%
256 - 24 p^2 + 48 p - 24 - 24 p^2 - 148 p + 148 p^2
} \\
x &=& \frac{%
(1-p) (15p+7)
}{%
232 - 100p + 100 p^2
} \\
x &=& \frac{%
(1-p) (15p+7)
}{%
4(25p^2 - 25p + 58)
}
\end{eqnarray*}

\subsection{Computation of \hponame for $\mathcal{B}_{/\sigma_p}$}
\label{app:cexmemhponomem}

In the sequel, we write $f(p) = 25p^2 - 25p + 58$.
We have $\prob(\overline\varphi | \varepsilon) = 1$ and
\begin{eqnarray*}
\prob(\overline\varphi | o_1) &=& 1 - \frac{\prob(\varphi,o_1)}{\prob(o_1)} \\
\prob(\overline\varphi | o_1) &=& 1 - \frac18 \cdot p \cdot \frac{5p + 49}{25p^2 - 25p + 58} \cdot \frac8{7p} \\
\prob(\overline\varphi | o_1) &=& \frac{7f(p) - 5p - 49}{7f(p)}
\end{eqnarray*}
\begin{eqnarray*}
\prob(\overline\varphi | o_2) &=& 1 - \frac{\prob(\varphi,o_2)}{\prob(o_2)} \\
\prob(\overline\varphi | o_2) &=& 1 - \frac{(1-p) (15p+7)}{4(25p^2 - 25p + 58)} \cdot \frac8{7(1-p)} \\
\prob(\overline\varphi | o_2) &=& \frac{7f(p) - 30p -14}{7f(p)}
\end{eqnarray*}
\begin{eqnarray*}
\frac1{\hpo\left(\mathcal{B}_{/\sigma_p},\varphi,\Obs\right)} &=& \prob(\varepsilon) + \prob(o_1) \cdot \frac1{\prob(\overline\varphi|o_1)} + \prob(o_2) \cdot \frac1{\prob(\overline\varphi|o_2)} \\
\frac1{\hpo\left(\mathcal{B}_{/\sigma_p},\varphi,\Obs\right)} &=& \frac18 + \frac78 \cdot p \cdot \frac{7f(p)}{7f(p) - 5p - 49} + \frac78 \cdot (1-p) \cdot \frac{7f(p)}{7f(p) - 30p -14} \\
\frac1{\hpo\left(\mathcal{B}_{/\sigma_p},\varphi,\Obs\right)} &=& \frac18 + \frac{49 f(p)}8 \left(\frac{p}{7f(p) - 5p - 49} + \frac{1-p}{7f(p) - 30p -14}\right)
\end{eqnarray*}

\subsection{Computation of \hponame for $\mathcal{B}_{/\sigma_m}$}
\label{app:cexmemhpomem}

We have:
\begin{mathpar}
\prob(\overline\varphi | \varepsilon) = 1
\and
\prob(\overline\varphi|o_1) = 1 - \frac3{14} \cdot \frac{15}7 = \frac{53}{98}
\and
\prob(\overline\varphi|o_2) = 1 - \frac34 \cdot \frac3{14} \cdot \frac{15}7 \cdot \frac87 = \frac{208}{343}
\end{mathpar}
Therefore:
\begin{eqnarray*}
\frac1{\hpo(\mathcal{B}_{/\sigma_m},\varphi,\Obs)} &=& \frac18 + \frac7{15} \cdot \frac{98}{53} + \frac78 \cdot \frac7{15} \cdot \frac{343}{208} \\
\frac1{\hpo(\mathcal{B}_{/\sigma_m},\varphi,\Obs)} &=& \frac{146509}{88192} \\
\hpo(\mathcal{B}_{/\sigma_m},\varphi,\Obs) &=& \frac{88192}{146509}\\
\hpo(\mathcal{B}_{/\sigma_m},\varphi,\Obs) &\simeq& 0.60
\end{eqnarray*}

\end{document}